\newif\ifhyper
\newcommand{\beq}{\begin{equation}}
\newcommand{\eeq}{\end{equation}}
\DeclarePairedDelimiter{\abs}{\lvert}{\rvert}
\DeclarePairedDelimiter{\ket}{\lvert}{\rangle}
\DeclarePairedDelimiter{\bra}{\langle}{\rvert}
\DeclarePairedDelimiterX{\braket}[2]{\langle}{\rangle}{#1\delimsize\vert #2}
\DeclarePairedDelimiterX{\ketbra}[2]{\lvert}{\rvert}{#1\delimsize\rangle\delimsize\langle #2}
\DeclarePairedDelimiterX{\expval}[3]{\langle}{\rangle}{#1\delimsize\vert #2\delimsize\vert #3}
\newcommand{\tr}{\mathop{}\mathopen{}\mathrm{tr}}
\newtheorem{lemma}{Lemma}
\newtheorem{theorem}{Theorem}
\newtheorem{observation}{Observation}
\theoremstyle{definition}
\newtheorem{definition}{Definition}
\let\ipr\braket
\begin{document}

\title{Geometric Entanglement in Topologically Ordered States}

\author{Rom\'an Or\'us}
\affiliation{Institute of Physics, Johannes Gutenberg University, 55099 Mainz, Germany}
\affiliation{Max-Planck-Institut f\"ur Quantenoptik, Hans-Kopfermann-Str. 1, 85748
Garching, Germany}
\affiliation{School of Mathematics and Physics, The University of Queensland,
QLD 4072, Australia}

\author{Tzu-Chieh Wei}
 \affiliation{C. N. Yang Institute for Theoretical Physics, State
University of New York at Stony Brook, NY 11794-3840, USA}

\author{Oliver Buerschaper}
\affiliation{Perimeter Institute for Theoretical Physics, 31 Caroline Street North, Waterloo, Ontario, Canada, N2L\,2Y5}

\author{Maarten Van den Nest}
\affiliation{Max-Planck-Institut f\"ur Quantenoptik,
Hans-Kopfermann-Str. 1, 85748 Garching, Germany}

\begin{abstract}
Here we investigate the connection between topological order and the
geometric entanglement, as measured by the logarithm of the overlap
between a given state and its closest product state of blocks. We do
this for a variety of topologically-ordered systems such as the
toric code, double semion, color code, and quantum double models. As
happens for the entanglement entropy, we find that for sufficiently
large block sizes the geometric entanglement is, up to possible
sub-leading corrections, the sum of two contributions: a bulk
contribution obeying a boundary law times the number of blocks, and
a contribution quantifying the underlying pattern of long-range
entanglement of the topologically-ordered state. This topological
contribution is also present in the case of single-spin blocks in
most cases, and constitutes an alternative characterisation of
topological order { for these quantum states} based on
a multipartite entanglement measure. In particular, we see that the
topological term for the 2D color code is twice as much as the one
for the toric code, in accordance with recent renormalization group
arguments [H. Bombin, G. Duclos-Cianci, D. Poulin, New J. Phys. 14
(2012) 073048]. Motivated by these results, we also derive a general
formalism to obtain upper- and lower-bounds to the geometric
entanglement of states with a non-Abelian group symmetry, and which
we explicitly use to analyse quantum double models. Furthermore, we
also provide an analysis of the robustness of the topological
contribution in terms of renormalization and perturbation theory
arguments, { as well as a numerical estimation for small systems}.
Some of the results in this paper rely on the ability to disentangle
single sites from the quantum state, which is always possible for
the systems that we consider. Additionally we relate our results to the
behaviour of the relative entropy of entanglement in
topologically-ordered systems, { and discuss a number of numerical
approaches based on tensor networks that could be employed to
extract this topological contribution for large  systems beyond
exactly-solvable models.}

\end{abstract}

\maketitle

\section{Introduction} Topological order (TO) \cite{to} is an
example of new physics beyond Landau's symmetry-breaking paradigm of
phase transitions. Systems exhibiting this new kind of order are
linked to concepts of the deepest physical interest, e.g.
quasiparticle anyonic statistics and topological quantum computation
\cite{topoq}. Importantly, TO finds a realisation in terms of
topological quantum field theories \cite{tqft}, which are the
low-energy limit of quantum lattice models such as the toric code
and quantum double models \cite{toric}, as well as string-net models
\cite{sn}.

A remarkable property about TO is that it influences the long-range
entanglement in the wave function of the system. For instance, as
proven for systems in two spatial dimensions (2D) \cite{Hamma,
entr}, the entanglement entropy $S$ of a region, such as $A$ in
Fig.~\ref{fig:boundaryL}.a, of boundary size $L \gg 1$ obeys the law
\beq
S = S_0 - S_{\gamma} + O(L^{-\nu}),
\eeq
where $S_0 \propto L$ is a
non-universal term (the so-called \enquote{boundary law}, or \enquote{area law}), $\nu$ some exponent, and
$S_{\gamma}$ is a universal long-distance contribution: the
topological entanglement entropy. $S_{\gamma}$ is non-zero for
systems with TO, e.g. $S_{\gamma} = 1$ for systems in the
topological phase of the toric code. In general one has that
\beq
S_\gamma = \log \left(\sqrt{\sum_a d_a^2}\right) ,
\eeq
where $\{ d_a \}$ are the so-called \emph{quantum dimensions} of the associated anyon model. More recently,
similar universal contributions have also been found for other
bipartite entanglement measures such as the mutual information and the R\'enyi entropy \cite{mut, reny}.

The main purpose of this paper is to investigate, in the context of
systems with TO, a global measure of entanglement which captures
multipartite correlations in the system: \emph{the geometric
entanglement (GE)} \cite{ge,brody}, which we shall denote by $E_G$.
This measure intuitively characterises how well an entangled state
can be approximated by a mean-field state, i.e., a product state, {
and its behaviour is in principle very different from that of
bipartite measures such as the entanglement entropy}.  Can such a
mean-field, global measure of entanglement be able to reveal any
property about topological order? As we shall see in this paper,
this is indeed the case.

Most of our study is focused on exactly-solvable models
corresponding to fixed-points of the renormalization group (RG). We
deal most prominently with the toric code model \cite{toric}, but
shall also discuss the double semion model \cite{sn}, topological
color codes \cite{Bombin}, as well as quantum double models
\cite{toric}. Each one of these models can be recast as some
particular instance of a string-net (Levin-Wen) model. Because of
this, they correspond (by definition) to RG fixed-points and,
therefore, they are representative of their respective topological
phases. Our main result is that, for all fixed-point models considered in this paper,
the GE obeys \beq \label{eqn:TGE} E_G = E_0 - E_{\gamma}, \eeq with
$E_{\gamma}$ being identical to $S_{\gamma}$ and $E_0$ some
{ bulk contribution}. The exact form of $E_0$ depends on
short-distance details such as the size of the blocks (i.e. the
number of spins) considered for the closest product state
optimisation. As we shall see, for \emph{blocks} of boundary size
$L$, $E_0$ will be proportional to $L$ and also to the number of
blocks $n_b$ (more specifically, $E_0 \propto n_b L$). We call this
behaviour, which depends on short-distance correlations, a
\enquote{boundary law} for the GE, since it corresponds to a
boundary term for each one of the blocks, i.e., $E_0/n_b \sim
L$. Moreover, $E_{\gamma}$ being identical to $S_{\gamma}$ is an
indication of the topological origin of the term $E_{\gamma}$.

Although we do not have a proof that Eq.~(\ref{eqn:TGE}) holds
generically beyond the considered fixed-point states, we shall argue
that the above law is also valid up to sub-leading corrections away
from the RG fixed points, e.g. for the toric code model under the
influence of non-relevant (and weak) perturbations such as magnetic
fields. In this case the geometric entanglement of blocks with
boundary size $L \gg 1$ seems to obey the law \beq E_G = E_0 -
E_{\gamma} + O(L^{-\nu'}), \label{toge} \eeq where again $E_0
\propto n_b L$ is a { bulk contribution}, $\nu '$ some
exponent, and $E_{\gamma}$ the topological contribution. From here
on, we shall refer to the term $E_{\gamma}$ as the \emph{topological
geometric entanglement}.

It is worth mentioning now a number of remarks about our results.
First, they provide the first evidence that topological order can
actually be read and assessed {in a number of situations} from
multipartite nature of entanglement, and, in particular, the
geometric measure of entanglement employed here. Thus, this may
provide an alternative way of characterising topological order in
quantum states. Second, our results are also the first explicit and
analytic examples of a boundary law behaviour for the GE in ground
states of 2D quantum many-body systems. Third, we will see that the
topological term for topological color codes is twice the one for
the toric code. This is consistent with the recent result that the
former model is equivalent to two copies of the latter
\cite{Bombin2}. Moreover, there may be potential
advantages in considering the GE to characterise topological order
instead of other quantities. For instance, one of them is that given
a (possibly approximate) description of the ground state by a Tensor
Network such as a Projected Entangled Pair State (PEPS) \cite{PEPS}
or a Multi-scale Entanglement Renormalization Ansatz (MERA)
\cite{MERA}, its computation is not too costly. This is because,  in
the context of Tensor Network methods, the calculation mainly
involves overlaps with product states and optimisations over these,
which can be computed quite efficiently. {In this respect, some of
the possible approaches using these methods will be discussed in the
Appendix}. Moreover, the GE can also be probed experimentally with
current technology using e.g. Nuclear Magnetic Resonance via single
spin measurement on all nuclear spins \cite{exper}, or ultra-cold
atoms in optical lattices via similar techniques to those explained
in Ref.~\cite{opticalLattice}.

The structure of the paper is as follows. First, we briefly review
the basics of the geometric entanglement in Sec.~\ref{sec:GE}. After
this, we deal in Sec.~\ref{sec:toric} with the toric code model.
Since this is the simplest model displaying non-trivial topological
order, we spend quite some time explaining many of its properties,
as well as derivations of the geometric entanglement of spins and
blocks both for the square and honeycomb lattices. In
Sec.~\ref{sec:beyond} we explain a number of important points that
need to be considered in order to generalise the results obtained
for the toric code to other models. Then, the double semion model is
analysed in Sec.~\ref{sec:semion}, and topological color codes are
addressed in Sec.~\ref{sec:color}. For color codes we derive results
using two independent methods: a bound approach (similar to the one
used for the other models), and a direct approach valid for
Calderbank-Shor-Steane (CSS) self-orthogonal codes. Depending on the
particular setting we will see that one approach may provide some
advantages over the other. After this, we develop in
Sec.~\ref{sec:general_bounds} a general formalism for the bound
approach, valid for states with non-Abelian symmetries, and apply
this formalism to quantum double models in
Sec.~\ref{sec:quantumdouble}. The situation beyond RG fixed points
is considered in Sec.~\ref{sec:beyondRG}, where the robustness under
perturbations of the topological contribution is assessed by RG and
perturbation theory arguments, {as well as by small-size numerical
calculations.} We present our conclusions and possible future
directions in Sec.~\ref{sec:conclude}. Finally, in the Appendix we
describe how our results are related to those of the relative
entropy of entanglement for topologically-ordered states, { and
discuss possible numerical algorithms based on tensor networks to
extract this topological component for non-exactly solvable models
of large size.}

\begin{figure}
 \includegraphics[width=8cm]{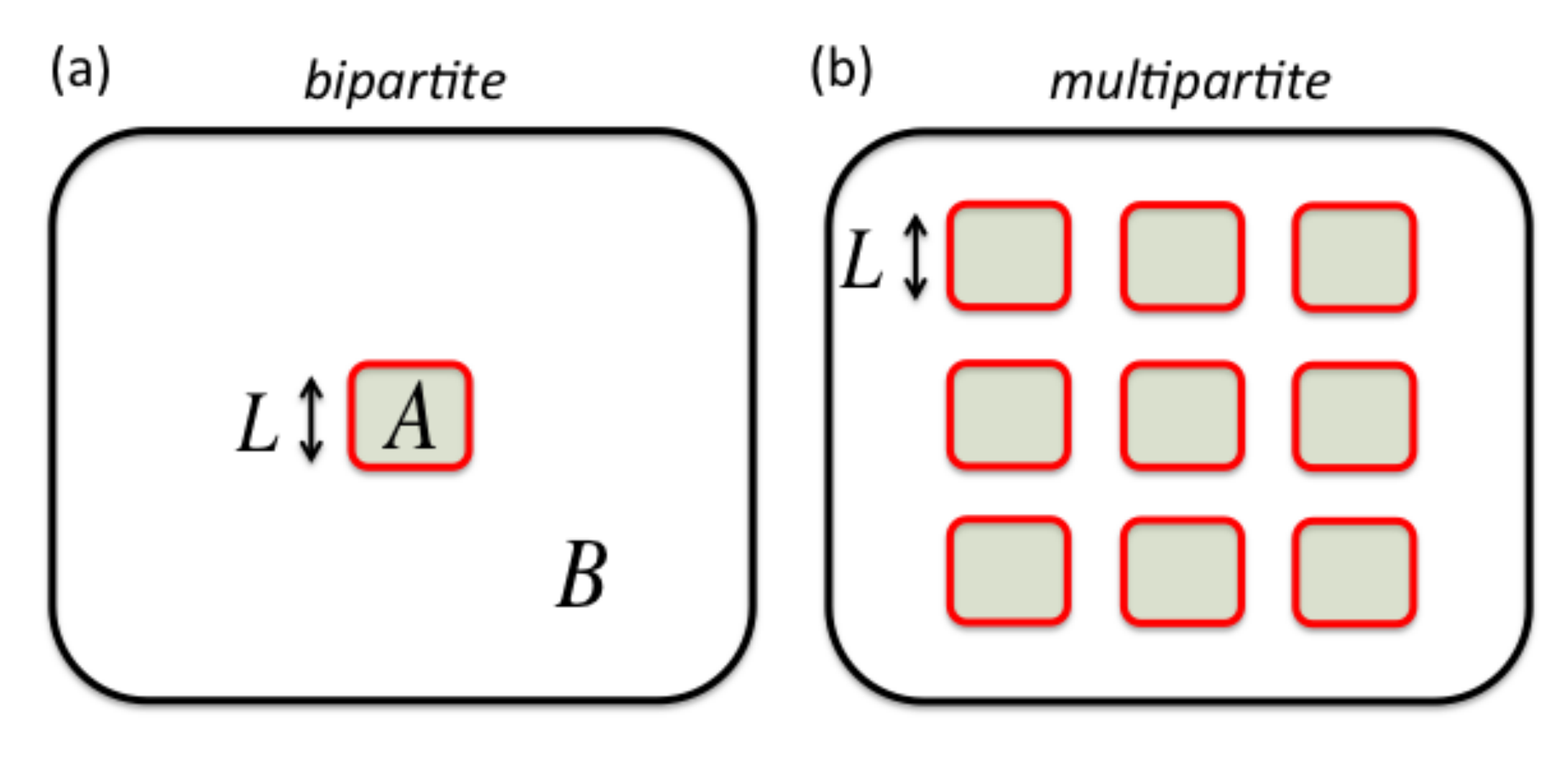}
  \caption{\label{fig:boundaryL}
  (color online) Two different types of partitions. (a) Bipartition with region $A$ and region $B$. Region $A$ has linear size $L$. This is usually the partition used in evaluations of the entanglement entropy and other bipartite entanglement measures. (b) Multipartition with many blocks, each one with linear size
  $L$. This is the partition used in the calculations of the geometric entanglement. If
  each block contains only one spin (akin to individual degree of freedom), then the product state is a
  completely separable state and the geometric entanglement is defined with respect to individual spins.
  However, each block can also contain multiple
  spins, and in this case the geometric entanglement is defined with respect to the closest product state of the individual
  blocks. In this latter scenario, we can inquire how the
  entanglement depends on the block boundary size $L$ and, in particular, whether there is something like a boundary law (i.e. linear scaling with $L$) and whether there is a topological correction.}
\end{figure}

\section{Geometric entanglement}\label{sec:GE} To begin with, let us recall some basic notions about the geometric entanglement. Consider an $m$-partite normalized
pure state $\ket{\Psi} \in \mathcal{H} = \bigotimes_{i=1}^{m}
\mathcal{H}^{[i]}$, where $\mathcal{H}^{[i]}$ is the Hilbert space
of party $i$. For instance, in a system of $n$ spins each party
could be a single spin, so that $m = n$. But each party could also be a set of
spins, either contiguous (a \emph{block} \cite{geometric2}) or not.

We wish now to determine how well state $\ket{\Psi}$ can be
approximated by an unentangled (normalized) state of the parties,
$\ket{\Phi}\equiv\mathop{\otimes}_{i=1}^{m}|\phi^{[i]}\rangle$. The
proximity of $\ket{\Psi}$ to $\ket{\Phi}$ is captured by their
overlap. The entanglement of $\ket{\Psi}$ is thus revealed by the
maximal overlap~\cite{ge},
\beq
\Lambda_{\max}({\Psi})\equiv\max_{\Phi}|\ipr{\Phi}{\Psi}|.
\eeq
The larger $\Lambda_{\max}$ is, the less entangled is $\ket{\Psi}$. Therefore, we
quantify the entanglement of $\ket{\Psi}$ via the quantity
\begin{equation}
E_G({\Psi})\equiv-\log_2\Lambda^2_{\max}(\Psi), \label{eq:Entrelate}
\end{equation}
where we have taken the base-2 logarithm, and which gives zero for unentangled states. $E_G(\Psi)$ is called \emph{geometric entanglement} (GE). This quantity has been studied in a variety of contexts, including critical systems and quantum phase transitions \cite{geometric2, geometric3}, quantification of entanglement as a resource for quantum computation \cite{resource}, local state discrimination \cite{discrim}, and has been recently measured in NMR experiments \cite{exper}.

To make a connection between the GE and other measures of entanglement, let us consider the case of just two sets of spins. In this case $E_G(\Psi)$ coincides with the so-called single-copy entanglement  between the two sets,
\begin{equation}
    \label{eq:single-copy}
    E_1(\Psi)
    =-\log_2
     \nu_1(\rho),
\end{equation}
with $\nu_1(\rho)$ the largest eigenvalue of the reduced density matrix $\rho$ of either set \cite{sc}. As is well known, this also coincides with the $\alpha$-R\'enyi entropy
\beq
S_{\alpha} = \frac{1}{(1-\alpha)} \log{ (\tr{(\rho^\alpha)})},
\eeq
for $\alpha \rightarrow \infty$.

Unlike other measures of entanglement, the GE offers a lot of flexibility to study multipartite quantum correlations in spin systems. For instance, one can choose each party to be a single spin, but one can also choose blocks of increasing boundary length $L$ \cite{geometric2} (see Fig.~\ref{fig:boundaryL}.b for an example). In fact, studying how the GE changes with $L$ provides valuable information about how close the system is to a product state under coarse-graining RG transformations.

\section{toric code Model} \label{sec:toric}
As described in the introduction, here we considered a variety of models corresponding to different topological phases. The simplest of these models is the \emph{toric code} \cite{toric}. This model is, in fact, the simplest example of a topologically non-trivial 2D system, and our results for this model provide also the grounds for the rest of the models that we shall consider (namely double semion, color codes, and quantum double models). Given its relevance for the rest of the paper, we provide now a short introduction to some of the key aspects of this model. As we shall see, this will be very useful for the forthcoming calculations.

Mathematically speaking, the toric code is the RG fixed point of the topological phase of a
$\mathbb{Z}_2$ gauge theory. The model is equivalent under local
transformations (or local \emph{moves}, or local \emph{disentanglers}) to the Levin-Wen string model on a honeycomb lattice
\cite{sn}, which is by definition a RG fixed point. A more precise derivation of this property will be provided later.

To define the toric code, we consider a lattice $\Sigma$ on a torus. In this paper we will deal with the toric code in the square and honeycomb lattices, yet the model can be defined on arbitrary lattices. Other Riemann surfaces of genus $\mathfrak{g}$ could also be considered easily without changing our conclusions. There are spin-1/2 (qubits) degrees of freedom attached to each link in lattice $\Sigma$. The model is described in terms of stars and plaquettes. A star \enquote{$s$} is a set of links sharing a common vertex. A plaquette \enquote{$p$} is an elementary face on the lattice $\Sigma$. For any star $s$ and plaquette $p$, we consider the star operators $A_s$ and plaquette operators $B_p$ defined as
\beq
A_s \equiv \prod_{j \in s} \sigma_x^{[j]} \ \ \ \ \ \ \ \ \ \ \ \ \  B_p \equiv \prod_{j \in p} \sigma_z^{[j]},
\eeq
where $\sigma_{\alpha}^{[j]}$ is the $\alpha$-th Pauli matrix at link $j$ of the lattice. Let us call respectively $n_s$, $n_p$ and $n$ the number of stars (vertices), plaquettes (faces) and links (sites) in lattice $\Sigma$. Importantly, star and plaquette operators satisfy the global constraint
\beq
\prod_s A_s = \prod_p B_p = \mathbb{I}.
\label{con}
\eeq
Therefore, there are $n_s-1$ independent star operators and $n_p-1$ independent plaquette operators.

With the definitions above, the Hamiltonian of the model reads
\beq
H_{{\rm TC}} = -\sum_s A_s -\sum_p B_p.
\eeq
This Hamiltonian is frustration free (i.e. all the terms in the above sum commute with each other), and can be diagonalized exactly as explained in Ref.~\cite{toric}. The ground level is $4$-fold degenerate on a torus ($4\mathfrak{g}$-fold for a Riemann surface of genus $\mathfrak{g}$). This degeneracy depends on the underlying topology of lattice $\Sigma$, and is already by itself a signature of topological order. Moreover, the ground level is a stabilised space of $\mathcal{G}_s$, the group of all the possible products of independent star operators, of size $|\mathcal{G}_s| = 2^{(n_s - 1)}$.

\subsection{Ground states}

In order to build  a basis for the ground level subspace, let us consider a closed curve $\gamma$ running on the links of the dual lattice $\Sigma^*$. We define its associated loop operator $W_x[\gamma] = \prod_{j \in \gamma} \sigma_x^{[j]}$, where $j \in \gamma$ are the links in $\Sigma$ crossed by the curve $\gamma$ connecting the centres of the plaquettes. It is not difficult to see that the group $\mathcal{G}_s$ can also be understood as the group generated by all the possible contractible loop operators \footnote{Notice that some sets of non-contractible loops are also elements of $\mathcal{G}_s$, such as two parallel non-contractible ones. This is because they can be built from products of contractible loops.}. Let also $\gamma_1$ and $\gamma_2$ be the two non-contractible loops on a torus, and define the associated string operators $w_{1,2} \equiv W_x[\gamma_{1,2}]$. We call $\ket{0}$ and $\ket{1}$ the eigenstates of $\sigma_z$ respectively  with $+1$ and $-1$ eigenvalue. With this assumptions, the ground level subspace then reads $\mathcal{L} = {\rm span} \{ \ket{i,j}, ~ i,j=0,1 \}$, where
\beq
\ket{i,j} = \frac{1}{\sqrt{|\mathcal{G}_s|}} \sum_{g \in \mathcal{G}_s} g ~ w_1^i ~ w_2^j ~ \ket{0}^{\otimes n}.
\label{gs}
\eeq
It is easy to check that the four vectors $\ket{i,j}$ are orthonormal and stabilised by $\mathcal{G}_s$, that is, $g \ket{i,j} = \ket{i,j} ~ \forall g \in \mathcal{G}_s$ and $\forall i,j$. These four states form a possible basis of the ground level subspace of the toric code model on a torus.

\subsection{Excited states}

Excited states of the toric code can be constructed by locally applying Pauli operators $\sigma_x, \sigma_y, \sigma_z$ on the ground states $\ket{i,j}$. Pauli operators $\sigma_z$ create pairs of deconfined charge-anticharge quasiparticle excitations,  $\sigma_x$ create pairs of deconfined flux-antiflux quasiparticles, and $\sigma_y$ creates a flux-antiflux and a charge-anticharge pairs. These operators can be applied to several sites of the lattice, thus creating a quasiparticle pattern that defines the excited state. The excited states of the model are then labeled by a set of quantum numbers, $\ket{\phi,c, i,j}$, where $\phi$ and $c$ are patterns denoting the position of flux-type and charge-type excitations respectively, and $i,j$ label the ground state $\ket{i,j}$ that was excited. In this paper we will focus on the entanglement properties of states $\ket{\phi,c, i,j}$.

\subsection{Disentangling the toric code}

\begin{figure}
\includegraphics[width=0.43\textwidth]{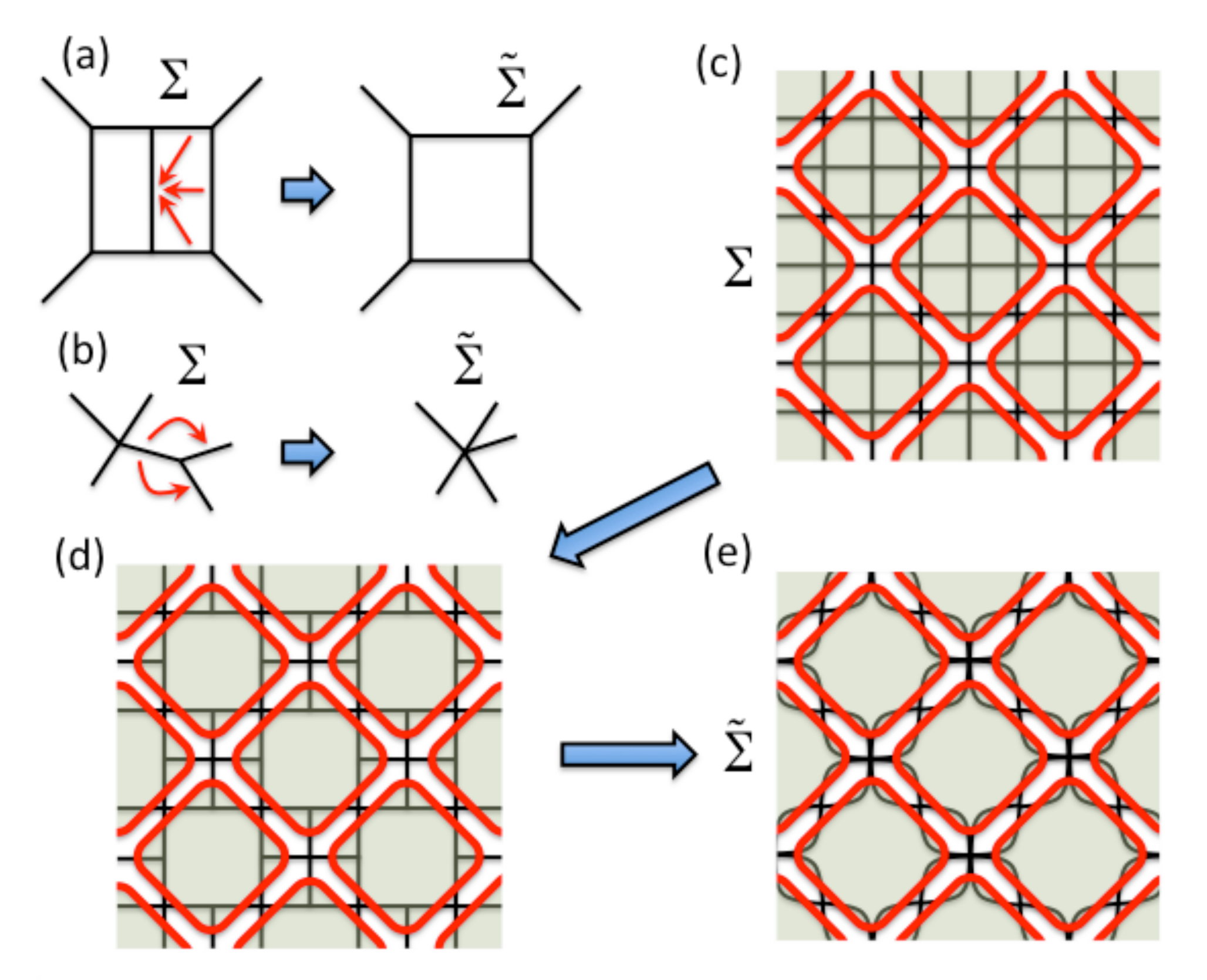}
\caption{(color online) (a,b) CNOT operations that disentangle qubits from the system, thus removing their links from the lattice. In the diagram, the arrows go from controlling to target qubits; (c) Example of a partition into blocks of a given boundary $L$, e.g. $L = 12$ here (but all our derivations work for arbitrary $L$). Qubits crossed by the boundary of a block are regarded as being inside the block; (d,e) Doing CNOT operations locally inside of each block we can remove all the stars inside of all the blocks in two steps: first, we apply CNOTs as in (a) to get (d) and, second, we apply CNOTs as in (b) to get (e). Notice that in (e) each star is made of either 4 or 8 qubits.  The procedure works similarly for other blockings.}
\label{fig:diag}
\end{figure}

A key property of the toric code model, which is fundamental for some of the derivations in this paper, is that its ground state $\ket{0,0} \equiv \ket{i=0,j=0}$ can be created by a quantum circuit that applies a sequence of Controlled-NOT (CNOT) unitary operations over an initial separable state of all the qubits \cite{cnots}. This means that it is actually possible to \emph{disentangle} qubits from the ground state of the system simply by reversing the action of these CNOTs. Moreover, these CNOT operations are \emph{local}, i.e. they do not span over delocalized sites in the lattice, but rather act over pairs of not-too-far-neighbouring qubits (e.g. nearest- and next-to-nearest-neighbours). This was the key observation that allowed to build an exact MERA representation of the ground states $\ket{i,j}$ of the model \cite{tcmera}. The two fundamental disentangling moves are represented in Fig~\ref{fig:diag}.a-b, and leave the overall quantum state as a product state of the disentangled qubits with the rest of the system. What is more, the rest of the system is left in the ground state of a toric code model on a deformed lattice $\widetilde{\Sigma}$, where $\widetilde{\Sigma}$ is obtained from $\Sigma$ by removing the links that correspond to the disentangled qubits. This property turns out to be of great importance for some of our derivations.

\subsection{GE of the toric code} \label{sec:GEtoric}
We are now in position to study the GE of the toric code model.
Given that this is a paradigmatic model of topological order, we
will perform a detailed analysis. First, we will provide a number of
lemmas and theorems that will bring us towards an expression for the
GE of the toric code in the square lattice, both for spins and
blocks,  in a particular basis. { Second, we will make a
similar study to the case of the honeycomb lattice. There we will
see that different choices of ground states may give rise to
different expressions for the GE of spins with the block size being
{\it unity}, yet the long-wavelength properties are the same and
equal to those for the square lattice after  {\it blocking of
spins\/} are considered. This is expected, as both lattices can be
mapped between each other by means of CNOT disentangling operations
(as shown in Fig.~\ref{figMap}), and therefore they should
correspond to the same RG fixed point. Here, we would like to
emphasize that blocking is essential as the topological contribution
is a long-wavelength property,  and, thus, that the behaviour of the
GE for sufficiently large block sizes is the same independently of
which of the two lattices we choose. (Without blocking, this
conclusion cannot be reached.)}

\begin{figure}
\includegraphics[width=0.4\textwidth]{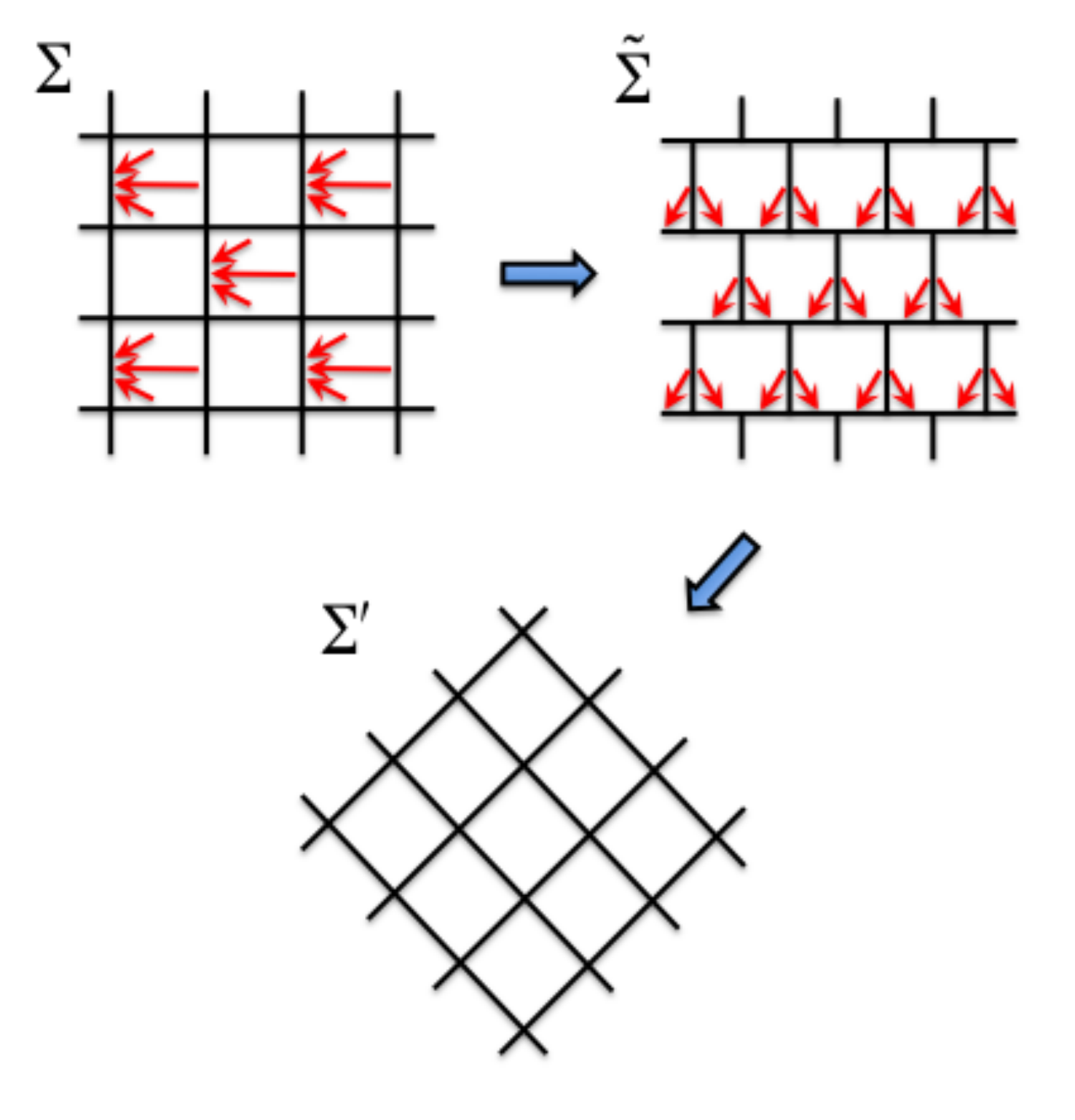}
\caption{(color online) Mapping of the toric code from a square
lattice $\Sigma$ to a honeycomb lattice $\widetilde{\Sigma}$
(Levin-Wen string model), and back to a square lattice $\Sigma'$.
Each red arrow represents a CNOT operation that disentangles qubits
from the system, as explained in Fig.~\ref{fig:diag}. The arrows go
from controlling to target qubits. } \label{figMap}
\end{figure}
\subsubsection{A couple of Lemmas}
Let us start by considering two Lemmas that will be quite useful in our derivations:

\begin{figure}
\includegraphics[width=0.43\textwidth]{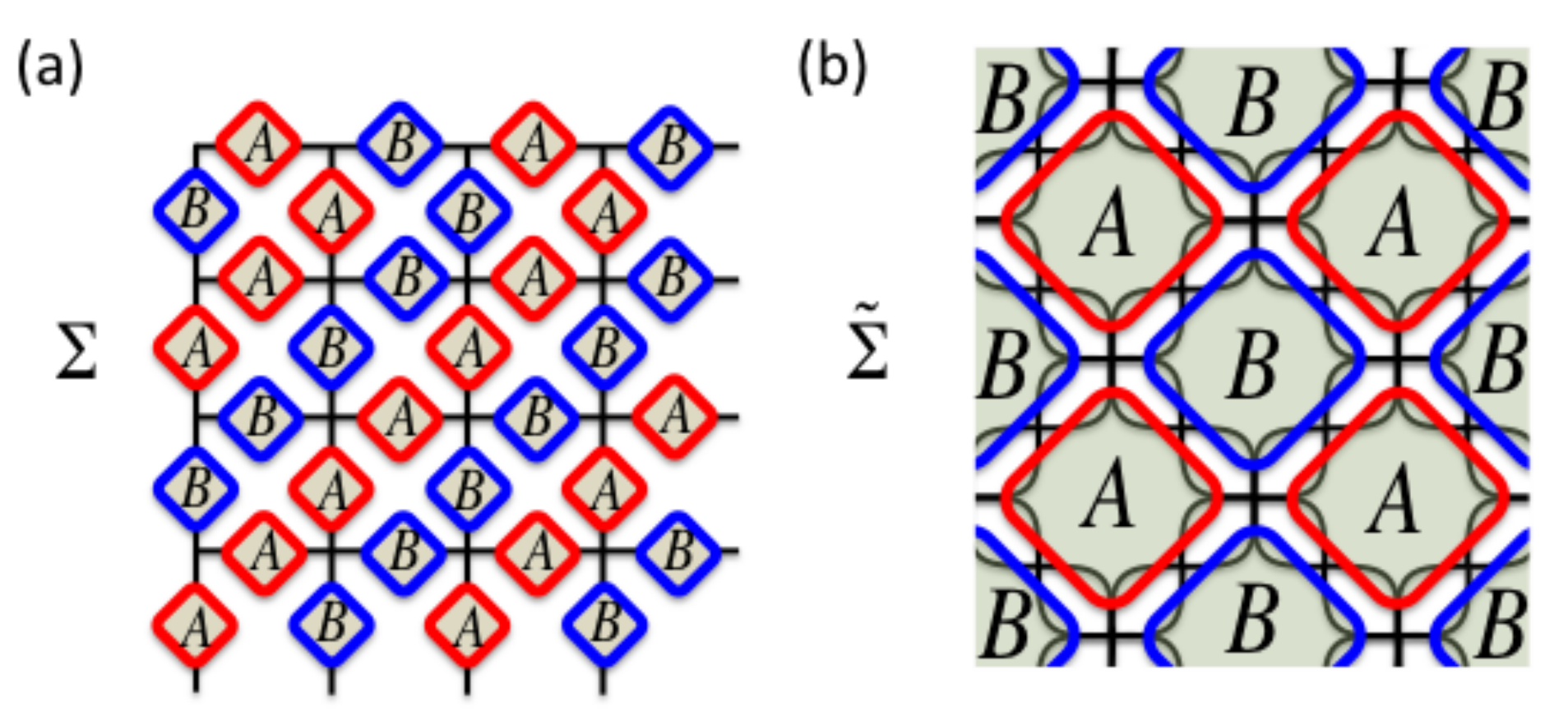}
\caption{(color online) Bipartite lattices of spins and blocks: (a) Sets of spins $A$ and $B$ for Theorem~\ref{thm:tc_spins} and a $4 \times 4$ torus; (b) Sets of blocks $A$ and $B$ for Theorem~\ref{thm:tc_blocks}.}
\label{part}
\end{figure}

\begin{lemma}
    \label{lem:tc_spins}
    All the eigenstates $\ket{\phi, c, i, j}$ of the
    toric code Hamiltonian have the same entanglement properties.
\end{lemma}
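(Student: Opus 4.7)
The plan is to show that every excited state $\ket{\phi,c,i,j}$ is related to the reference ground state $\ket{0,0}$ by a unitary that is \emph{local with respect to the partition used to define the GE} (either single spins or blocks), and then invoke the standard fact that any entanglement measure based on maximal overlap with product states is invariant under such local unitaries.

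First, I would make explicit the structure of the states in question. From the construction of the ground-state basis in Eq.~(\ref{gs}), we have $\ket{i,j}=w_1^i\,w_2^j\,\ket{0,0}$ up to elements of $\mathcal{G}_s$ that stabilize $\ket{0,0}$, so $\ket{i,j}$ differs from $\ket{0,0}$ only by the non-contractible string operators $w_{1,2}$, each of which is a tensor product of single-qubit Pauli $\sigma_x$ operators along a non-contractible loop. Next, as recalled just above the lemma, the excited states are obtained from the ground states by applying single-site Pauli operators $\sigma_x$, $\sigma_y$, $\sigma_z$ corresponding to the quasiparticle patterns $\phi$ and $c$. Therefore there exists a unitary
\beq
U_{\phi,c,i,j} \;=\; \bigotimes_{k=1}^{n} u_k^{[\phi,c,i,j]},
\eeq
with each $u_k^{[\phi,c,i,j]}$ a single-qubit unitary (a product of Pauli matrices), such that $\ket{\phi,c,i,j}=U_{\phi,c,i,j}\ket{0,0}$.

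Second, I would invoke invariance of the GE under local unitaries. For any partition of the lattice into parties (single spins, or blocks of spins), a tensor product of single-site unitaries is in particular a tensor product of unitaries acting within each party. Writing the partition as $U_{\phi,c,i,j}=\bigotimes_{I} V_I$ where $V_I$ acts on party $I$, one has for every product state $\ket{\Phi}=\bigotimes_{I}\ket{\phi^{[I]}}$ that $U_{\phi,c,i,j}^\dagger\ket{\Phi}=\bigotimes_I V_I^\dagger\ket{\phi^{[I]}}$ is again a product state of the same parties. Consequently
\beq
\Lambda_{\max}(\Psi_{\phi,c,i,j}) = \Lambda_{\max}(\Psi_{0,0}),
\eeq
and hence $E_G$ coincides on all $\ket{\phi,c,i,j}$. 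The same argument applies to any entanglement monotone that is invariant under local unitaries acting within each party (e.g.\ entanglement entropy, single-copy entanglement, R\'enyi entropies of any reduced density matrix associated with the partition), so ``entanglement properties'' can be understood in this broad sense.

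The only real subtlety, and the step deserving most care, is verifying that the operators relating the states are local with respect to the chosen partition. Since the excitation operators are genuinely single-site Paulis and the string operators $w_{1,2}$ are products of single-site Paulis, this is automatic: a single-site operator lies inside whichever party its qubit belongs to, regardless of whether the partition is per-spin or per-block. Thus no part of the unitary ever crosses a party boundary, and the local-unitary invariance argument goes through without modification. This will let us restrict all subsequent GE computations in the paper to the particular ground state $\ket{0,0}$ without any loss of generality.
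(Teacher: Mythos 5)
Your proposal is correct and follows essentially the same route as the paper's proof: both observe that the ground states $\ket{i,j}$ and the excited states $\ket{\phi,c,i,j}$ are obtained from $\ket{0,0}$ by tensor products of single-site Pauli operators, and that such local unitaries (which never cross a party boundary, whether the parties are spins or blocks) leave the maximal overlap with product states, and hence the GE, invariant. Your version merely spells out the local-unitary invariance of $\Lambda_{\max}$ more explicitly than the paper does, which is a welcome clarification but not a different argument.
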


\begin{proof}
Let us choose a reference state from the ground level basis, e.g. $\ket{0,0}$. First of all, notice that the four basis states $\ket{i,j}$ in the ground level are related to $\ket{0,0}$ by local unitary operators,  since the string operators $w_1$ and $w_2$ are tensor products of $2 \times 2$ identity operators and Pauli-$x$ matrices. Since these operators act locally on each spin, they do not change the entanglement of the quantum state. Moreover, the excited states $\ket{\phi,c,i,j}$ are created locally by applying tensor products of $2 \times 2$ identity and Pauli-$x$ and $z$ operators to $\ket{i,j}$. The states resulting from these operations have then the entanglement properties of state $\ket{i,j}$, which in turn are the same as those of the reference state $\ket{0,0}$. Thus, all states $\ket{\phi, c, i, j}$ have the same entanglement properties.
\end{proof}

\begin{lemma}
    \label{lem:tc_blocks}
Consider an arbitrary set of blocks of qubits in
lattice $\Sigma$, where each block is regarded as an individual
party. Then, the entanglement properties of the ground state
$\ket{0,0}$ are the same as those of state $\ket{\widetilde{0,0}}$,
the ground state of the toric code model in the deformed lattice
$\widetilde{\Sigma}$ obtained after disentangling as many qubits as
possible using CNOTs locally inside of each block.
\end{lemma}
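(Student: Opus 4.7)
The strategy is to reduce $\ket{0,0}$ to $\ket{\widetilde{0,0}}$ by a block-local unitary and then invoke the invariance of all multipartite entanglement quantities (in particular the overlap with product states of blocks, and hence the GE) under local unitaries acting within each party. Since the parties here are the blocks, it suffices to exhibit a unitary $U = \bigotimes_b U_b$, with each $U_b$ supported inside block $b$, such that $U\ket{0,0} = \ket{\xi}\otimes\ket{\widetilde{0,0}}$, where $\ket{\xi}$ is a pure product state of the disentangled qubits, each of which lives inside some block.

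To build $U_b$, I would invoke the disentangling moves of Fig.~\ref{fig:diag}.a--b, already used to produce the MERA representation of the toric code. Each such move is supported on a single star, together with one or two neighbouring qubits, and turns the ground state into a product of a fixed single-qubit state with the ground state of the toric code on a lattice in which one link has been removed. Concatenating only those moves whose support is entirely contained in a single block (as in Fig.~\ref{fig:diag}.c--e, where qubits on the block boundary are assigned to the block by convention) yields the claimed block-local unitary $U_b$. Applying the full $U = \bigotimes_b U_b$ to $\ket{0,0}$ iteratively factors out one qubit at a time and leaves, on the remaining links, the ground state $\ket{\widetilde{0,0}}$ of the toric code on the deformed lattice $\widetilde{\Sigma}$.

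Finally, since $U$ is local with respect to the block partition, $\Lambda_{\max}(U\ket{0,0}) = \Lambda_{\max}(\ket{0,0})$, and since the disentangled factor $\ket{\xi}$ is a tensor product of within-block single-qubit states, the maximum overlap with a product state of blocks of $\ket{\xi}\otimes\ket{\widetilde{0,0}}$ coincides with that of $\ket{\widetilde{0,0}}$ taken with respect to the induced partition on $\widetilde{\Sigma}$. The same reasoning goes through for any other multipartite entanglement measure that is invariant under local unitaries and stable under tensoring with local pure states, establishing the lemma.

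The only nontrivial point is the first step: one must check that the CNOT disentangling moves from Fig.~\ref{fig:diag}.a--b can always be applied in a maximal way without their support ever spilling across a block boundary. This follows because each move is supported on a star together with at most one adjacent qubit, and the convention that boundary qubits belong to the block they bound ensures that if a star lies inside a block so does all of the support of the corresponding move; the remaining stars, straddling block boundaries, are simply left alone, which is why $\ket{\widetilde{0,0}}$ still carries nontrivial structure (e.g.\ 4- or 8-qubit stars as in Fig.~\ref{fig:diag}.e).
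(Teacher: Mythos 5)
Your proof is correct and follows essentially the same route as the paper's: apply the CNOT disentangling moves of Fig.~\ref{fig:diag} restricted to act within each block, obtain $\ket{e_1}\otimes\cdots\otimes\ket{e_p}\otimes\ket{\widetilde{0,0}}$, and invoke invariance of block-partition entanglement under block-local unitaries together with the irrelevance of the factored-out single-qubit states. You simply make explicit two points the paper leaves implicit (the tensor-product structure $U=\bigotimes_b U_b$ and the check that the moves' supports never straddle a block boundary), which is a welcome but not substantively different elaboration.
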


\begin{proof}
Given the ground state $\ket{0,0}$ in $\Sigma$, we start by doing CNOT disentangling operations \emph{locally inside of each block} in order to disentangle as many qubits as possible. As a result, state $\ket{0,0}$ transforms into state $\ket{0,0}_{{\rm disentangled}} \equiv \ket{e_1} \otimes \cdots \otimes \ket{e_p} \otimes \ket{\widetilde{0,0}}$, where  $\ket{e_k}$ is the quantum state for the $k$-th disentangled qubit, $k = 1,\ldots, p$ ($p$ is the number of disentangled qubits), and $\ket{\widetilde{0,0}}$ is the $i=0,j=0$ ground state of a toric code Hamiltonian in the deformed lattice $\widetilde{\Sigma}$. Since all CNOT operations are done locally inside of each block, states $\ket{0,0}$ and $\ket{0,0}_{{\rm disentangled}}$ have the same entanglement content if we regard the blocks as individual parties. What is more, the entanglement in the system is entirely due to the qubits in $\ket{\widetilde{0,0}}$, which proves the lemma.
\end{proof}

\subsubsection{Square lattice}
We are now in position to consider the entanglement properties of the four ground level states $\ket{i,j}$ as well as the excited states $\ket{\phi,c,i,j}$. Our main result here is that, for these states, the GE of blocks consists of a boundary term plus a topological contribution. Keeping the above two lemmas in mind, we now present the following theorem about the geometric entanglement of spins in the square-lattice toric code:

\begin{theorem}
    \label{thm:tc_spins}
For the toric code Hamiltonian in a square lattice $\Sigma$, the four ground states $\ket{i,j}$ for $i,j = 0,1$ and also the excited states $\ket{\phi, c, i, j}$ have all the same GE of spins and is given by
\beq
E_G = n_s - 1,
\label{getc}
\eeq
where $n_s$ is the number of stars in $\Sigma$.
\end{theorem}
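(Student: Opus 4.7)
The plan is to reduce, via Lemma~\ref{lem:tc_spins}, to computing $\Lambda_{\max}$ for the reference ground state $\ket{0,0}$, and then to match an explicit product state against a stabilizer-based upper bound. By Lemma~\ref{lem:tc_spins} every state $\ket{\phi,c,i,j}$ shares the GE of $\ket{0,0}$, since the former are obtained from the latter by tensor products of single-qubit Paulis, which permute normalized single-spin product states bijectively and hence leave $\Lambda_{\max}$ invariant. It therefore suffices to show $\Lambda_{\max}^2(\ket{0,0}) = 1/|\mathcal{G}_s| = 2^{1-n_s}$. For the lower bound I would take $\ket{\Phi} = \ket{0}^{\otimes n}$: in the explicit form (\ref{gs}), every nontrivial $g \in \mathcal{G}_s$ is a non-empty product of $\sigma_x$'s and therefore sends $\ket{0}^{\otimes n}$ to a computational-basis state orthogonal to it, so only $g = \mathbb{I}$ contributes to $\ipr{0^{\otimes n}}{0,0}$, giving $|\ipr{0^{\otimes n}}{0,0}|^2 = 1/|\mathcal{G}_s|$.

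The matching upper bound $\Lambda_{\max}^2 \leq 1/|\mathcal{G}_s|$ is the heart of the proof. Parametrize $\ket{\phi_i} = \alpha_i \ket{0} + \beta_i \ket{1}$ and let $f_i(g) \in \{0,1\}$ record whether $g$ flips qubit $i$; the map $g \mapsto (f_i(g))_i$ identifies $\mathcal{G}_s$ with the cut space $\mathcal{C}^\perp \subset \mathbb{F}_2^n$ of the lattice graph, a linear subspace of dimension $n_s - 1$ dual to the cycle space $\mathcal{C}$. Then
\[
|\ipr{\Phi}{0,0}|^2 = \frac{1}{|\mathcal{G}_s|} \Big| \sum_{g \in \mathcal{G}_s} \prod_i \alpha_i^{1 - f_i(g)} \beta_i^{f_i(g)} \Big|^2.
\]
Expanding the modulus squared and applying a discrete Fourier (MacWilliams) transform on $\mathbb{F}_2^n$ dualizes the resulting double sum over $\mathcal{C}^\perp \times \mathcal{C}^\perp$ into one over $\mathcal{C} \times \mathcal{C}$, whose per-link $2 \times 2$ transfer matrix is forced to have rank one by the purity of each $\ket{\phi_i}$. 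The double sum therefore factorizes into the squared modulus of a single weight-enumerator-like sum over $\mathcal{C}$, and the upper bound reduces to a combinatorial inequality on $\mathcal{C}$.

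The main technical hurdle is to verify this inequality uniformly over the Bloch-sphere parameters $(\langle X\rangle_i, \langle Y\rangle_i, \langle Z\rangle_i)$. A naive Cauchy--Schwarz bound on the reduced sum is saturated only by the extremal choice $\ket{0}^{\otimes n}$ and is too loose elsewhere, so the argument must exploit the CSS structure of $\ket{0,0}$---its stabilizer factorizes into a pure-$X$ subgroup $\mathcal{G}_s$ and a pure-$Z$ subgroup generated by plaquettes and the two logical $Z$ operators---to control the sum in intermediate regimes. Once that weight-enumerator inequality is established, the matching bound $\Lambda_{\max}^2 \leq 1/|\mathcal{G}_s|$ and hence the theorem follow.
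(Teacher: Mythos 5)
Your reduction via Lemma~\ref{lem:tc_spins} and your lower bound $\Lambda_{\max}\ge|\mathcal{G}_s|^{-1/2}$ from the product state $\ket{0}^{\otimes n}$ both match the paper. The gap is in the other direction: your proof of $\Lambda_{\max}^2\le 1/|\mathcal{G}_s|$ is not actually a proof. After the Fourier/MacWilliams manipulation you arrive at ``a combinatorial inequality on $\mathcal{C}$'' which you explicitly do not establish, conceding that the naive Cauchy--Schwarz estimate is too loose away from the extremal point and that some unspecified use of ``the CSS structure'' must close the argument. That is precisely the hard part of the theorem, and it is left open. Worse, the specific structure you hope to exploit is unavailable here: the weight-enumerator route does work in this paper, but only for CSS states whose underlying classical code is \emph{self-orthogonal} (Lemma~\ref{thm_overlap_expectation} and Theorem~\ref{thm_E_g_selforthogonal}, used for color codes), and the paper explicitly notes that toric code ground states are CSS states that are \emph{not} of self-orthogonal type --- adjacent stars on the square lattice share exactly one edge, so the cut space is not self-orthogonal. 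So the rescue you are counting on is exactly the mechanism that fails for this model.

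The paper avoids the uniform optimisation over all $n$-fold product states entirely. It relaxes to a bipartite problem: $\Lambda_{\max}\le\Lambda_{\max}^{[A:B]}$ for any grouping of the spins into two sets, and $\bigl(\Lambda_{\max}^{[A:B]}\bigr)^2$ equals the largest eigenvalue of $\rho_A$. Choosing the bipartition of Fig.~\ref{part}(a) so that no nontrivial element of $\mathcal{G}_s$ acts trivially on either $A$ or $B$, the Hamma--Ionicioiu--Zanardi identity $\rho_A^2=\bigl(|\mathcal{G}_s(A)|\,|\mathcal{G}_s(B)|/|\mathcal{G}_s|\bigr)\rho_A$ forces $\rho_A$ to be flat with nonzero eigenvalue $|\mathcal{G}_s|^{-1}$, giving $\Lambda_{\max}^2\le|\mathcal{G}_s|^{-1}$ in two lines. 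If you want to salvage your approach, either import that bipartition argument for the upper bound on $\Lambda_{\max}$, or supply a genuine proof of your weight-enumerator inequality --- but be aware the latter cannot lean on self-orthogonality.
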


\begin{proof}
Using Lemma~\ref{lem:tc_spins} we can restrict our attention to the GE for the state $\ket{0,0}$. In this setting, we call \emph{computational basis} the basis of the many-body Hilbert space constructed from the tensor products of the $\{ \ket{0}, \ket{1} \}$ local basis for every spin, which is an example of \emph{product basis} for the spins.

Our proof follows from upper and lower bounding the quantity $E_G(0,0)$ in Eq.(\ref{eq:Entrelate}) for the ground state $\ket{0,0}$. First, from the expression in Eq.(\ref{gs}) for the ground states we immediately have that the absolute value of the overlap with any state of the computational basis is $|\mathcal{G}_s|^{-1/2}$, and therefore $\Lambda_{\max}\ge |\mathcal{G}_s|^{-1/2}$. From here, we get $E_G(0,0) \le  \log_2|\mathcal{G}_s|$,
which gives an upper bound.

Next, to derive a lower bound
we use the fact that if we group the $n$ spins into two sets $A$ and $B$, then $\Lambda_{\max}\le \Lambda_{\max}^{[A:B]}$,
where $[A:B]$ means that a partition of the system with respect to the two sets $A$ and $B$ is considered. Thus, we have that  $E_G(0,0)\ge E_G(0,0)^{[A:B]}$.

The trick to find a useful lower bound is to find an appropriate choice of sets $A$ and $B$. In our case, we consider e.g. the bipartition shown in Fig.~\ref{part}.a for even $\times$ even lattices (other cases can be considered similarly). Then, we use a Lemma by Hamma, Ionicioiu and Zanardi in Ref.~\cite{Hamma},
that the reduced density matrix $\rho_A$ of $|0,0\rangle$ for subsystem $A$ satisfies $\rho_A^2= (|\mathcal{G}_s(A)| |\mathcal{G}_s(B)| / |\mathcal{G}_s| )\rho_A$,
where $\mathcal{G}_s(A/B)$ is the subgroup of $\mathcal{G}_s$ acting trivially on subsystem
$A/B$. Importantly, for $A$ and $B$ chosen as explained above, it happens that
$\mathcal{G}_s(A)$ and $\mathcal{G}_s(B)$ are \emph{trivial groups consisting of only the identity element}. With this in mind, we see that the reduced density matrix $\rho_A$ has eigenvalues either zero or
$|\mathcal{G}_s|^{-1}$, which is $|\mathcal{G}_s|$-fold degenerate. This means that
$({\Lambda}_{\max}^{[A:B]})^2=|\mathcal{G}_s|^{-1}$ and hence $E_G(0,0)\ge \log_2|\mathcal{G}_s|$.

Combining the two bounds, we get $E_G(0,0)= \log_2|\mathcal{G}_s|=n_s-1$, and from here Eq.(\ref{getc}) for the GE for spins follows immediately.
\end{proof}

Importantly, the techniques used in Theorem~\ref{thm:tc_spins} can also be used to deal with blocks of spins whenever the blocks form a bipartite lattice. In general, one may first disentangle as many qubits as possible inside the blocks. Then the remaining qubits are left in an entangled state where the GE is equal to the number of remaining independent star operators, which amounts to a boundary law term for each block (possibly with a sub-leading correction depending on how the blocks are chosen) plus a topological term. As an example of this, let us present the following theorem:

\begin{theorem}
    \label{thm:tc_blocks}
Given a partition of the square lattice $\Sigma$ into $n_b$ blocks of boundary size $L$ as indicated in Fig.~\ref{fig:diag}.c  (where $L$ is measured in number of qubits), then the four ground states $\ket{i,j}$ for $i,j = 0,1$ and also the excited states $\ket{\phi, c, i, j}$ of the toric code Hamiltonian have all the same GE of blocks and is given by
\beq
E_G = \frac{n_bL}{4} - 1.
\label{getcL}
\eeq
\end{theorem}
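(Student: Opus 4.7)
The plan is to mirror the structure of the proof of Theorem~\ref{thm:tc_spins}, but applied to a reduced ground state supplied by Lemma~\ref{lem:tc_blocks}. By Lemma~\ref{lem:tc_spins}, it suffices to work with the reference ground state $\ket{0,0}$, and by Lemma~\ref{lem:tc_blocks} its GE of blocks coincides with that of $\ket{\widetilde{0,0}}$, the toric code ground state on the deformed lattice $\widetilde{\Sigma}$ produced by the two rounds of local-in-block CNOT disentanglers depicted in Fig.~\ref{fig:diag}.d--e. Since $\ket{\widetilde{0,0}}$ retains the symmetric-superposition form of Eq.~(\ref{gs}) on $\widetilde{\Sigma}$---with star group $\widetilde{\mathcal{G}}_s$ of order $2^{\widetilde{n}_s - 1}$---the bounding argument of Theorem~\ref{thm:tc_spins} can be recycled almost verbatim, provided we regard the blocks (with their reduced qubit content) as the individual parties.

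I would next establish $E_G = \widetilde{n}_s - 1$ by reproducing the same pair of bounds. The upper bound is immediate: any tensor product of computational-basis states on the surviving qubits is in particular a product state across the blocks, it overlaps $\ket{\widetilde{0,0}}$ with modulus $|\widetilde{\mathcal{G}}_s|^{-1/2}$, and thus gives $E_G \le \log_2 |\widetilde{\mathcal{G}}_s| = \widetilde{n}_s - 1$. For the matching lower bound, I would split the set of blocks into two subsets $\widetilde{A}$ and $\widetilde{B}$ in the pattern of Fig.~\ref{part}.b, chosen so that every star of $\widetilde{\Sigma}$ has support in both subsets. With such a choice, $\widetilde{\mathcal{G}}_s(\widetilde{A})$ and $\widetilde{\mathcal{G}}_s(\widetilde{B})$ are both trivial, so the Hamma--Ionicioiu--Zanardi identity for $\rho_{\widetilde{A}}^2$ yields reduced-density-matrix eigenvalues equal to $|\widetilde{\mathcal{G}}_s|^{-1}$ or zero, whence $E_G \ge E_G^{[\widetilde{A}:\widetilde{B}]} = \widetilde{n}_s - 1$.

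The remaining task is a combinatorial tally of $\widetilde{n}_s$ for the blocking of Fig.~\ref{fig:diag}.c. After the disentangling in Fig.~\ref{fig:diag}.d--e, only qubits near each block's boundary survive; in the bulk every block of boundary length $L$ retains $L/2$ qubits, and via the toric code edge-vertex relation $\widetilde{n}_s = \widetilde{n}/2$ on the torus this amounts to $L/4$ independent stars of $\widetilde{\Sigma}$ per block. Summing over the $n_b$ blocks gives $\widetilde{n}_s = n_b L/4$, and Eq.~(\ref{getcL}) follows.

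I expect the main obstacle to be twofold. First, one must exhibit a block bipartition $\widetilde{A}, \widetilde{B}$ in which no deformed star is entirely contained on one side: since the 8-qubit ``edge'' stars of $\widetilde{\Sigma}$ straddle pairs of adjacent blocks, this demands a careful checkerboard-type assignment, and one must verify that $\widetilde{\mathcal{G}}_s(\widetilde{A})$ and $\widetilde{\mathcal{G}}_s(\widetilde{B})$ really are trivial. Second, the counting of $\widetilde{n}_s$ requires careful book-keeping of the mixture of 4-qubit corner stars and 8-qubit edge stars in Fig.~\ref{fig:diag}.e, avoiding double-counting at block interfaces.
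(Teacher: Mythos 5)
Your proposal follows the paper's own proof essentially step for step: reduce to $\ket{\widetilde{0,0}}$ on the deformed lattice via Lemma~\ref{lem:tc_spins} and Lemma~\ref{lem:tc_blocks}, count the surviving stars as $\widetilde{n}_s = n_bL/4$, and recycle the upper/lower bound argument of Theorem~\ref{thm:tc_spins} with the block bipartition of Fig.~\ref{part}.b. The only cosmetic difference is that you justify the star count through a surviving-qubit/edge--vertex tally whereas the paper counts the inter-block stars directly; both routes give the same result.
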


\begin{proof}
First, notice that Lemma~\ref{lem:tc_spins} and Lemma~\ref{lem:tc_blocks} imply that we can entirely focus on the ground state $\ket{\widetilde{0,0}}$ of a toric code model in the deformed lattice $\widetilde{\Sigma}$ from Fig.~\ref{fig:diag}.e. As shown in Fig.~\ref{fig:diag}.d-e, it is always possible to remove \emph{all} the stars inside of each block  in $\Sigma$ just by doing CNOTs locally inside of each block. As a result, the stars in the deformed lattice $\widetilde{\Sigma}$ correspond to those in $\Sigma$ that lay among the blocks. It is easy to see that, for $n_b$ blocks of boundary $L$, there are $\widetilde{n}_s = n_b L/4$ of such stars.

The rest of the proof follows as in Theorem~\ref{thm:tc_spins}, by upper and lower bounding $E_G(0,0)$ for the partition with respect to blocks. For the upper bound, we use the fact that ${\Lambda}_{\max}^{[{\rm blocks}]} \ge {\Lambda}_{\max}^{[{\rm spins}]} \ge |\widetilde{\mathcal{G}}_s|^{-1/2}$, where the first inequality again used the property that if larger blocks are considered then the maximum overlap is also larger, and $\widetilde{\mathcal{G}}_s$ is the corresponding group of contractible loop operators on $\widetilde{\Sigma}$. From here $E_G(0,0) \le  \log_2|\widetilde{\mathcal{G}}_s|$ follows. For the lower bound, we divide the system into two sets $A$ and $B$ of blocks as indicated in Fig.\ref{part}.b. Following a similar reasoning as in Theorem~\ref{thm:tc_spins}, it is possible to see again that no element $g \in \widetilde{\mathcal{G}}_s$ will act trivially on $A$ or $B$ except for the identity element. From this point, the rest of the proof is simply equivalent to the proof for Theorem~\ref{thm:tc_spins}.
\end{proof}

\subsubsection{Interpretation of the previous Theorems}
Let us now discuss Eqs.~(\ref{getc}) and (\ref{getcL}) in detail. First, let us remind that a variety of works have shown the existence of a \enquote{boundary law} for the entanglement entropy of many 2D systems \cite{boun}, including models with TO \cite{Hamma, entr}. It is then remarkable that, according to the first term of these equations, the entanglement per block obeys also a boundary law, i.e: it is proportional to the size $L$ of the boundary of the block. To the best of our knowledge, these results are the first example of a boundary law behaviour for a multipartite (rather than bipartite) measure of entanglement in 2D.

However, the second term in Eqs.~(\ref{getc}) and (\ref{getcL}) is far more intriguing and important. Its existence is caused by the global constraint from Eq.~(\ref{con}) on star operators which, in turn, allow for the topological degeneracy of the ground state. Thus, this term is \emph{of topological nature, and quantifies the pattern of long-range entanglement} present in topologically ordered states.

 In order to clarify further the meaning of the topological term, let us consider the bipartite case of a block of spins of boundary size $L$ and its environment (i.e. the rest of spins). In the bipartite case, the geometric entanglement $E_G$ coincides with the single-copy entanglement $E_1 = -\log \nu_1(\rho)$, with $\nu_1(\rho)$ the largest eigenvalue of the reduced density matrix $\rho$ of the block. Since this density matrix is proportional to a projector (see Ref.~\cite{Hamma} and also Ref.~\cite{reny}), we have that $E_1 = S$, with $S$ the entanglement entropy of the bipartition. Thus, for this case we have the chain of equalities
 \beq
E_G = E_1 = S.
\eeq
 We also know that for a system with TO, the entanglement entropy satisfies $S = S_0 - S_{\gamma} + O(L^{-\nu})$, where $S_0$ is some boundary law term and $S_{\gamma}$ is the topological entropy. As explained in Ref.\cite{Hamma}, for the toric code model the global constraints in Eq.(\ref{con}) imply that $S_{\gamma}=1$. Thus, with the convention from Eq.(\ref{toge}), we have that the \emph{topological geometric entanglement} is given, in this case, by
 \beq
 E_{\gamma} = S_{\gamma}.
 \eeq
The topological origin of $E_{\gamma}$ is thus clear. Notoriously, this topological contribution is maintained when promoting the entanglement measure from the bipartite to the multipartite scenario.

\subsubsection{Honeycomb lattice}

For the honeycomb lattice, most of the derivations are equivalent to
the square lattice, yet there is a small difference: while the
square lattice is self-dual, the honeycomb lattice is not. This
means, among other things, that the toric code in the honeycomb
lattice is not self-dual with respect to a change of star- and
plaquette-operators (which is actually the case for the square
lattice). In practice, this implies that different choices of ground
state basis may have different values of the GE of spins. However,
this is only a short-distance property, since  the GE of blocks has
the same topological contribution as for the square lattice.
In order to illustrate this more explicitly, we provide calculations
for two specific ground states.

\vspace{10pt}

\underline{\emph{The $|0,0\rangle$ ground state.-}}
We first consider the $\ket{0,0}$ ground state obtained as
\begin{equation}
|0,0\rangle= \frac{1}{\sqrt{|\mathcal{G}_s|}}\sum_{g\in \mathcal{G}_s} g \ket{0}^{\otimes n},
\end{equation}
which is nothing but the $i=0,j=0$ ground state in the notation of Eq.(\ref{gs}). This time, $\mathcal{G}_s$ is the group generated by the product of all possible star operators in the \emph{honeycomb} lattice. As before, $|\mathcal{G}_s| = 2^{n_s - 1}$, where $n_s$ is the number of stars (vertices) in the honeycomb lattice. For this state, we provide the following observation (not a Theorem):

\begin{observation}
For the toric code Hamiltonian in a honeycomb lattice $\Sigma$, the
GE of spins for the ground state $\ket{0,0}$ is upper- and
lower-bounded by \beq n_p-1 \le  {E_G} \le n_p+1,
\label{bo_honeycomb} \eeq where $n_p$ is the number of plaquettes
(faces) in $\Sigma$. Moreover, numerically it looks like the upper
bound is saturated, so that \beq {E_G} = n_p+1 . \label{obs} \eeq
\end{observation}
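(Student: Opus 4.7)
The plan is to sandwich $E_G$ between two bounds in the manner of Theorem~\ref{thm:tc_spins}, but replacing the computational basis by a better product state for the upper bound, since, unlike for the square lattice, the computational basis does not saturate the geometric entanglement on the honeycomb. For the upper bound I would use $\ket{\Phi} = \ket{+}^{\otimes n}$. Writing $\ket{0,0}$ in the $X$-eigenbasis and using that each $g \in \mathcal{G}_s$ is a product of Pauli-$X$ operators which therefore acts diagonally with phase $\pm 1$, one sees that $\sum_{g\in \mathcal{G}_s} g\ket{0}^{\otimes n}$ is a uniform superposition over precisely those $X$-basis vectors labelled by elements of the annihilator $\mathcal{G}_s^{\perp}\subset \mathbb{F}_2^n$ of $\mathcal{G}_s$, each with amplitude $\sqrt{|\mathcal{G}_s|/2^n}$. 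Since $\ket{+}^{\otimes n}$ corresponds to the zero element of $\mathcal{G}_s^{\perp}$, the overlap with $\ket{0,0}$ equals this common amplitude, giving $|\ipr{\Phi}{0,0}|^2 = |\mathcal{G}_s|/2^n = 2^{n_s-1-n}$. Substituting the honeycomb identities $n = 3n_p$ and $n_s = 2n_p$ yields $2^{-(n_p+1)}$, hence $E_G \le n_p+1$.

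For the lower bound I would invoke $E_G \ge E_G^{[A:B]}$ for a suitable bipartition of the spins together with the Hamma--Ionicioiu--Zanardi identity $\nu_1(\rho_A) = |\mathcal{G}_s(A)||\mathcal{G}_s(B)|/|\mathcal{G}_s|$ already used in Theorem~\ref{thm:tc_spins}, reducing the task to finding a bipartition with $|\mathcal{G}_s(A)||\mathcal{G}_s(B)| \le 2^{n_p}$. A natural choice is to take $A$ to be the qubits on the edges of a single one of the three honeycomb orientations, so that $|A| = n_p$ and the $A$-edges form a perfect matching of the vertices, while the complementary $B$-edges form a $2$-regular subgraph of zigzag chains. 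Elements of $\mathcal{G}_s(A)$ then correspond to vertex subsets whose star-boundary consists only of $A$-edges, i.e.\ unions of matched pairs, and elements of $\mathcal{G}_s(B)$ to vertex assignments constant along the zigzag chains; a direct combinatorial count of both sub-groups then yields the desired inequality. The conjectural equality $E_G = n_p+1$ would be corroborated by direct numerical optimisation over product states on small honeycomb tori.

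The main obstacle is the lower bound. The number of zigzag chains entering $|\mathcal{G}_s(B)|$ depends on the torus's aspect ratio, so the single-direction bipartition above does not uniformly achieve $|\mathcal{G}_s(A)||\mathcal{G}_s(B)| \le 2^{n_p}$ for every honeycomb torus, and one may have to tailor the choice of bipartition to the geometry. More fundamentally, the two-unit gap between the rigorous $n_p-1$ and the numerical value $n_p+1$ reflects an intrinsic looseness of the bipartite single-copy reduction, and closing it to prove the conjectured equality would likely require a genuinely multipartite optimisation exploiting simultaneously the star and plaquette symmetries of $\ket{0,0}$, which is presumably why the statement is framed here as an observation rather than a theorem.
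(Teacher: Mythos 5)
Your upper bound is correct and is essentially the paper's own argument: the overlap of $\ket{0,0}$ with $\ket{+}^{\otimes n}$ equals $\sqrt{|\mathcal{G}_s|/2^n}=2^{-(n_p+1)/2}$ (the paper obtains this by summing $\langle +|^{\otimes n}g\ket{0}^{\otimes n}=2^{-n/2}$ over $g\in\mathcal{G}_s$, which is the same computation you phrase in the $X$-eigenbasis), and with $n=3n_p$, $n_s=2n_p$ this gives $E_G\le n_p+1$.

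The lower bound, however, has a genuine gap. For your bipartition ($A$ one orientation class, a perfect matching; $B$ the other two, a disjoint union of zigzag cycles), one of the two local subgroups is generated by the matched pairs and has order $2^{n_p-1}$, while the other is generated by the zigzag cycles and has order $2^{c-1}$, with $c$ the number of such cycles. (Incidentally you have the two identifications swapped: a vertex set whose edge-boundary lies entirely in the matching $A$ is a union of zigzag cycles, not of matched pairs.) The product is $2^{n_p+c-2}$, so the Hamma--Ionicioiu--Zanardi lemma yields only $E_G\ge n_p+1-c$. On the standard $N\times N$ periodic brickwall used in the paper (Table~\ref{tabd}) one has $c=\Theta(N)=\Theta(\sqrt{n_p})$, so your bound is strictly weaker than $n_p-1$ for every lattice beyond the smallest sizes --- this is not merely a matter of unfavourable aspect ratios, and "tailoring" the one-orientation idea will not repair it. The paper instead uses a genuinely two-dimensional checkerboard of small units (Fig.~\ref{fig:brickwallAB}) in which each $A$-unit and each $B$-unit contains exactly one star, so that $|\mathcal{G}_s(A)|=|\mathcal{G}_s(B)|=2^{n_s/4}$ and the product is exactly $2^{n_s/2}=2^{n_p}$, giving $E_G\ge n_p-1$. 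Your remaining remarks (numerical corroboration of $E_G=n_p+1$ on small tori, and the intrinsic looseness of the bipartite single-copy reduction explaining why this is an Observation rather than a Theorem) are consistent with the paper.
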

The above is an observation, and not a theorem,  since
Eq.(\ref{obs}) has only been checked numerically. Yet, let us prove
now the upper and lower bounds in Eq.(\ref{bo_honeycomb}). First,
the maximal overlap can be easily upper bounded again by
$\Lambda_{\max} \ge |\mathcal{G}_s|^{-1/2}=2^{(1-n_s)/2}$.
Nevertheless, unlike in the case of the square lattice, this time we
can actually have a better bound, i.e.,
\begin{eqnarray}
\Lambda_{\max}&\ge&\langle +|^{\otimes n}|0,0\rangle \nonumber \\
&=&\frac{1}{\sqrt{2^n}}\frac{1}{\sqrt{|\mathcal{G}_s|}}\sum_{x \in \{0,1\}^{
n}}\sum_{g\in \mathcal{G}_s} \langle x|g|0\rangle^{\otimes n} \nonumber \\
&=&\sqrt{\frac{|\mathcal{G}_s|}{2^n}}=\sqrt{\frac{2^{2n_p-1}}{2^{3n_p}}} = \frac{1}{\sqrt{2^{n_p + 1}}},
\end{eqnarray}
which is also smaller than $|\mathcal{G}_s|^{-1/2}=2^{(1-n_s)/2}$ using the fact that $n=3n_p$ and $n_s=2n_p$ for the honeycomb
lattice. This gives the desired upper bound on the GE, ${E_G}\le n_p+1$.

\begin{figure}
 \includegraphics[width=6cm]{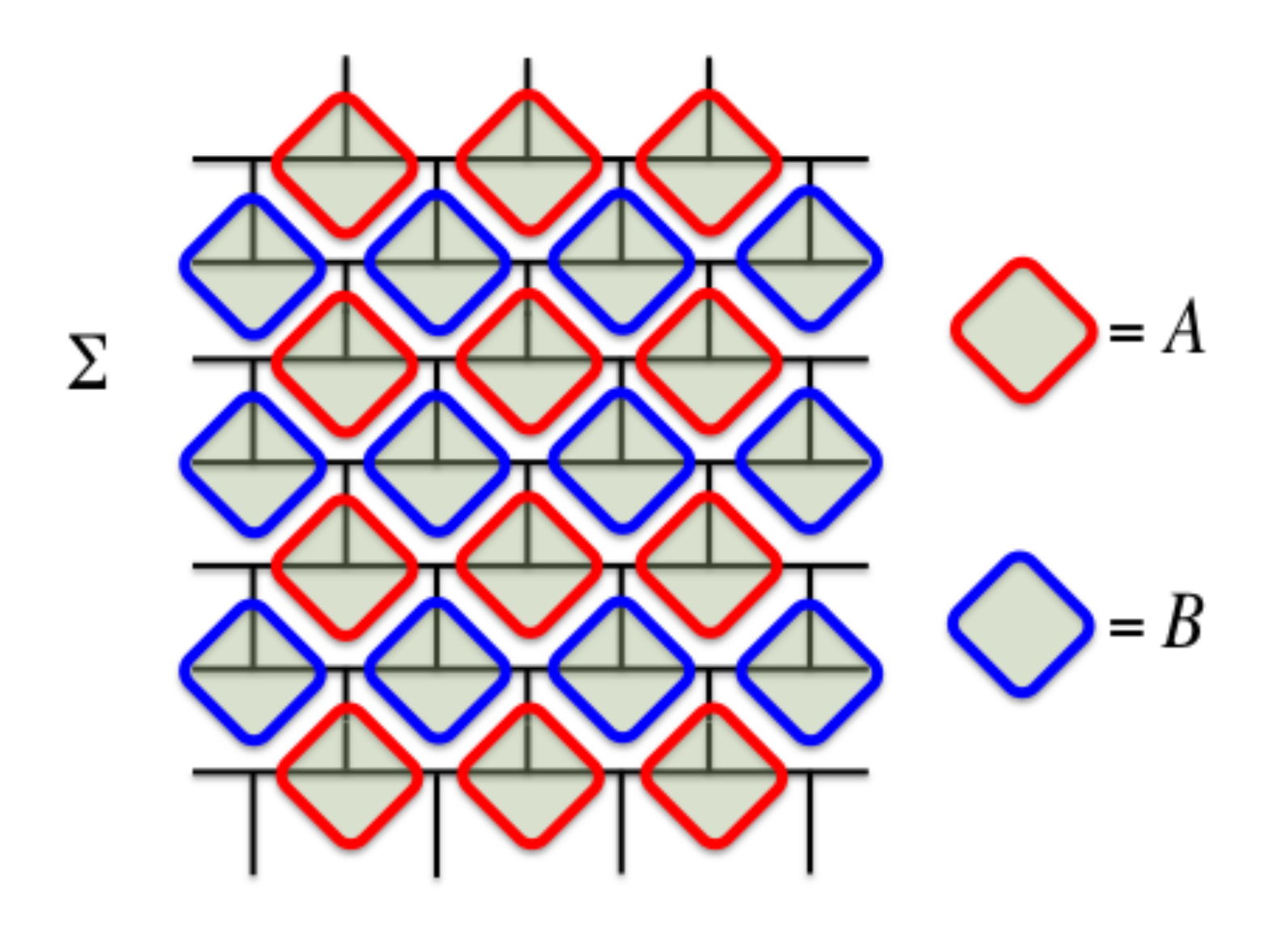}
  \caption{\label{fig:brickwallAB}
  Bipartition of edges on the honeycomb lattice (or equivalently the brickwall lattice) $\Sigma$. We have two sets: $A$ (red-grey) and $B$ (blue-white). Edges crossed by the block boundaries are regarded as inside the block.}
\end{figure}

To obtain the lower bound on GE (or equivalently upper bound on the
maximal overlap), our strategy is again similar to the one for the square lattice. This time we consider the bipartition illustrated in Fig.~\ref{fig:brickwallAB}. There is a star operator in each of the smallest units of the sets $A$ (red) and $B$ (blue). It is easy to convince oneself that the nontrivial subgroup $\mathcal{G}_s(A)$ of $\mathcal{G}_s$ is
generated by the $n_s/4$ star operators residing in all units in $A$,
and similarly for $\mathcal{G}_s(B)$. The size of these groups is therefore $|\mathcal{G}_s(A)| = |\mathcal{G}_s(B)| = 2^{n_s/4 }$. The reduced density matrix of e.g. set $A$ satisfies $\rho_A^2=(|\mathcal{G}_s(A)||\mathcal{G}_s(B)|/|\mathcal{G}_s|)\rho_A$, and hence the maximal overlap $\Lambda_{\max}^2$ is upper bounded by
\begin{equation}
\Lambda_{\max}^2 \le \frac{|\mathcal{G}_s(A)||\mathcal{G}_s(B)|}{|\mathcal{G}_s|} = \frac{1}{2^{n_p-1}},
\end{equation}
which leads to $n_p-1\le E_G$, as claimed. Combining this bound with the one computed previously, we arrive at the result that we mentioned before, namely, $n_p-1\le {E_G}\le n_p+1$.

In order to see whether any of these bounds is tight, we have
carried out a numerical computation for small system sizes and
confirmed that it is the upper bound which is saturated, i.e.,
${E_G}= n_p+1$. This can be seen in Table~\ref{tabd}. We would like
to emphasize that even though we do not obtain $-1$, this is only a
short-range property (where each block size is unity), as $E_G$ for
blocks will give rise to the correct contribution; see below.

\begin{table}
\begin{center}
\begin{tabular}{||c||c||c||c||c||}
\hline
$~~~N \times N~~~$ & $~~~n_s~~~$ & $~~~n~~~$ & $~~~n_p~~~$ &~~~ $E_G$ ~~~ \\
 \hline
 \hline
 $2 \times 2$ & 4 & 6 & 2 & 3 \\
 $2 \times 4$ & 8 & 12 & 4 & 5 \\
 $2 \times 6$ & 12 & 18 & 6 & 7 \\
 $2 \times 8$ & 16 & 24 & 8 & 9 \\
 $4 \times 4$ & 16 & 24 & 8 & 9 \\
 $4 \times 6$ & 24 & 36 & 12 & 13 \\
\hline
\end{tabular}
\end{center}
\caption{GE computed numerically for finite honeycomb lattices. We use a $N \times N$ brickwall structure to represent the honeycomb lattice with periodic boundary conditions, where $n_s = N^2, n = (3/2)N^2$, and $n_p = n - n_s$ (with even $N$). Our calculations are accurate with a $10^{-6}$ of relative error.}
\label{tabd}
\end{table}

\vspace{10pt}

\underline{\emph{The $|+,+\rangle$ ground state.-}}
We now consider the $\ket{+,+}$ ground state. By definition, this ground state is obtained by acting on the reference state $\ket{+}^{\otimes n}$ with the elements of the group $\mathcal{G}_p$ of plaquette operators, namely,
\begin{equation}
\label{eqn:++GS}
 |+,+\rangle\equiv \frac{1}{\sqrt{|\mathcal{G}_p|}}\sum_{g\in
\mathcal{G}_p} g \ket{+}^{\otimes n}.
\end{equation}
Notice that, unlike in the square lattice (where there is a duality between stars and plaquettes), in the honeycomb lattice the $\ket{+,+}$ ground state does not need to have, necessarily, the same properties as the $\ket{0,0}$ ground state from the previous section (at least not the same short-distance properties). For this state we can actually find its GE exactly, and provide the following Theorem:

\begin{theorem}
For the toric code Hamiltonian in a honeycomb lattice $\Sigma$, the
GE of spins for the ground state $\ket{+,+}$ is given by \beq {E_G}
= n_p-1 , \label{bo_honeycombdual} \eeq where $n_p$ is the number of
plaquettes (faces) in $\Sigma$.
\end{theorem}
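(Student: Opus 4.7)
The plan is to follow the template of Theorem~\ref{thm:tc_spins} and the preceding Observation, but with stars replaced by plaquettes and the computational basis replaced by the $|\pm\rangle$-product basis; the goal is to sandwich $E_G$ between matching bounds $n_p - 1 \le E_G \le n_p - 1$.

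For the upper bound I would use the $|\pm\rangle$-product basis directly. Every $g \in \mathcal{G}_p$ is a tensor product of $\sigma_z$'s, so $g|+\rangle^{\otimes n}$ is again a $|\pm\rangle$-product state, and distinct group elements yield mutually orthogonal such states (since they differ by at least one $B_p$, which flips $|+\rangle \leftrightarrow |-\rangle$ on its support). Thus Eq.~(\ref{eqn:++GS}) exhibits $|+,+\rangle$ as a uniform superposition of exactly $|\mathcal{G}_p| = 2^{n_p-1}$ orthonormal product basis vectors, giving $\Lambda_{\max} \ge |\mathcal{G}_p|^{-1/2}$ and hence $E_G \le n_p - 1$.

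For the matching lower bound I would invoke the bipartite inequality $\Lambda_{\max} \le \Lambda_{\max}^{[A:B]}$ together with the Hadamard-conjugated Hamma--Ionicioiu--Zanardi identity, which for the $|+,+\rangle$ state reads $\rho_A^2 = (|\mathcal{G}_p(A)||\mathcal{G}_p(B)|/|\mathcal{G}_p|)\,\rho_A$, with $\mathcal{G}_p(X) \subseteq \mathcal{G}_p$ the subgroup of plaquette products acting trivially on $X$. It then suffices to exhibit an edge bipartition $(A,B)$ with $|\mathcal{G}_p(A)| = |\mathcal{G}_p(B)| = 1$, which gives $(\Lambda_{\max}^{[A:B]})^2 \le |\mathcal{G}_p|^{-1}$ and hence $E_G \ge n_p - 1$.

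The main obstacle---and the place where this proof differs qualitatively from the $|0,0\rangle$ analysis in the Observation---is exhibiting such a bipartition. Because each honeycomb edge is shared by exactly two plaquettes, $g_S = \prod_{p \in S} B_p$ acts trivially on $A$ iff the subset $S$ of plaquettes is constant on every $A$-edge, i.e., $S$ is a union of connected components of the auxiliary graph $G_A$ whose vertices are the plaquettes and whose edges record $A$-edges shared between neighbouring plaquettes. Modulo the global relation $\prod_p B_p = \mathbb{I}$, this yields $|\mathcal{G}_p(A)| = 2^{c_A - 1}$ with $c_A$ the number of connected components of $G_A$, so triviality is equivalent to $G_A$ being connected, and likewise for $G_B$. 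The desired bipartition is therefore any 2-colouring of the edges of the plaquette-adjacency graph---a $6$-regular, triangular-lattice-like graph on the torus with $n_p$ vertices and $3n_p$ edges---into two connected spanning subgraphs, which is easy to arrange explicitly (and exists in general by Nash--Williams, since two edge-disjoint spanning trees fit comfortably within the available $3n_p$ edges). Combining this with the upper bound then yields $E_G = n_p - 1$, as claimed.
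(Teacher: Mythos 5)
Your proof is correct and follows essentially the same route as the paper: the paper dualizes $\ket{+,+}$ to the $\ket{0,0}$ state on the triangular lattice $\Sigma^*$ and runs the standard sandwich --- trivial overlap bound with a product basis state for $E_G\le n_p-1$, and the Hamma--Ionicioiu--Zanardi projector identity for a bipartition with trivial local stabilisers for $E_G\ge n_p-1$ --- which is exactly your argument transported through the Hadamard/duality map. The only place you go beyond the paper is the bipartition step: where the paper simply points to the explicit partition in Fig.~\ref{fig:Triangular}, you characterise triviality of $\mathcal{G}_p(X)$ as connectivity of the spanning subgraph of the plaquette-adjacency (dual triangular) graph induced by the $X$-edges and invoke Nash--Williams to guarantee such a $2$-colouring exists, which is a cleaner and more general justification of the same fact.
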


\begin{proof}
We can obtain the GE analytically as follows: first, we use the duality that the sate $\ket{+,+}$ in the honeycomb lattice $\Sigma$ is equivalent to the $|0,0\rangle$ ground state of the toric code on the dual lattice $\Sigma^*$, which is the triangular lattice. After this, we simply follow the same approach as before by finding appropriate upper- and lower-bounds.  The trivial upper bound is given by $n_p - 1 \le E_G$, with $n_p$ the number of plaquettes in the original honeycomb lattice.
To find the lower bound, we consider the bipartition given in Fig.~\ref{fig:Triangular}, which is similar to the one that we used
square lattice (Fig.~\ref{thm:tc_blocks}). Using the same methods as before, the expression for the GE follows.
\end{proof}

\begin{figure}
 \includegraphics[width=9cm]{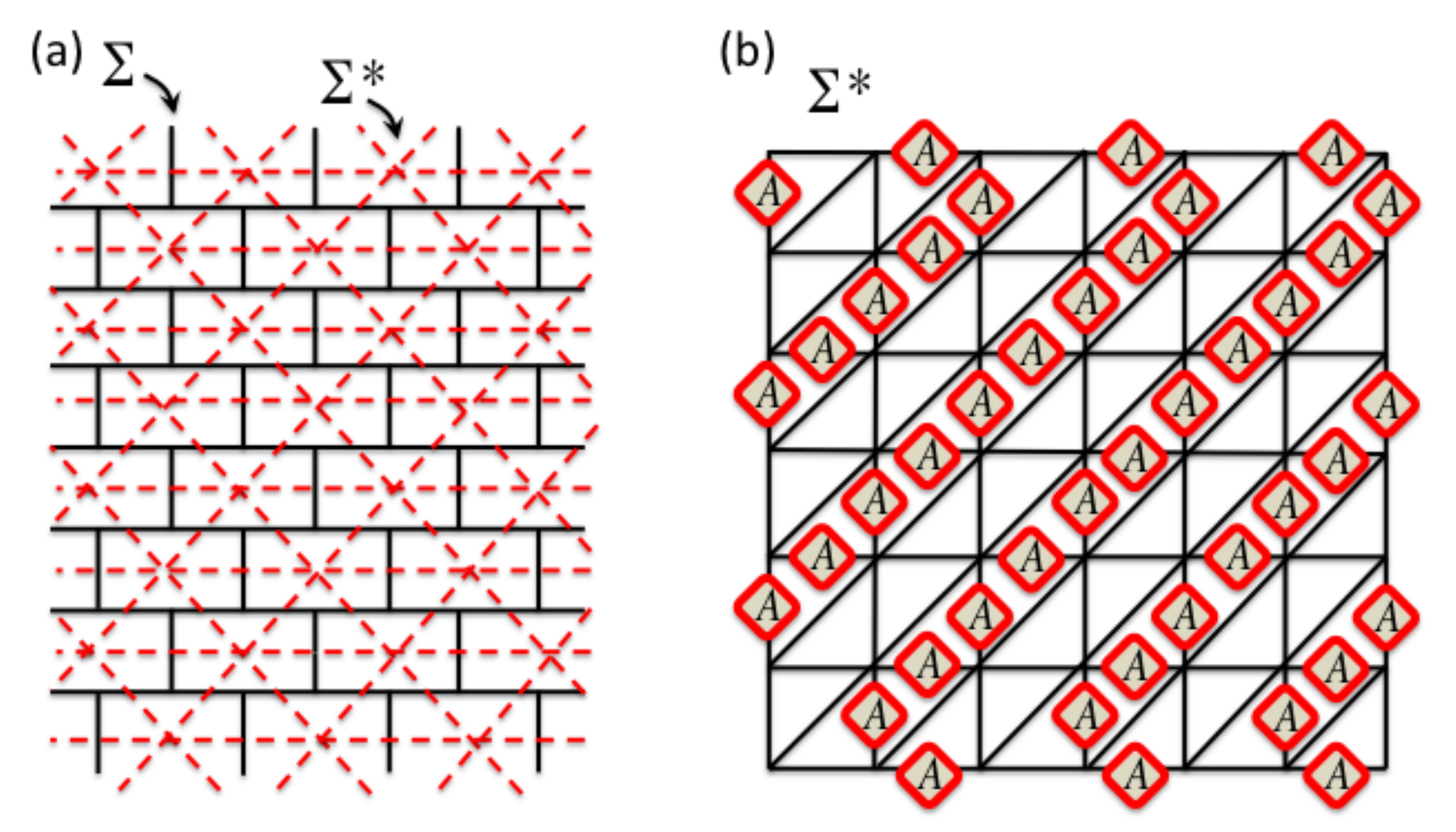}
  \caption{\label{fig:Triangular}
 (color online) (a) Honeycomb lattice (or brickwall lattice) $\Sigma$ in black lines, and its dual triangular lattice $\Sigma^*$ in dotted red lines. (b) Bipartition of edges in $A$ and $B$ sets on the triangular lattice (deformed to fit in to a square lattice). For simplicity of the figure, only the spins in the $A$ set are shown (in red). The rightmost  edges are identified with the leftmost edges, and the bottom edges are identified with the top edges, since periodic boundary conditions are used.}
\end{figure}

{

\underline{\emph{Blocking the honeycomb lattice.-}}  The fact that
the $|0,0\rangle$ state on different lattices  gives different
constant contributions to the GE of spins turns out to be a
short-distance property that disappears when blocks of spins are
considered. Blocking corresponds to coarse-graining and gives
long-wavelength properties.

This is easy to notice, if one realises that in just one RG step we
can map the honeycomb to the square lattice. Let us explain this in
more detail. It is well known that the toric code in the honeycomb
and square lattices can be mapped to one another by simple RG
(disentangling) operations acting locally on the lattice. These
transformations (from the square to the honeycomb lattice, and back
to the square) are represented in the diagram of Fig.~\ref{figMap}.
In fact, these moves can also be understood in terms of Entanglement
Renormalization steps \cite{tcmera}, where some of the initial
qubits become disentangled from the rest after every step.

Importantly, after mapping the honeycomb to the square lattice, the
GE follows  exactly the laws explained in the previous section, and
hence it is evident that $E_{\gamma} = 1$ also holds in this case
after a suitable blocking of the spins (so that the RG CNOTs fall
within the blocks).

For instance, one could choose the blocking shown in
Fig.~\ref{blockHon} and obtain $E_{\gamma} = 1$, but similar results
hold for other choices of blocks as well. One could also consider
larger blocking such as the one shown in
Fig.~\ref{fig:honeycomb_blocking}.  It is interesting to know that
one-step of blocking is sufficient to remove the short-range
behaviour and result in a correct topological contribution. Needless
to say, in the process of blocking the corresponding ``boundary
laws'' for the bulk contribution of the GE of blocks also follow. }

\begin{figure}
\includegraphics[width=0.45\textwidth]{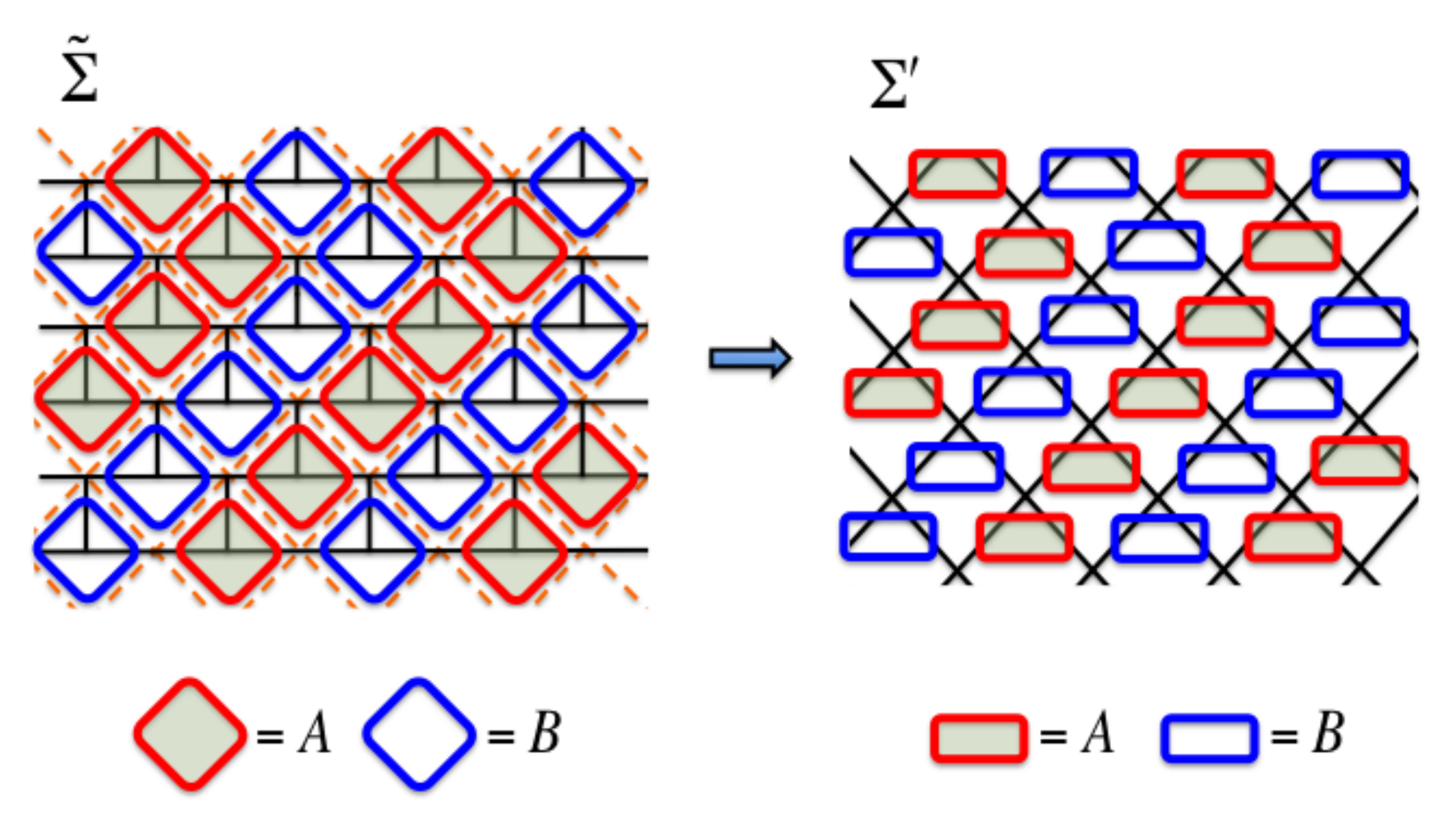}
\caption{(color online)  Blocking and bipartition of a honeycomb
lattice on a torus.  The blocking is such that the CNOT operations
from Fig.~\ref{figMap}, which map the honeycomb lattice (black) to
the square lattice (orange), fall within the blocks, so they do not
affect the calculation of the GE. The GE of 3-site blocks for the
honeycomb lattice (left) then maps to the GE of 2-site blocks for
the square lattice (right). Larger block sizes could also be
considered.}  \label{blockHon}
\end{figure}

\subsubsection{Brief discussion of the honeycomb lattice results}

\begin{figure}
   \includegraphics[height=0.3\textwidth]{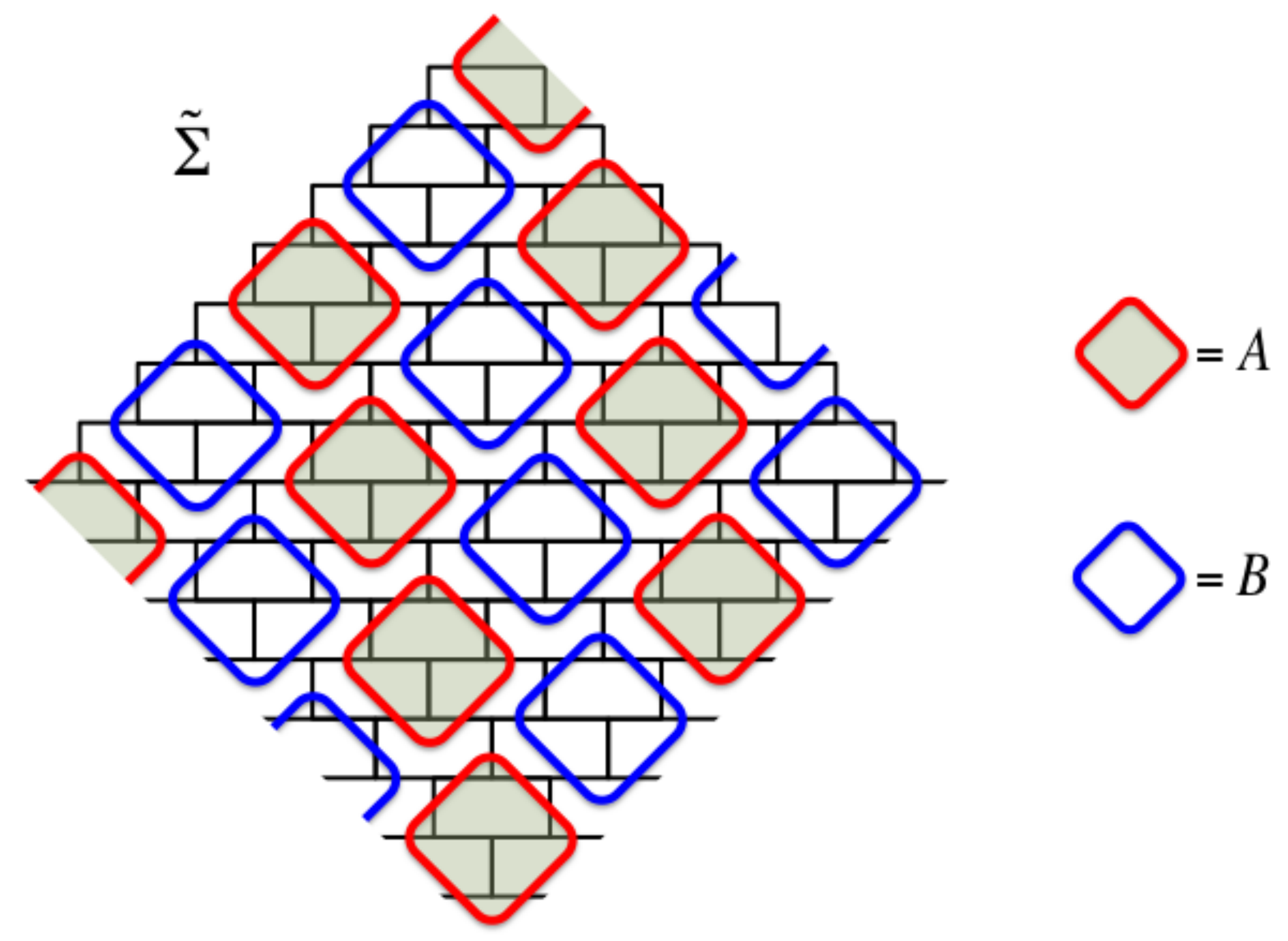}
  \caption{\label{fig:honeycomb_blocking} (color online) Blocking and bipartition of the honeycomb lattice on a torus, for larger blocks. Each block is defined by edges contained in red or blue square, where the coloring indicates the bipartition.  Larger blocks can be considered similarly. }
\end{figure}

Let us comment briefly on the above results. First, we have seen that the GE { of spins} is, in this case, different for the $\ket{0,0}$ and $\ket{+,+}$ states. One could also infer from here that the topological contribution to the GE is also different. However, we wish to stress that the topological GE is a long-range contribution, and can only be extracted reliably after appropriate blocking of the spins. { This is exactly what has been observed when considering the blocking of spins in the honeycomb lattice.}
Second, it is clear that the same conclusions that applied to excited states and other ground states on the square lattice apply also to the honeycomb lattice, i.e. once chosen a ground state basis, all the states in the basis as well as the corresponding (quasiparticle) excitations have the same amount of entanglement as the \enquote{reference} ground state (which in this case is either $\ket{0,0}$ or $\ket{+,+}$), as long as they are related by local operations.

\subsubsection{A different topology: the sphere}
\begin{figure}
   \includegraphics[height=0.35\textwidth]{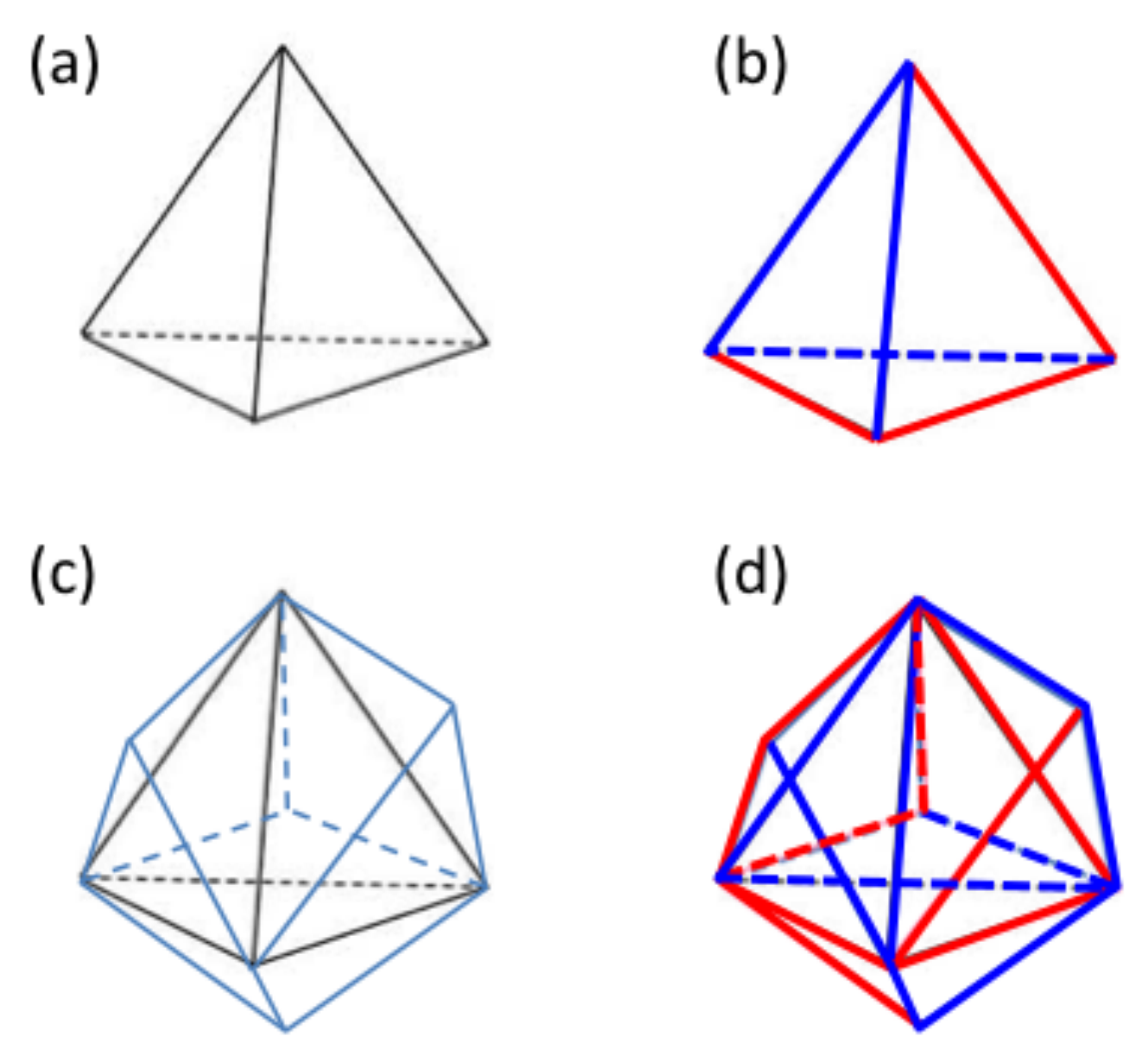}
  \caption{\label{fig:tetrahedron_partition} (color online) (a) A tetrahedron; (b) bipartition of edges in a tetrahedron (red/blue); (c)
  the same topology but with one tetrahedron grown onto each face; (d) bipartition (red/blue) of the edges in (c).} 
  \end{figure}

In addition to the topology of a torus, we also consider that of a
sphere.  The simplest geometry is a tetrahedron, shown in
Fig.~\ref{fig:tetrahedron_partition}a. The toric code ground state
on this topology is unique. 

Using the same technique of upper and
lower bounding the maximal overlap, one can easily show that its GE of spins (i.e. one-site blocks) is given by $E_G=n_s-1$,
where $n_s=4$ is the number of vertices and is related to the number
of spins $n$ via $n= d\cdot n_s/2$, where $d=3$ is the degree of
a vertex. The bipartition that gives rise to this result is shown in
Fig.~\ref{fig:tetrahedron_partition}b. 

In order to obtain a scaling we add one tetrahedron to each face, which can be iterated in order to add more sites to the system. The
first step of such a procedure is shown in
Fig.~\ref{fig:tetrahedron_partition}c. In general it is easy to obtain
$E_G=n_s-1 = 2n/d -1$, where $n_s$, $n$ and $d$ are,
respectively, the number of vertices, the number of spins (edges), and the
vertex degree of the associated geometry. This result shows that the -1 contribution is not only present in the torus topology, but  in the sphere topology as well. 
\section{Beyond the toric code}
\label{sec:beyond}

After our detailed study on the toric code model, we now turn to study the GE in other topological models that also correspond to fixed points of RG. From now on, some parts of our study will not be as detailed as for the toric code, and will focus mainly on some of the fundamental properties. Yet, in order to make everything as clear as possible, we need to see how some of the main properties of the toric code apply also to other models. In particular, there are three main points to consider:

\vspace{10pt}

\underline{\emph{1. Excitations and other ground states.-}} As for the toric code, excited states and other ground states that can be obtained by local operations acting on some reference ground state will share the same entanglement properties (c.f. Lemma \ref{lem:tc_spins}). Therefore, in many cases we shall only discuss the entanglement of some \enquote{reference} ground state.

\vspace{10pt}

\underline{\emph{2. Disentangling general stabiliser states.-}} Very importantly, the disentangling property is not exclusive of the toric code only, but it holds for any stabiliser state as well. More concretely, \emph{any state that is stabilised by a set of operators can be built from a quantum circuit acting on some separable initial state, where the gates of the quantum circuit involve the basis of the stabiliser group} (see e.g. \cite{Preskill}). If this basis of stabilisers can be thought of as acting locally on a lattice, then the quantum gates of the corresponding quantum circuit will also be local in the lattice. Moreover,  there will be a set of unitary gates in the quantum circuit that will correspond to each one of the elements of the group basis, and which will take into account the existence of that particular basis element. That is: removing the action of the gates corresponding to a particular basis element effectively removes that stabiliser from the basis of the stabiliser group (e.g. removing a plaquette or star operator from the toric code).

Therefore, simply by \emph{reversing} the action of such a quantum circuit on the stabilised state, we will always manage to \emph{disentangle} the state by means of sufficiently local unitary operations. At every step, the remaining quantum state will be the state stabilised by the remaining set of stabiliser operators. In particular, this implies that there is always a MERA representation for any stabiliser state (not necessarily topological), and the toric code model is just a particular case.

This property turns out to be quite important, since it allows us to extend straightforwardly the results for the GE of blocks beyond the toric code: when blocking is considered, simply perform disentangling operations inside of each block in order to remove the maximum of stabilisers within each block, so that the remaining stabilisers will always be between spins in different blocks. By construction, the remaining state will usually obey an area law for the GE of blocks, plus a possible topological term. Because of this, from now on the GE of blocks will not be discussed in detail except for a couple of examples.

\vspace{10pt}

\underline{\emph{3. The non-Abelian case.-}} It is indeed possible to generalise our results to systems with non-Abelian symmetries, such as quantum double models. However, in order to deal with these models, it will be worthwhile to develop part of our formalism in the most general fashion. This will be done in Sec.~\ref{sec:general_bounds}, and quantum double models will be considered subsequently in Sec.~\ref{sec:quantumdouble}.

\section{Double semion model}
\label{sec:semion}
The double semion model is given by the spin model on the
honeycomb lattice
\begin{equation}
H_{\rm DS}=-\sum_s A_s -\sum_p B'_p,
\end{equation}
where $A_s$ and $B'_p$ are mutually commuting and given by
\beq
A_s \equiv \prod_{j \in s} \sigma_x^{[j]} \ \ \ \ \ \ \ \  B'_p \equiv - \prod_{k \in {\rm legs  \ of}\ p} i^{(1-\sigma_x^{[k]})/2} \prod_{j \in p} \sigma_z^{[j]}.
\eeq
This is, the star operators are the same as for the toric code, whereas the plaquette operators are related via
 \begin{equation}
 B'_p=-B_{p} \prod_{k \in {\rm legs  \ of}\ p} i^{(1-\sigma_x^{[k]})/2},
\end{equation}
with $B_p$ the toric code plaquette operator. As for the toric code, star and plaquette operators satisfy the non-local constraint
\beq
\prod_s A_s = \prod_p B'_p = \mathbb{I}.
\eeq
This model is known to be also topologically ordered, yet its topological phase differs from the one of the toric code model. More precisely, while the toric code corresponds to the topological phase of a $\mathbb{Z}_2$ gauge theory, the double semion model corresponds to a $U(1) \times U(1)$ Chern-Simons theory \cite{sn}.

One of the ground states of this double semion model is almost identical to the toric code ground
state $\ket{+,+}$, except that each term in the decomposition
Eq.~(\ref{eqn:++GS}) is multiplied by a factor $(-1)^{X_c}$, where
$X_c$ counts the number of loops formed by spins in the $|-\rangle$ state. More specifically, this ground state
is given by
\begin{equation}
\label{eqn:++DBGS}
 |+,+\rangle_{\rm DS}\equiv \frac{1}{\sqrt{|\mathcal{G}_p|}}\sum_{g\in
\mathcal{G}_p} (-1)^{X_c} g\ket{+}^{\otimes n},
\end{equation}
where again $\mathcal{G}_p$ is the group of all possible products of plaquette operators $B'_p$. Of course, the state $\ket{+,+}$ is by construction a stabilised state of the group $\mathcal{G}_p$. We provide now the following Theorem:

\begin{theorem}
For the double semion model in a honeycomb lattice $\Sigma$, the GE
of spins for the ground state $\ket{+,+}_{\rm DS}$ is given by \beq
{E_G} = n_p-1 , \label{bo_semion} \eeq where $n_p$ is the number of
plaquettes (faces) in $\Sigma$.
\end{theorem}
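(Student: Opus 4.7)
The plan is to adapt the strategy used in the preceding theorem for the toric-code $\ket{+,+}$ ground state on the honeycomb lattice, producing matching upper and lower bounds on $\Lambda_{\max}$. The only novelty is to verify that the phase factors $(-1)^{X_c}$ in Eq.~(\ref{eqn:++DBGS}) do not affect either bound.

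For the upper bound on $E_G$ I would choose the product state $\ket{\Phi}=\ket{+}^{\otimes n}$. Each term $g\ket{+}^{\otimes n}$ in the expansion of $\ket{+,+}_{\rm DS}$ is, up to an overall phase, a $\sigma_x$-basis product state labelled by the closed-loop configuration of its $\ket{-}$ sites; for $g\neq\mathbb{I}$ at least one spin is flipped to $\ket{-}$, so $\bra{+}^{\otimes n} g\ket{+}^{\otimes n}=0$. Only the identity survives, and $(-1)^{X_c(\mathbb{I})}=+1$, giving $|\bra{+}^{\otimes n}\ket{+,+}_{\rm DS}|=1/\sqrt{|\mathcal{G}_p|}$ and hence $E_G\le\log_2|\mathcal{G}_p|=n_p-1$.

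For the lower bound I would use $\Lambda_{\max}^2\le\nu_1(\rho_A)$ with $\rho_A=\tr_B\ketbra{+,+}{+,+}_{\rm DS}$, selecting for $A,B$ exactly the bipartition used in the preceding honeycomb toric-code theorem, whose defining property is that the subgroups $\mathcal{G}_p(A)$ and $\mathcal{G}_p(B)$ (consisting of those elements of $\mathcal{G}_p$ acting trivially on $B$ and $A$, respectively) are both trivial. Expanding,
\begin{equation*}
\rho_A=\frac{1}{|\mathcal{G}_p|}\sum_{g,g'\in\mathcal{G}_p}(-1)^{X_c(g)+X_c(g')}\ket{g_A}\bra{g'_A}\,\bra{+}^{\otimes n_B}g'_B g_B\ket{+}^{\otimes n_B},
\end{equation*}
with $\ket{g_A}\equiv g_A\ket{+}^{\otimes n_A}$ and analogously for $B$. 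The $B$ matrix element forces $g'_B=g_B$; triviality of $\mathcal{G}_p(A)$ then forces $g=g'$, and the signs square to $+1$. Hence $\rho_A=|\mathcal{G}_p|^{-1}\sum_g\ketbra{g_A}{g_A}$, a sum of $|\mathcal{G}_p|$ mutually orthogonal rank-one projectors, the orthogonality being guaranteed by triviality of $\mathcal{G}_p(B)$. Thus $\nu_1(\rho_A)=1/|\mathcal{G}_p|$ and $E_G\ge n_p-1$, which combined with the upper bound yields Eq.~(\ref{bo_semion}).

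The principal obstacle is exhibiting a bipartition of honeycomb edges for which $\mathcal{G}_p(A)$ and $\mathcal{G}_p(B)$ are both trivial, but this is precisely the content of the preceding theorem (established via the duality to the triangular lattice) and can be imported verbatim. The conceptual take-away is that the double-semion signs cancel pairwise as soon as the bipartition kills all off-diagonal entries of $\rho_A$, so the reduced density matrix---and hence $E_G$---coincides with that of the toric-code $\ket{+,+}$ state on the honeycomb, notwithstanding the fact that the two states lie in inequivalent topological phases.
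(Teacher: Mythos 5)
Your proposal is correct, but the route you take for the harder half of the argument differs from the paper's. The paper never computes a reduced density matrix for the double semion state: it observes that the toric-code $\ket{+,+}$ on the honeycomb lattice has nonnegative amplitudes in the $\{\ket{+},\ket{-}\}$ product basis, that $\ket{+,+}_{\rm DS}$ differs from it only by phases, and that by the triangle inequality attaching phases can only decrease the maximal overlap with product states; hence $\Lambda_{\max}^{\rm DS}\le\Lambda_{\max}^{\rm TC}=2^{-(n_p-1)/2}$, with the matching lower bound on $\Lambda_{\max}^{\rm DS}$ coming (as in your argument) from the overlap with a single component of the superposition. You instead re-run the bound method directly on the double semion state, showing that for a bipartition with trivial local subgroups the off-diagonal terms of $\rho_A$ vanish and the signs $(-1)^{X_c(g)+X_c(g')}$ square away on the diagonal, so $\rho_A$ coincides with the toric-code reduced density matrix. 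Both arguments lean on the preceding honeycomb $\ket{+,+}$ theorem, but for different ingredients: the paper imports its numerical value of $\Lambda_{\max}$, while you import its bipartition and the triviality of the associated subgroups. The paper's route is shorter and more general --- it applies whenever a state is obtained by decorating with phases a nonnegative-amplitude state of known GE, with no reference to any bipartition --- whereas yours is more explicit and proves the stronger statement that the semionic signs leave $\rho_A$ itself (hence all bipartite entanglement across that cut) unchanged. One small caution: when you write $g=g_A\otimes g_B$ you are implicitly separating the $\sigma_z$ flip pattern from the diagonal leg phases $i^{(1-\sigma_x)/2}$ and absorbing all phases into $(-1)^{X_c}$; this is consistent with the paper's Eq.~(\ref{eqn:++DBGS}) but is worth stating, since the raw operators $B'_p$ are not pure flip operators in the $\sigma_x$ basis.
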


\begin{proof}
Remember the ground state $\ket{+,+}$ for the toric code in the honeycomb lattice. This state can always be written as $\ket{+,+} = \sum c_{ijk...} \ket{ijk...}$ with $c_{ijk...} \ge 0$ in the local basis $\{ \ket{+}, \ket{-} \}$ for every spin. From here, it follows that $|\sum_{ijk..} p_i q_j ...  c_{ijk..} | \le  \sum_{ijk...} |p_i q_j... c_{ijk}|$, where the equality holds if $p_i, q_j,...$ are all non-negative up to a common phase, and the maximum is achieved by choosing $p_i,q_j,..$ to be non-negative. Next, let us define $d_{ijk...} = c_{ijk..} e^{i \phi}$, where $\phi$ is an arbitrary phase. Then it follows that $|\sum_{ijk..} p_i q_j ... d_{ijk..}| \le  \sum_{ijk..} |p_i q_j .... c_{ijk}|$. A maximisation over the left hand side gives the maximal overlap, say, for the double semion model by choosing appropriate phase factors and coefficients. But in any case, this will be smaller than maximising the right hand side whose maximum is, in this particular case, the maximal overlap for the toric code. From here the result just follows.
\end{proof}

From the above Theorem it follows that the topological contribution is $E_{\gamma} = 1$, which is again equal to the topological entropy for this model. Moreover, even if we do not discuss it here in detail, we expect the GE of blocks to obey similar laws as for the toric code model, since the ground state can also be disentangled by local moves \cite{lwdis}. Similarly, the entanglement properties of other ground states and excited states obtained by acting locally on $\ket{+,+}$, remain identical to those of $\ket{+,+}$ (c.f. Lemma \ref{lem:tc_spins}).

\section{Color code models}
\label{sec:color}

Topological color codes \cite{Bombin} are defined as follows: qubits reside on \emph{vertices} (not links!) of
a lattice that is a \emph{2-colex}. A 2-colex $\Sigma$ is a 2D lattice embedded in a torus of arbitrary genus $\mathfrak{g}$, with the following properties: (a) every vertex of the lattice has degree 3, i.e. precisely 3 edges meet at each vertex; (b) the faces of the lattice are 3-colorable, i.e. the faces can be coloured with 3-colours (e.g. red, green and blue) in such a way that no two adjacent faces have the same color, see e.g.
Fig~\ref{fig:ColorCode}. Two types of commuting operators are
defined on each plaquette (or face) $p$,
\begin{equation}
B_{p}^X \equiv \prod_{j \in {p}} \sigma_x^{[j]} \ \ \ \ \ \ \ \ \ B_{p}^Z \equiv \prod_{j\in {p}} \sigma_z^{[j]}.
\end{equation}
The Hamiltonian of the color code is thus
\beq
H_{{\rm CC}}=-\sum_{p}(B_{p}^X+B_{p}^Z).
\eeq
For this model in 2D lattices, the ground space has degeneracy $2^{4\mathfrak{g}}$, where $\mathfrak{g}$ is the genus of the underlying Riemann surface. For instance, on the torus this degeneracy is $16$.

Let us define the group $\mathcal{G}_p^X$ as the group generated by the products of all possible plaquette operators $B_p^X$ for all three colours. Notice that the product of all plaquette operators $B_{p}^X$ corresponding to the same color $c$ corresponds to the same element of the group $\mathcal{G}_p^X$ (namely, the action of a $\sigma_x$ operator on all the qubits). Therefore, we have the non-local constraint
\beq
\prod_{p\in red} B_{p}^X=\prod_{p\in green} B_{p}^X=\prod _{p\in blue} B_{p}^X = \prod_{v \in \Sigma} \sigma_x^{[v]}.
\eeq
The above relation implies that the size of the group is $|\mathcal{G}_p^X| = 2^{n_p-2}$, with $n_p$ the number of plaquettes in lattice $\Sigma$.

The ground level subspace of the color code model is constructed in
a very similar way as we did for the toric code. Specifically, for
the color code on a torus the ground space reads $\mathcal{L} = {\rm
span} \{ \ket{i,j,k,l}, ~ i,j,k,l=0,1 \}$, where \beq \ket{i,j,k,l}
= \frac{1}{\sqrt{|\mathcal{G}_p^X|}} \sum_{g \in \mathcal{G}_p^X} g
~ w_1^i ~ w_2^j ~ w_3^i ~ w_4^j ~ \ket{0}^{\otimes n}.
\label{gs_color} \eeq Operators $\{ w_1, w_2, w_3, w_4 \}$ above
correspond to non-contractible loop operators, see
Fig.~\ref{fig:ColorCode}.b. We do not enter here on the specific
action of these operators on every spin in the loop, and just
mention that it is local. The $16$ vectors $\ket{i,j,k,l}$ are
orthonormal and stabilised by $\mathcal{G}_p^X$, and therefore form
a possible basis of the ground level space of the model on a torus.

\begin{figure}
 \includegraphics[width=4.5cm]{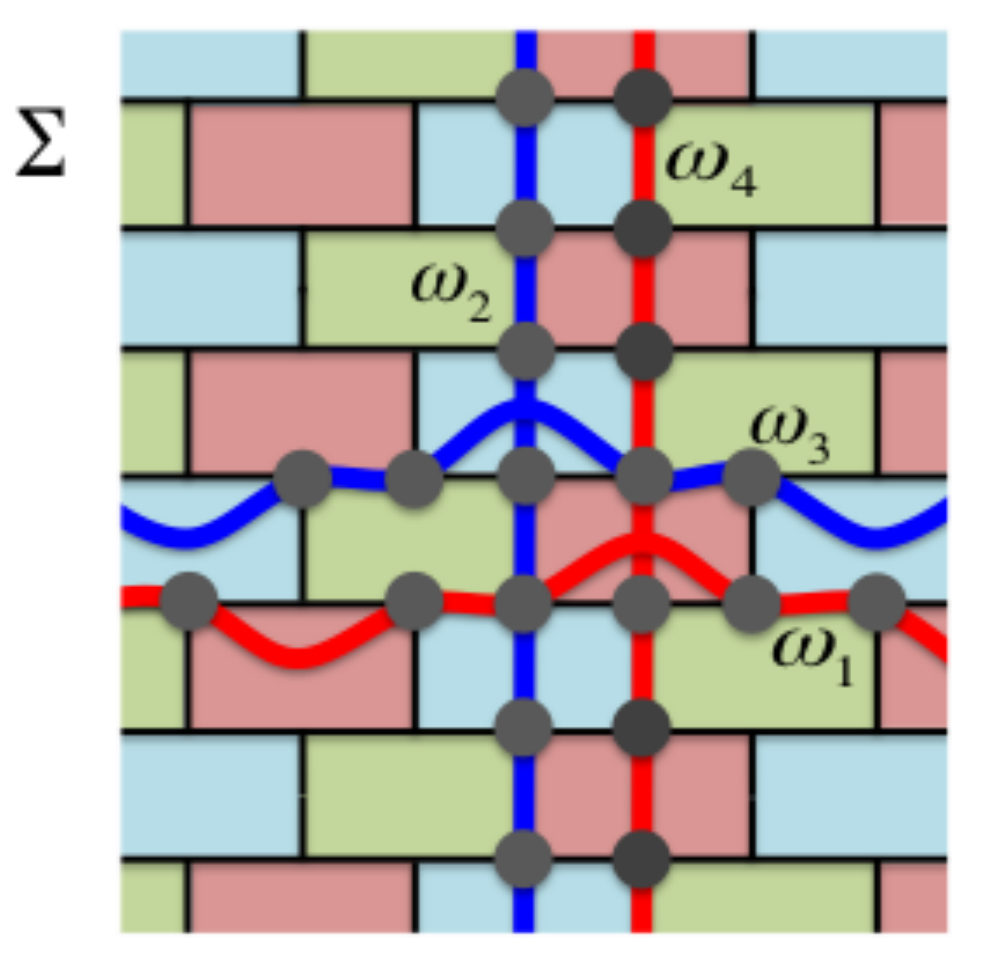}
  \caption{\label{fig:ColorCode} (color online) Color code on a honeycomb (brickwall) lattice $\Sigma$, which is an example of a 2-colex. There is a spin on every vertex, and every plaquette is coloured with either red, green or blue. Non-local string operators $w_1,w_2,w_3$ and $w_4$ are also shown, where periodic boundary conditions on a torus are assumed. Grey dots indicate the sites where string operators are acting.}
\end{figure}

As for the toric code, the action of loop operators corresponds to local unitary operations on a reference state $\ket{0,0,0,0}$, and therefore do not change the entanglement properties (c.f. Lemma \ref{lem:tc_spins}). Therefore, from now on we shall consider only
 the entanglement properties of the ground state $\ket{0,0,0,0}$, defined as
\begin{equation}
|0,0,0,0\rangle= \frac{1}{\sqrt{|\mathcal{G}_p^X|}}\sum_{g\in \mathcal{G}_p^X} g\ket{0}^{\otimes n}.
\label{ccstate}
\end{equation}

In what follows we show two different approaches to compute the GE for this model. The first approach is based, as all the calculations before, on finding upper and lower bounds for the GE. This approach works well for e.g. finding the GE of spin and blocks on regular lattices such as the honeycomb lattice. The second approach uses a different technique, building on recent work \cite{Va13}. The main idea is to exploit that color code ground states are instances of so-called Calderbank-Shor-Steane states of self-orthogonal type. This refers to the fact that $|0, 0, 0, 0\rangle$, when expanded in the standard basis, has the form \beq |0, 0, 0, 0\rangle = \frac{1}{\sqrt{|S|}} \sum_{x\in S}|x\rangle\eeq where $S\subseteq\mathbb{Z}_2^n$ is a collection of bit strings with a particular structure i.e. it is a self-orthogonal classical linear code. This property is in fact displayed by color code states associated with arbitrary lattices (2-colexes) and for surfaces of arbitrary genus. As a result, the second approach will lead to a general calculation of the GE for arbitrary color codes.

\subsection{Bound method}

Let us consider the color code on a honeycomb lattice $\Sigma$. For this lattice, and with this method, we can provide the following Theorem:

\begin{theorem}
For the color code model in a honeycomb lattice $\Sigma$, the GE of
spins for the ground state $\ket{0,0,0,0}$ is given by \beq {E_G} =
n_p-2 , \label{bo_color} \eeq where $n_p$ is the number of
plaquettes (faces) in $\Sigma$. \label{bobo}
\end{theorem}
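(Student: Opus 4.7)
The plan is to mirror the two-sided bounding argument used for the toric code on the square lattice (Theorem~\ref{thm:tc_spins}): exhibit a separable trial state giving $\Lambda_{\max}\ge|\mathcal{G}_p^X|^{-1/2}$ as an upper bound on $E_G$, then use a bipartition together with the Hamma--Ionicioiu--Zanardi identity for $\rho_A^2$ to obtain a matching lower bound, and finally invoke the relation $|\mathcal{G}_p^X|=2^{n_p-2}$ that follows from the three color-summed constraints $\prod_{p\in\mathrm{red}} B_p^X=\prod_{p\in\mathrm{green}} B_p^X=\prod_{p\in\mathrm{blue}} B_p^X=\prod_v\sigma_x^{[v]}$.

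For the upper bound, I would observe that $\ket{0,0,0,0}$ is an equal-amplitude superposition of $|\mathcal{G}_p^X|$ computational-basis strings, so any such string $\ket{x}$ in its support is a product state with $|\braket{x}{0,0,0,0}|=|\mathcal{G}_p^X|^{-1/2}$. Hence $\Lambda_{\max}\ge |\mathcal{G}_p^X|^{-1/2}$ and therefore $E_G\le \log_2|\mathcal{G}_p^X|=n_p-2$. (As a sanity check, using $n=2n_p$ for the honeycomb 2-colex one finds $\bra{+}^{\otimes n}|0,0,0,0\rangle=2^{-(n_p+2)/2}$, which is strictly smaller than the computational-basis overlap, so this trial does not improve the bound — in contrast with the honeycomb toric-code case treated earlier.)

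For the lower bound, I would fix a bipartition $[A:B]$ of the spins and use $\Lambda_{\max}\le\Lambda_{\max}^{[A:B]}$ together with the HIZ identity
\begin{equation}
\rho_A^2=\frac{|\mathcal{G}_p^X(A)|\,|\mathcal{G}_p^X(B)|}{|\mathcal{G}_p^X|}\,\rho_A,
\end{equation}
where $\mathcal{G}_p^X(X)$ denotes the subgroup of $\mathcal{G}_p^X$ acting trivially on $X$. If the bipartition is chosen so that $\mathcal{G}_p^X(A)=\mathcal{G}_p^X(B)=\{I\}$, then $\rho_A$ has only the eigenvalues $0$ and $|\mathcal{G}_p^X|^{-1}$, giving $(\Lambda_{\max}^{[A:B]})^2=|\mathcal{G}_p^X|^{-1}$ and hence $E_G\ge n_p-2$. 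Combined with the upper bound this yields the claimed equality $E_G=n_p-2$.

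The main obstacle, and the step that distinguishes this theorem from the toric-code version, is producing an explicit bipartition $(A,B)$ of the honeycomb 2-colex for which both $\mathcal{G}_p^X(A)$ and $\mathcal{G}_p^X(B)$ are trivial. It is not enough to split each single plaquette across $A$ and $B$: because of the three color-summed relations, one also has to rule out that a product of plaquettes involving several colors collapses to a nontrivial operator supported entirely on $A$ (or $B$). I would therefore pick a fine checkerboard-type partition of the vertices of the brickwall 2-colex, analogous to Fig.~\ref{part}.a, so that every hexagonal plaquette of every color has at least one vertex in $A$ and at least one vertex in $B$, and then verify by a direct combinatorial argument on the 3-coloring — essentially tracking the positions on which $\sigma_x$'s act for each group element — that no nontrivial product of $B_p^X$'s has support disjoint from $A$ or from $B$. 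This verification is the real technical content of the proof; once it is in place, the rest follows by routine substitution.
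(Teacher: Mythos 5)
Your proposal follows essentially the same route as the paper: a lower bound on $\Lambda_{\max}$ from a computational-basis component of the equal-weight superposition, a matching upper bound from a bipartition in which both local subgroups $\mathcal{G}_p^X(A)$ and $\mathcal{G}_p^X(B)$ are trivial via the projector identity $\rho_A^2 = (|\mathcal{G}_p^X(A)||\mathcal{G}_p^X(B)|/|\mathcal{G}_p^X|)\rho_A$, and the count $|\mathcal{G}_p^X|=2^{n_p-2}$. The paper uses the natural two-coloring of the honeycomb vertices for the bipartition and simply asserts the triviality of the local subgroups, whereas you correctly flag the verification (ruling out multi-color products collapsing onto one side) as the step requiring a combinatorial check; this is a fair and slightly more careful rendering of the same argument.
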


\begin{proof}
First, since the group $\mathcal{G}_p^X$ has size $|\mathcal{G}_p^X|=2^{n_p-2}$, we obtain a lower
bound on the maximal overlap $\Lambda_{\max}\ge |\mathcal{G}_p^X|^{-1/2}$, and
hence an upper bound on the GE given by $E_G \le n_p-2$. Second, in order to
derive a good lower bound on the GE, we consider the natural
bipartition for the honeycomb lattice, see
Fig.~\ref{fig:ColorCode2}.a. One can convince oneself easily that any group
element in $\mathcal{G}_p^X$ must have nontrivial support contained in both the $A$ and
$B$ sets, and therefore the subgroups $\mathcal{G}_p^X(A)$ and $\mathcal{G}_p^X(B)$
are both trivial and contain only the identity element. For the color code, the reduced density matrix of $A$ also satisfies the property $\rho_A^2 = (|\mathcal{G}_p^X(A)||\mathcal{G}_p^X(B)|/|\mathcal{G}_p^X|) \rho_A$ \cite{Kargarian}. Therefore,
$\Lambda_{\max}^2\le 1/|\mathcal{G}_p^X|$, and hence $E_G \ge n_p -2$. We therefore
arrive at $E_G=n_p-2$, which proves the theorem.
\end{proof}

\begin{figure}
 \includegraphics[width=9cm]{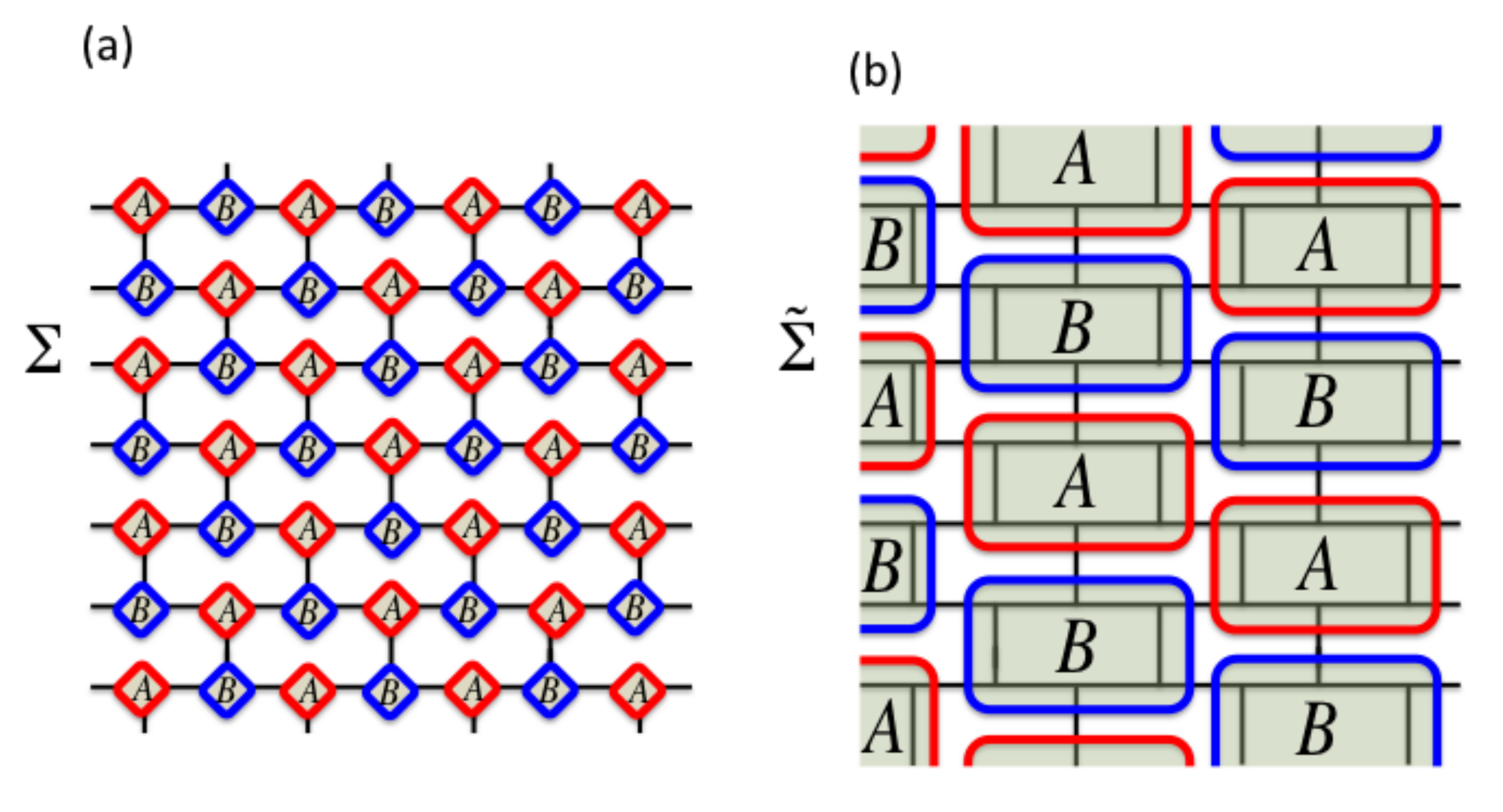}
  \caption{\label{fig:ColorCode2} (color online) (a) Natural bipartition of the honeycomb lattice $\Sigma$ into $A$ and $B$ sets (respectively red and blue). (b) Possible blocking and bipartition of lattice $\widetilde{\Sigma}$, obtained after applying disentanglers corresponding to the stabilisers within each block to the honeycomb lattice $\Sigma$ of the color code. We also show the relevant bipartition into $A$ and $B$ sets (again red and blue).}
\end{figure}

Some comments are in order. First, we see that the term of topological origin is $E_\gamma = 2$, which is once again identical to the topological entanglement entropy \cite{Kargarian}. Second, this topological geometric entanglement is
twice that of the toric code, which is a direct consequence that the color
code model we consider here is equivalent to two copies of the toric code under some given RG procedure \cite{Bombin2}. Third, a similar result holds also for the ground state associate with the group $\mathcal{G}_p^Z$ generated by the product of $B_p^Z$ operators. Fourth, excitations and other ground states related by local operations will have the same entanglement. And fifth, the GE of blocks will consist again of an area-law term plus a topological contribution which will be again $E_\gamma = 2$. This can be seen, e.g. by choosing the blocking from Fig.~\ref{fig:ColorCode2}.b, and appropriately disentangling qubits within each block.

\subsection{Color codes and self-orthogonal classical codes}

Here we provide a generalisation of theorem \ref{bobo} which will allow to compute the GE of color code states in a much broader setting i.e. for arbitrary lattices (2-colexes) and arbitrary genus. The technique used to arrive at this generalisation is entirely different from the proof of theorem \ref{bobo}. It involves connections between color codes and classical coding theory which were found in the recent work \cite{Va13}.

We consider a color code defined on an arbitrary 2-colex embedded in a surface of arbitrary genus. In analogy with (\ref{ccstate}) we consider the state \beq\label{TCC_state} |\mathbf{0}\rangle = \frac{1}{\sqrt{|{\cal G}_p^X|}} \sum_{g\in {\cal G}_p^X} g|0\rangle^{\otimes n}.\eeq
We will show here that the result in theorem \ref{bobo} extends to all states $|\mathbf{0}\rangle$:

\begin{theorem}\label{thm_GE_color_general}
Consider the color code ground state $|\mathbf{0}\rangle$ for an arbitrary 2-colex $\Sigma$ embedded in a surface of arbitrary genus. The GE of spins for this state is \beq E_G = n_p -2.\eeq
\end{theorem}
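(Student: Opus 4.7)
The plan is to exploit the characterisation, already set up in the preceding text, of $|\mathbf{0}\rangle$ as a Calderbank--Shor--Steane state of self-orthogonal type, and then invoke the general geometric-entanglement formula for such states established in \cite{Va13}. Concretely, write $|\mathbf{0}\rangle = |S|^{-1/2}\sum_{x\in S}|x\rangle$, with $S\subseteq\mathbb{Z}_2^n$ the classical linear code whose generators are the indicator vectors of the $B_p^X$ operators. Self-orthogonality $S\subseteq S^{\perp}$ follows from two combinatorial properties of any 2-colex: every plaquette has even size (so each generator has even weight and is orthogonal to itself in $\mathbb{Z}_2^n$) and any two plaquettes share an even number of vertices (so distinct generators are mutually orthogonal). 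Both facts are consequences of 3-regularity plus 3-colorability and hence hold on every 2-colex, independently of the genus.

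Next I would count the dimension of $S$. On any 2-colex the only relations among the $B_p^X$ are the two independent colour constraints $\prod_{p\in c} B_p^X = \prod_v \sigma_x^{[v]}$ (the third colour being forced by the other two), yielding $|S| = |\mathcal{G}_p^X| = 2^{n_p-2}$ regardless of the underlying surface. The upper bound on the GE is then immediate: since $0\in S$, the product state $|0\rangle^{\otimes n}$ gives $|\langle 0|\mathbf{0}\rangle|^{2} = |S|^{-1}$, so $\Lambda_{\max}^{2}\ge 2^{-(n_p-2)}$ and $E_G \le n_p-2$.

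For the matching lower bound I would appeal to the key statement from \cite{Va13}: for any CSS state of self-orthogonal type one has $\Lambda_{\max}^{2} = |S|^{-1}$ exactly. The mechanism is that self-orthogonality endows the state with a group of local Clifford symmetries (Hadamard-type dualities exchanging $X$- and $Z$-sectors on the code support) which constrain the optimal product state and rule out any enhancement over the trivial computational-basis overlap. Combining this with the upper bound gives $E_G = n_p-2$ for arbitrary 2-colexes and arbitrary genus. The main obstacle, and the place where some care is needed, is to make fully explicit that the dimension count $\dim S = n_p - 2$ is genuinely insensitive to the topology of the surface in the $X$-sector (non-contractible $X$-type string operators act outside the stabiliser and do not add generators to $S$), and that no further relations among the $B_p^X$ appear on higher-genus 2-colexes; once these structural facts are confirmed, the theorem is a direct application of the self-orthogonal CSS formula together with the parity arguments above.
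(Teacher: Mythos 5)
Your route is the same one the paper takes: identify $|\mathbf{0}\rangle$ as a CSS state $|C\rangle$ of self-orthogonal type with $|C|=|\mathcal{G}_p^X|=2^{n_p-2}$, obtain $E_G\le n_p-2$ from the overlap with a computational-basis codeword, and then claim the matching bound $\Lambda_{\max}\le|C|^{-1/2}$. Your combinatorial verification of self-orthogonality (every plaquette has even size, two distinct plaquettes share an even number of vertices, both following from 3-regularity plus 3-colorability) and your check that exactly two relations among the $B_p^X$ survive on arbitrary genus are correct; these are precisely the facts the paper delegates to Lemma~\ref{thm_TCC_selforthogonal} and Eq.~(\ref{order_C_p^X}), and spelling them out is a worthwhile addition.

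The gap is at the step $\Lambda_{\max}\le|C|^{-1/2}$, which is the entire substance of the theorem. You invoke it as a black-box formula and sketch a mechanism --- a group of local Clifford symmetries that ``rule out any enhancement over the trivial computational-basis overlap'' --- but a symmetry group of the state does not by itself forbid a product state from beating the codeword overlap. The paper itself contains a counterexample to this heuristic: the toric code ground state $\ket{0,0}$ on the honeycomb lattice is a CSS state (not self-orthogonal) stabilised by a large symmetry group, yet $\ket{+}^{\otimes n}$ achieves overlap $\sqrt{|\mathcal{G}_s|/2^n}$, strictly larger than $|\mathcal{G}_s|^{-1/2}$, so that $E_G=n_p+1<\log_2|\mathcal{G}_s|$. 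Self-orthogonality must therefore enter quantitatively, and the paper's Theorem~\ref{thm_E_g_selforthogonal} does this as follows: since $|C\rangle$ has real nonnegative amplitudes, the optimal product state $\ket{\Phi}=\bigotimes_i(a_i\ket{0}+b_i\ket{1})$ may be taken with $a_i,b_i\ge 0$; Lemma~\ref{thm_overlap_expectation} (the part actually imported from \cite{Va13}) converts the overlap into an expectation value, $\sqrt{|C|}\,\langle C|\Phi\rangle=\langle C|A|C\rangle$ with $A=\bigotimes_i A_i$ and each $A_i$ a real orthogonal $2\times 2$ reflection built from $(a_i,b_i)$, whence $|\langle C|\Phi\rangle|\le\|A\|\,|C|^{-1/2}=|C|^{-1/2}$. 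That identity --- valid only for self-orthogonal codes --- is the missing ingredient your proposal needs to supply (or to cite precisely) before the conclusion follows.
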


Before proving the result, we develop some preliminary material. A linear subspace $C\subseteq \mathbb{Z}_2^n$ is called a (classical) binary linear code of length $n$. The elements of $C$ are called its codewords. For every binary linear code $C$ of length $n$, define an $n$-qubit state \beq|C\rangle:=\frac{1}{\sqrt{|C|}}\sum_{u\in C}|u\rangle.\eeq We shall refer to any state of this kind as a Calderbank-Shor-Steane (CSS) state. It easily follows from (\ref{TCC_state}) that $|\mathbf{0}\rangle$ is a CSS state. More precisely, we have  $|\mathbf{0}\rangle \equiv |C_p^X\rangle$  where the linear code $C_p^X$ is defined as follows: writing every element $g\in {\cal G}_p^X$ as \beq g= X^{u_1}\otimes\cdots\otimes X^{u_n}\eeq with $u_i \in \mathbb{Z}_2$, the code $C_p^X$ is simply given by the collection of all bit strings $(u_1, \dots, u_n)$ arising from elements of ${\cal G}_p^X$ in this way. In particular, one has \beq\label{order_C_p^X} |C_p^X| = |{\cal G}_p^X| = 2^{n_p-2},\eeq which will be important below.

It turns out that color code states are CSS states of a particular kind. A linear code $C$ is called self-orthogonal if every two codewords are orthogonal: $s^Tt=0$ for every $s, t\in C$. A CSS state for which the underlying classical code is self-orthogonal will be denoted \emph{CSS state of self-orthogonal type}. In \cite{Va13} the following was shown:

\begin{lemma}\label{thm_TCC_selforthogonal}
Consider a color code associated with an arbitrary 2-colex $\Sigma$. Then the ground state $|\mathbf{0}\rangle = |C_p^X\rangle$ is a CSS state of self-orthogonal type.
\end{lemma}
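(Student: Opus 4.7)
The plan is to recast self-orthogonality as a combinatorial statement about the $2$-colex $\Sigma$ and then verify it from the local incidence picture. Writing every $g\in\mathcal{G}_p^X$ as $X^{u_1}\otimes\cdots\otimes X^{u_n}$, the code $C_p^X$ is the $\mathbb{Z}_2$-span of the characteristic vectors $\chi_p\in\mathbb{Z}_2^n$ of the plaquettes (with $1$'s exactly on the vertices of $p$), because the generators $B_p^X = X^{\chi_p}$ generate $\mathcal{G}_p^X$. By bilinearity of the inner product over $\mathbb{Z}_2$, it suffices to verify $\chi_p^{T}\chi_q\equiv 0\pmod{2}$ for every (not necessarily distinct) pair of plaquettes $p,q$; equivalently, $|p\cap q|$ must be even in every case.

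The key local observation is that at any vertex $v$ of $\Sigma$, the three faces meeting at $v$ are pairwise adjacent (along the three edges incident to $v$), so by $3$-colourability they must exhibit exactly one face of each of the three colours. Two immediate consequences follow: (i) two distinct plaquettes of the same colour are vertex-disjoint, so $|p\cap q|=0$ in that case; and (ii) $n$ is even, by the handshake lemma $3n=2|E|$. A third consequence handles $\chi_p^{T}\chi_p$: walking around $\partial p$, the face across each boundary edge has one of the two colours different from that of $p$, and at each vertex of $\partial p$ the two boundary edges separate $p$ from faces of the two different remaining colours (the third edge at that vertex absorbs the remaining colour pairing). Hence the neighbouring colour alternates along $\partial p$, forcing $|p|$ to be even.

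It remains to handle $p\neq q$ of different colours. Here I would show that $p\cap q$ decomposes as a disjoint union of endpoint-pairs of \emph{shared edges}, namely edges that separate $p$ from $q$. Any $v\in p\cap q$ lies on a unique such edge at $v$ (the unique edge at $v$ between the two prescribed colours), and this edge automatically lies on both plaquettes; conversely both endpoints of any shared edge lie in $p\cap q$; and two distinct shared edges cannot meet at a common vertex, again because each vertex admits a unique edge between any prescribed pair of colours. Therefore $|p\cap q|$ equals twice the number of shared edges, hence is even. The main obstacle in the argument is precisely this last bookkeeping step—showing that the shared-edge description genuinely partitions $p\cap q$ without double-counting—but once the "three colours at a vertex" lemma is in hand everything is essentially forced. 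Note also that the global relation $\prod_{p\in\mathrm{red}}B_p^X=\prod_v\sigma_x^{[v]}$ places the all-ones vector into $C_p^X$, yet this introduces no extra difficulty, since its self-inner product is $n$, which is even by (ii) above.
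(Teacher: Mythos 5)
Your proof is correct. Note, however, that the paper itself does not prove this lemma at all: it is quoted as a result shown in Ref.~\cite{Va13} (Van den Nest and D\"ur), so there is no in-paper argument to compare against. What you supply is a self-contained combinatorial verification, and it holds up. The reduction to checking $\chi_p^{T}\chi_q\equiv 0\pmod 2$ on the generators (including $p=q$) is legitimate by bilinearity, and the ``three distinct colours at every degree-3 vertex'' observation correctly drives all three cases: same-colour distinct faces are vertex-disjoint; $|p|$ is even because the neighbouring face colour alternates around $\partial p$ (the two boundary edges of $p$ at a vertex $v$ face the two \emph{other} colours, and the corner structure at a degree-3 vertex rules out $p$ meeting $v$ twice, so $\partial p$ is a union of even cycles); and for $p\neq q$ of different colours the map sending $v\in p\cap q$ to the unique $p$--$q$ separating edge at $v$ is exactly two-to-one onto the shared edges, so $|p\cap q|$ is even. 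The last step is indeed the only place requiring care, and your uniqueness argument (one edge per colour pair at each vertex) closes it. One small remark: the closing paragraph about the all-ones vector is redundant, since once the generators are pairwise orthogonal (including self-orthogonal), every element of the span is automatically handled; it does no harm, but it is not needed.
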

Next we provide a calculation of the GE of arbitrary CSS states of self-orthogonal type---by lemma \ref{thm_TCC_selforthogonal} this will yield the GE of color code ground states. Note however that not all CSS states of self-orthogonal type are color code ground states i.e. our results extend beyond the color code setting. It is also interesting to point out that toric code ground states, even though they are CSS states, are not of self-orthogonal type.

Consider an $n$-qubit system and let $|\Phi\rangle=\bigotimes |\phi^{[i]}\rangle$ be an $n$-qubit product state, where $|\phi^{[i]}\rangle = a_i|0\rangle + b_i|1\rangle$. Define an associated $n$-qubit tensor product operator as follows:
 \beq\label{A} A:= \bigotimes_i A_i, \quad \mbox{where} \quad A_i := \left[ \begin{array}{cc}a_i & b_i \\ b_i & -a_i\end{array} \right].\eeq
Now consider an arbitrary self-orthogonal linear code $C$ of length $n$ and the associated $n$-qubit CSS state $|C\rangle$. The following lemma, proved in \cite{Va13},  relates the overlap $\langle C| \Phi\rangle$ to the expectation value $\langle C| A|C\rangle$:

\begin{lemma}\label{thm_overlap_expectation}
Let $C$ be a binary self-orthogonal linear code of length $n$ and let $|\Phi\rangle$ be an $n$-qubit product state. Then \beq \sqrt{|C|} \cdot \langle C| \Phi\rangle = \langle C| A|C\rangle.\eeq
\end{lemma}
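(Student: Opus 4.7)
The plan is to compute both sides of the claimed identity directly and match them, with the self-orthogonality and linearity of $C$ doing all the work.

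First I would expand the left-hand side. Since $|C\rangle = |C|^{-1/2}\sum_{u \in C}|u\rangle$ and $|\Phi\rangle$ is a product state, one immediately gets
\beq
\sqrt{|C|}\cdot \langle C|\Phi\rangle \;=\; \sum_{u\in C}\prod_{i=1}^n c_i(u_i),
\eeq
where I abbreviate $c_i(0):=a_i$ and $c_i(1):=b_i$, so that $\langle u_i|\phi^{[i]}\rangle = c_i(u_i)$.

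Next I would expand the right-hand side. Writing $\langle C|A|C\rangle = |C|^{-1}\sum_{u,v\in C}\prod_i \langle u_i|A_i|v_i\rangle$, I would compute the four entries of $A_i$ directly from its definition in Eq.~(\ref{A}) and observe the compact formula
\beq
\langle u_i|A_i|v_i\rangle \;=\; (-1)^{u_i v_i}\, c_i\bigl(u_i\oplus v_i\bigr),
\eeq
which one verifies by checking the four cases $(u_i,v_i)\in\{0,1\}^2$. Multiplying over $i$ and collecting signs gives
\beq
\langle u|A|v\rangle \;=\; (-1)^{u\cdot v}\prod_{i=1}^n c_i\bigl((u\oplus v)_i\bigr),
\eeq
with $u\cdot v = \sum_i u_i v_i \bmod 2$ the standard $\mathbb{Z}_2$ inner product.

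Now I would invoke the two defining properties of $C$. Self-orthogonality yields $u\cdot v \equiv 0\pmod 2$ for every pair $u,v\in C$, so the sign $(-1)^{u\cdot v}$ is identically $+1$ on the sum. Linearity lets me change variables: set $w:=u\oplus v\in C$; for each fixed $w\in C$ there are exactly $|C|$ pairs $(u,v)\in C\times C$ with $u\oplus v = w$ (namely, one for each $u\in C$, with $v=u\oplus w$). Hence
\beq
\langle C|A|C\rangle \;=\; \frac{1}{|C|}\sum_{w\in C}|C|\prod_{i=1}^n c_i(w_i) \;=\; \sum_{w\in C}\prod_{i=1}^n c_i(w_i),
\eeq
which coincides with the expression obtained for the left-hand side, completing the proof.

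There is no genuine obstacle: the entire argument is a direct computation, and the only nontrivial ingredients are the two hypotheses on $C$ (self-orthogonality kills the phase, linearity enables the change of summation variable). The one step that requires care rather than ingenuity is the case-check establishing the compact identity for $\langle u_i|A_i|v_i\rangle$; once that is in hand, everything else is bookkeeping.
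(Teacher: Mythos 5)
Your proof is correct and complete: the case-check for $\langle u_i|A_i|v_i\rangle=(-1)^{u_iv_i}c_i(u_i\oplus v_i)$ is right, self-orthogonality does kill the phase $(-1)^{u\cdot v}$, and the change of variables $w=u\oplus v$ (valid by linearity, with exactly $|C|$ preimages per $w$) matches the two sides. Note that the paper itself does not prove this lemma — it is imported from Ref.~\cite{Va13} — so there is no in-paper argument to compare against; your direct computation supplies a self-contained proof of precisely the kind one would expect, and it works for arbitrary complex amplitudes $a_i,b_i$ (the reality/nonnegativity assumption is only needed later, in Theorem~\ref{thm_E_g_selforthogonal}, to bound $\|A_i\|$).
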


Lemma \ref{thm_overlap_expectation} will now be used to prove:

\begin{theorem}\label{thm_E_g_selforthogonal}
Let $|C\rangle$ be an arbitrary $n$-qubit CSS state of self-orthogonal type. Then the GE of spins is \beq E_G(|C\rangle) = \log |C|.\eeq
\end{theorem}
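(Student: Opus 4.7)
The plan is to sandwich $\Lambda_{\max}^2(|C\rangle)$ between $1/|C|$ and $1/|C|$ by exhibiting an explicit product state saturating the lower bound and invoking Lemma~\ref{thm_overlap_expectation} to obtain the matching upper bound. The trivial direction first: since $C$ is a linear code, the all-zeros bit string belongs to $C$, so taking the product state $|\Phi\rangle = |0\rangle^{\otimes n}$ yields $\langle \Phi | C \rangle = 1/\sqrt{|C|}$, hence $\Lambda_{\max}^2 \ge 1/|C|$ and $E_G(|C\rangle) \le \log_2|C|$.

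For the reverse inequality, I would apply Lemma~\ref{thm_overlap_expectation} to rewrite, for an arbitrary product state $|\Phi\rangle = \bigotimes_i |\phi^{[i]}\rangle$ with $|\phi^{[i]}\rangle = a_i|0\rangle + b_i|1\rangle$,
\begin{equation}
|\langle C | \Phi \rangle|^2 = \frac{1}{|C|}\,|\langle C | A | C \rangle|^2,
\end{equation}
with $A = \bigotimes_i A_i$ as in~(\ref{A}). The goal then reduces to showing $|\langle C | A | C \rangle| \le 1$ for every choice of amplitudes. The key observation is that, because $|C\rangle$ has non-negative real amplitudes in the computational basis, the maximum overlap is attained by a product state whose local amplitudes $a_i, b_i$ are themselves non-negative reals: indeed, replacing each $|\phi^{[i]}\rangle$ by $|a_i||0\rangle + |b_i||1\rangle$ preserves normalisation and can only increase $|\langle C | \Phi\rangle|$ (an argument of exactly the same flavour as the one used in the double semion proof).

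Under this WLOG reduction, each local matrix becomes $A_i = a_i Z + b_i X$ with $a_i, b_i \ge 0$ real and $a_i^2 + b_i^2 = 1$; a direct calculation gives $A_i A_i^T = I$, so $A_i$ is a real orthogonal (in fact Hermitian) involution, and therefore $A$ itself is unitary. Consequently, $|\langle C | A | C \rangle| \le \||C\rangle\|\cdot \|A|C\rangle\| = 1$, which plugged into the displayed equation yields $|\langle C | \Phi \rangle|^2 \le 1/|C|$. Maximising over $|\Phi\rangle$ gives $\Lambda_{\max}^2 \le 1/|C|$, hence $E_G(|C\rangle) \ge \log_2|C|$, and combining with the first paragraph produces the claimed equality.

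The potentially delicate step is the reduction to non-negative real local amplitudes: one must justify that the replacement $|\phi^{[i]}\rangle \mapsto |a_i||0\rangle + |b_i||1\rangle$ indeed does not decrease the modulus of the overlap. This follows from the triangle inequality applied termwise to $\langle \Phi|C\rangle = \tfrac{1}{\sqrt{|C|}}\sum_{u\in C}\prod_i \overline{(\phi^{[i]})_{u_i}}$ together with the fact that each product $\prod_i (\phi^{[i]})_{u_i}$ has modulus exactly $\prod_i |a_i|^{1-u_i}|b_i|^{u_i}$, which is precisely the corresponding coefficient of the non-negative product state. Once this reduction is granted, the rest is just the spectral bound for a unitary operator and the explicit evaluation at $|0\rangle^{\otimes n}$, which are routine; note in particular that self-orthogonality of $C$ enters only through Lemma~\ref{thm_overlap_expectation}, not again in the bounding argument.
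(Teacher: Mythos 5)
Your proposal is correct and follows essentially the same route as the paper's proof: lower-bound $\Lambda_{\max}$ by the overlap with a codeword basis state, then use Lemma~\ref{thm_overlap_expectation} together with the reduction to non-negative real local amplitudes and the fact that each $A_i$ is real orthogonal (so $\|A\|=1$) to get the matching upper bound. The only difference is cosmetic — you spell out the triangle-inequality justification for the non-negativity reduction, which the paper asserts without elaboration.
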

\begin{proof} As before, let $\Lambda_{\mbox{\scriptsize{max}}}$ denote the maximal value of $|\langle C|\Phi\rangle|$ when $|\Phi\rangle$ ranges over all product states. First, taking an arbitrary $u\in C$ yields $|\langle C|u\rangle|= |C|^{-\frac{1}{2}}$ so that we obtain the lower bound $\Lambda_{\mbox{\scriptsize{max}}}\geq |C|^{-\frac{1}{2}}$. Second, since $|C\rangle$ has real and nonnegative amplitudes in the standard basis, the maximal value of $|\langle C|\Phi\rangle|$  will be reached by a product state which has real,  nonnegative amplitudes as well. Consider an arbitrary product state $|\Phi\rangle$ with real and nonnegative amplitudes and the associated tensor product operator $A$ as in (\ref{A}). Using lemma \ref{thm_overlap_expectation} we find
\begin{eqnarray} |\langle C| \Phi\rangle| &=& \frac{1}{\sqrt{|C|}}|\langle C| A|C\rangle| \leq \frac{1}{\sqrt{|C|}}\|A\| \nonumber\\ &=& \frac{1}{\sqrt{|C|}}\prod_i\|A_i\|
\end{eqnarray}
 where $\|\cdot\|$ denotes the operator norm. Since each  $|\phi^{[i]}\rangle$ is a unit vector with real and nonnegative amplitudes, each $A_i$ is a real orthogonal matrix. Hence $\|A_i\|= 1$. This proves the upper bound $\Lambda_{\mbox{\scriptsize{max}}}\leq |C|^{-\frac{1}{2}}$ which matches the previously found lower bound. The result now follows immediately. \end{proof}

Combining lemma \ref{thm_TCC_selforthogonal}, theorem \ref{thm_E_g_selforthogonal} and Eq. (\ref{order_C_p^X}) immediately proves theorem \ref{thm_GE_color_general}.

\section{General Formalism for the Bound Method}
\label{sec:general_bounds}

In order to extend our results to the non-Abelian setting (e.g. quantum double models), it is worthwhile to generalise the bounding technique that we have used so far. Looking back, we have calculated the GE of quantum states with a certain (Abelian) group symmetry, and our bounding strategy involved two steps. First, we chose suitable product states to lower bound the maximal over- lap (and hence to upper bound the GE). Second, we picked certain bipartitions and determined the maximal overlap with bipartite product states in order to upper bound the maximal overlap with multipartite product states (and hence to lower bound the GE). As we have seen, this procedure gives \emph{tight} bounds in almost all the cases considered.

We shall now distill the essence of this bounding strategy and derive general bounds on the GE of quantum states with a non-Abelian group symmetry. Other than placing all of our previous examples under a common roof these bounds may even be of independent interest beyond the present study of topological order.

\subsection{Overview}

Let $G$ be a finite group. Suppose its $n$-fold direct product
$\mathcal{G}=G^{\times n}$ acts on some $m$-partite Hilbert
space~$\mathcal{H}=\mathcal{H}_1\otimes\dots\otimes\mathcal{H}_m$
via product operators~$O^t=O_1^t\otimes\dots\otimes O_m^t$.
Consider states of the form
\begin{equation}
    \label{eq:general_state}
    \sum_{t\in
          \mathcal{G}}
    O^t\mkern2mu
    \ket{\Phi}
\end{equation}
where $\ket{\Phi}=\ket{\phi^{[1]}}\otimes\dots\otimes\ket{\phi^{[m]}}$
is some reference product state such that
any two local states~$O_k^t\mkern2mu\ket{\phi^{[k]}}$
and~$O_k^u\mkern2mu\ket{\phi^{[k]}}$ are either identical or
orthogonal.

We can define the \emph{global stabilizer}~$\mathcal{G}_\Phi\subset\mathcal{G}$
as the subgroup which leaves the reference state~$\ket{\Phi}$ invariant.
Note that $\mathcal{G}_\Phi$ need
\emph{not} be normal in general, so the collection
$\mathcal{G}'\coloneqq\mathcal{G}/\mathcal{G}_\Phi$ of cosets is not necessarily
a group. It is clear that the state
\begin{equation}
    \label{eq:state}
    \ket{\Psi}
    =\frac{1}
          {\sqrt{\mathcal{G}'}}
     \sum_{t\in
           \mathcal{G}'}
     O^t\mkern2mu
     \ket{\Phi}
\end{equation}
is correctly normalized. Picking any component~$O^t\mkern2mu\ket{\Phi}$
we get a trivial lower bound
on the maximal overlap~$\Lambda_\mathrm{max}(\Psi)$ and hence
by~\eqref{eq:Entrelate} an upper bound
$E_G(\Psi)\le n\log_2\abs{G}-\log_2\abs{\mathcal{G}_\Phi}$.

Now let $\mathcal{H}=\mathcal{H}_A\otimes\mathcal{H}_B$ a bipartition
of the Hilbert space into subsystems~$A$ and~$B$
with the reference state $\ket{\Phi}=\ket{\phi_A}\otimes\ket{\phi_B}$
partitioned accordingly.
For each operator~$O^t$ we define a truncated
operator~$O_X^t$ as the restriction of~$O^t$ to~$\mathcal{H}_X$
and the identity in~$\mathcal{H}_{\bar{X}}$.
We furthermore define the \emph{local stabiliser}
$\mathcal{G}_\Phi(X)\subset\mathcal{G}$ for subsystem~$X$ as the
subgroup whose truncated operators~$O_X^t$ leave the reference
state~$\ket{\Phi}$ invariant. It is clear that the global stabiliser is
contained in any local stabiliser, so
$\mathcal{G}_\Phi'(X)\coloneqq\mathcal{G}_\Phi(X)/\mathcal{G}_\Phi$
makes sense.
Note that again $\mathcal{G}_\Phi'(X)$ is not necessarily a group.
Nevertheless, for bipartitions with the property
\begin{equation}
    \label{eq:relation_local_stabilizers}
    \mathcal{G}_\Phi(A)
    \supseteq
     \mathcal{G}_\Phi(B)
\end{equation}
we will find (see Lemma~\ref{lem:spectrum} below) that
the largest eigenvalue of the reduced density operator~$\rho_A$
equals~$\abs{\mathcal{G}_\Phi(A)}\mkern2mu\abs{\mathcal{G}_\Phi(B)}/\bigl(\abs{\mathcal{G}}\mkern2mu\abs{\mathcal{G}_\Phi}\bigr)$
which by~\eqref{eq:single-copy} immediately implies a lower bound on
the \textsc{GE}. We can summarize both bounds on the \textsc{GE}
in the following

\begin{theorem}[General bound]
    Let $\ket{\Psi}$ be a state of the form~\eqref{eq:general_state}.
    If there is a bipartition $\mathcal{H}=\mathcal{H}_A\otimes\mathcal{H}_B$
    obeying~\eqref{eq:relation_local_stabilizers}, the geometric
    entanglement of~$\ket{\Psi}$ is bounded by
    \begin{equation}
        -\log_2
        \frac{\abs{\mathcal{G}_\Phi(A)}\mkern2mu
              \abs{\mathcal{G}_\Phi(B)}}
             {\abs{\mathcal{G}_\Phi}}
        \le
         E_G(\Psi)-
         n
         \log_2
         \abs{G}
        \le
         -\log_2
         \abs{\mathcal{G}_\Phi}.
    \end{equation}
\end{theorem}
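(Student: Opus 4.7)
The strategy mirrors the two-sided bounding argument used for the toric code in Theorems~\ref{thm:tc_spins} and~\ref{thm:tc_blocks}, adapted to the non-Abelian group-theoretic setup. The upper bound $E_G(\Psi)-n\log_2|G|\le -\log_2|\mathcal{G}_\Phi|$ will come from choosing $|\Phi\rangle$ itself as a trial product state. The orthogonality-or-identity hypothesis on the local states $O_k^t|\phi^{[k]}\rangle$, applied site by site, ensures that $O^t|\Phi\rangle$ and $O^u|\Phi\rangle$ are orthogonal whenever $[t]\ne [u]$ in $\mathcal{G}'$: the two coset representatives must disagree at some site, and at that site the local factors are orthogonal rather than identical. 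Thus \eqref{eq:state} is manifestly a uniform superposition of $|\mathcal{G}'|$ orthonormal product states, so $|\langle\Phi|\Psi\rangle|^2 = 1/|\mathcal{G}'| = |\mathcal{G}_\Phi|/|\mathcal{G}|$, and using $|\mathcal{G}|=|G|^n$ in \eqref{eq:Entrelate} delivers the upper bound.

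For the lower bound I would use the standard monotonicity $\Lambda_\mathrm{max}^2\le(\Lambda_\mathrm{max}^{[A:B]})^2=\nu_1(\rho_A)$ obtained by coarsening the product partition to the bipartition $A|B$ and invoking the single-copy identity \eqref{eq:single-copy}. The heart of the matter is then the promised Lemma~\ref{lem:spectrum}: under \eqref{eq:relation_local_stabilizers} the reduced state $\rho_A$ is proportional to a projector with largest eigenvalue $|\mathcal{G}_\Phi(A)|\,|\mathcal{G}_\Phi(B)|/(|\mathcal{G}|\,|\mathcal{G}_\Phi|)$; substituting this into $E_G\ge -\log_2\nu_1(\rho_A)$ and subtracting $\log_2|\mathcal{G}|$ reproduces exactly the left-hand inequality of the theorem. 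To prove the lemma I would compute $\rho_A=\mathrm{Tr}_B|\Psi\rangle\langle\Psi|$ directly from \eqref{eq:state}, trading the coset sum for an unrestricted sum over $\mathcal{G}$ (each coset representative reappearing $|\mathcal{G}_\Phi|$ times). Writing $O^t=O_A^t\otimes O_B^t$, the partial trace produces the factor $\langle\phi_B|(O_B^u)^\dagger O_B^t|\phi_B\rangle$, which by the orthogonality-or-identity property equals $1$ if $u^{-1}t\in\mathcal{G}_\Phi(B)$ and vanishes otherwise. Changing variables to $(u,s)$ with $s=u^{-1}t\in\mathcal{G}_\Phi(B)$ and using the hypothesis \eqref{eq:relation_local_stabilizers} to conclude $O_A^s|\phi_A\rangle=|\phi_A\rangle$ for every such $s$, the $s$-sum factors out as the scalar $|\mathcal{G}_\Phi(B)|$. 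The remaining operator $\sum_{u\in\mathcal{G}}O_A^u|\phi_A\rangle\langle\phi_A|(O_A^u)^\dagger$ decomposes over cosets of $\mathcal{G}_\Phi(A)$: each coset contributes the same rank-one projector, and the orthogonality-or-identity property (again applied at a single distinguishing site) guarantees that distinct cosets yield mutually orthogonal unit vectors. Hence that operator equals $|\mathcal{G}_\Phi(A)|\,P_A$ with $P_A$ a projector of rank $|\mathcal{G}|/|\mathcal{G}_\Phi(A)|$, and dividing by the norm squared of the unreduced sum over $\mathcal{G}$ produces the claimed eigenvalue.

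The main obstacle is the bookkeeping of the three stabiliser sets $\mathcal{G}_\Phi$, $\mathcal{G}_\Phi(A)$, $\mathcal{G}_\Phi(B)$, none of which is normal in $\mathcal{G}$ in general, together with the coset spaces $\mathcal{G}'$ and $\mathcal{G}'_\Phi(X)$ which need not be groups. The one-sided containment \eqref{eq:relation_local_stabilizers} is precisely the structural hypothesis that lets the $A$-side action trivialise whenever $s\in\mathcal{G}_\Phi(B)$; without it the double sum in $\rho_A$ does not collapse to a scalar multiple of a projector, the spectrum picks up genuinely mixed contributions, and the single-copy lower bound on $E_G$ no longer evaluates cleanly in terms of group orders. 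A second point worth handling carefully is the promotion of the orthogonality-or-identity property from local spins to the $A$-wide vectors $O_A^u|\phi_A\rangle$ labelled by cosets of $\mathcal{G}_\Phi(A)$; this follows site by site once one notes that representatives of different cosets differ by an element outside $\mathcal{G}_\Phi(A)$ and therefore produce orthogonal local factors on at least one spin of $A$.
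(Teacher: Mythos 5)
Your proposal is correct and follows essentially the same route as the paper: the upper bound comes from the overlap of $\ket{\Psi}$ with one of its orthogonal product components, and the lower bound comes from coarsening to the bipartition $A{:}B$ and showing, via the partial trace and the orthogonality-or-identity property together with $\mathcal{G}_\Phi(A)\supseteq\mathcal{G}_\Phi(B)$, that $\rho_A$ is a flat-spectrum operator with largest eigenvalue $\abs{\mathcal{G}_\Phi(A)}\abs{\mathcal{G}_\Phi(B)}/(\abs{\mathcal{G}}\abs{\mathcal{G}_\Phi})$ (the paper's Lemmas~\ref{lem:rho} and~\ref{lem:spectrum}). The only cosmetic difference is that you exhibit the projector decomposition of $\sum_u O_A^u\ketbra{\phi_A}{\phi_A}O_A^{u^{-1}}$ over cosets of $\mathcal{G}_\Phi(A)$ explicitly, whereas the paper reaches the same conclusion via the idempotency relation $\rho_A^2\propto\rho_A$.
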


As is evident the derivation of this general bound is
\emph{independent} of the reference state~$\ket{\Phi}$ or
any physical model in which the state~$\ket{\Psi}$ might arise,
including any geometry or topology that might be involved.
So the problem of obtaining actual bounds for specific states in specific models
with specific geometries (topologies) reduces to the much simpler problem
of analysing the interplay between the group~$\mathcal{G}$,
the global stabiliser~$\mathcal{G}_\Phi$
and the local stabilisers~$\mathcal{G}_\Phi(X)$ in the particular case at hand. Additionally,
specific models will often allow us to characterise these stabilisers
in purely geometric (combinatorial) terms, which provides a constructive
way to find bipartitions obeying~\eqref{eq:relation_local_stabilizers}.
Remarkably, this strategy even
yields \emph{exact} values for the geometric entanglement in many
interesting cases as we will show (and have seen already).
Namely, with a mild assumption in addition to~\eqref{eq:relation_local_stabilizers}
we immediately get

\begin{theorem}
    \label{thm:general_value}
    Let $\ket{\Psi}$ a state of the form~\eqref{eq:general_state}.
    If there is a bipartition $\mathcal{H}=\mathcal{H}_A\otimes\mathcal{H}_B$ such that
    \begin{equation}
        \mathcal{G}_\Phi(A)
        =\mathcal{G}_\Phi(B)
        =\mathcal{G}_\Phi,
    \end{equation}
    the geometric entanglement of~$\ket{\Psi}$ is given by
    \begin{equation}
        E_G(\Psi)
        =n
         \log_2
         \abs{G}-
         \log_2
         \abs{\mathcal{G}_\Phi}.
    \end{equation}
\end{theorem}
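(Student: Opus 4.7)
The plan is to derive this theorem as an immediate corollary of the general bound theorem stated just above. First I would observe that the hypothesis $\mathcal{G}_\Phi(A) = \mathcal{G}_\Phi(B) = \mathcal{G}_\Phi$ trivially satisfies the inclusion $\mathcal{G}_\Phi(A) \supseteq \mathcal{G}_\Phi(B)$ required by~\eqref{eq:relation_local_stabilizers}, so the general bound applies verbatim to the chosen bipartition.

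Next I would substitute the cardinality equalities $\abs{\mathcal{G}_\Phi(A)} = \abs{\mathcal{G}_\Phi(B)} = \abs{\mathcal{G}_\Phi}$ into both sides of the sandwich
\begin{equation}
-\log_2 \frac{\abs{\mathcal{G}_\Phi(A)}\,\abs{\mathcal{G}_\Phi(B)}}{\abs{\mathcal{G}_\Phi}} \;\le\; E_G(\Psi) - n\log_2\abs{G} \;\le\; -\log_2 \abs{\mathcal{G}_\Phi}.
\end{equation}
The lower bound collapses to $-\log_2\bigl(\abs{\mathcal{G}_\Phi}^2/\abs{\mathcal{G}_\Phi}\bigr) = -\log_2\abs{\mathcal{G}_\Phi}$, which coincides exactly with the upper bound. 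Both inequalities are therefore saturated simultaneously, and rearranging yields precisely $E_G(\Psi) = n\log_2\abs{G} - \log_2\abs{\mathcal{G}_\Phi}$.

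All the substantive work is hidden in the general bound, which in turn rests on the spectrum lemma that computes the largest eigenvalue of $\rho_A$ as $\abs{\mathcal{G}_\Phi(A)}\,\abs{\mathcal{G}_\Phi(B)}/\bigl(\abs{\mathcal{G}}\,\abs{\mathcal{G}_\Phi}\bigr)$ and thereby supplies the lower bound via the single-copy entanglement identity~\eqref{eq:single-copy}. The present theorem merely isolates the clean sufficient condition under which the variational upper bound coming from a single component $O^t\ket{\Phi}$ matches this single-copy entanglement of the bipartition $A|B$. Consequently there is essentially no obstacle within this proof itself. The genuine difficulty, which arises downstream when the criterion is applied to concrete models such as the quantum double models treated in Sec.~\ref{sec:quantumdouble}, is the combinatorial task of exhibiting a bipartition whose local stabilisers collapse onto the global one; that construction is entirely model-dependent and lies outside the scope of Theorem~\ref{thm:general_value}.
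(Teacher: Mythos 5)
Your proposal is correct and matches the paper's own treatment: the authors present Theorem~\ref{thm:general_value} as an immediate consequence of the general bound theorem, obtained by substituting $\abs{\mathcal{G}_\Phi(A)}=\abs{\mathcal{G}_\Phi(B)}=\abs{\mathcal{G}_\Phi}$ so that the lower bound $-\log_2\bigl(\abs{\mathcal{G}_\Phi}^2/\abs{\mathcal{G}_\Phi}\bigr)$ collapses onto the upper bound $-\log_2\abs{\mathcal{G}_\Phi}$. Your additional remarks about where the real work lies (the spectrum lemma and the model-dependent construction of suitable bipartitions) are accurate and consistent with the paper's discussion.
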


We would like to mention that if $G$ is Abelian then both~$\mathcal{G}'$ and~$\mathcal{G}_\Phi'(X)$
are indeed groups and one can determine the spectrum of~$\rho_A$ directly
in terms of these. This has been discussed in Ref.~\cite{Hamma}
for the special case~$G=\mathbb{Z}_2$. There our~$\mathcal{G}'$ is denoted by~$G$ and our~$\mathcal{G}_\Phi'(A)$ by~$G_A$
(and similarly for the other subsystem~$B$).

\subsection{Some Details}

Here we will supply some additional details glossed over before.
First the precise definitions of global and local stabilisers.

\begin{definition}[Global stabiliser]
    \begin{equation}
        \mathcal{G}_\Phi
        \coloneqq
         \{t\in
           \mathcal{G}\mid
           O^t\mkern2mu
           \ket{\Phi}
           =\ket{\Phi}\}.
    \end{equation}
\end{definition}

\begin{definition}[Local stabiliser]
    \begin{equation}
        \mathcal{G}_\Phi(X)
        \coloneqq
         \{t\in
           \mathcal{G}\mid
           O_X^t\mkern2mu
           \ket{\Phi}
           =\ket{\Phi}\}.
    \end{equation}
\end{definition}

This shows how we obtain arbitrary reduced density operators
from the state~\eqref{eq:state}.
\begin{lemma}[Reduced density operator]
    \label{lem:rho}
    For any bipartition~$\mathcal{H}=\mathcal{H}_A\otimes\mathcal{H}_B$
    into subsystems~$A$ and~$B$ one has
    \begin{equation}
        \rho_A
        =\frac{1}
              {\abs{\mathcal{G}}\mkern2mu
               \abs{\mathcal{G}_\Phi}}
         \sum_{u\in
               \mathcal{G}}
         \sum_{v\in
               \mathcal{G}_\Phi(B)}
         O_A^u\mkern2mu
         \ketbra{\phi_A}
                {\phi_A}\mkern2mu
         O_A^{v
              u^{-1}}.
    \end{equation}
\end{lemma}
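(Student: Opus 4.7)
My plan is to prove the formula by direct expansion of $|\Psi\rangle\langle\Psi|$, tracing out $B$, and then using the orthogonality/identity dichotomy on the local states to collapse the double sum into the stated form.

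First, I would rewrite the state~\eqref{eq:state} as a sum over all of $\mathcal{G}$ rather than over cosets. Since every coset in $\mathcal{G}' = \mathcal{G}/\mathcal{G}_\Phi$ contains exactly $|\mathcal{G}_\Phi|$ elements and $\mathcal{G}_\Phi$ stabilises $\ket{\Phi}$, each coset representative $O^t\ket{\Phi}$ is produced $|\mathcal{G}_\Phi|$ times if one sums over the whole group; using $|\mathcal{G}'| = |\mathcal{G}|/|\mathcal{G}_\Phi|$, this gives
\begin{equation}
    \ket{\Psi}
    =\frac{1}{\sqrt{\abs{\mathcal{G}}\,\abs{\mathcal{G}_\Phi}}}
     \sum_{t\in\mathcal{G}} O^t\ket{\Phi}.
\end{equation}
This reformulation is what allows the bookkeeping to match the claimed prefactor $1/(\abs{\mathcal{G}}\,\abs{\mathcal{G}_\Phi})$ in the lemma.

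Next I would expand $\ket{\Psi}\bra{\Psi}$ as a double sum over $(s,t)\in\mathcal{G}\times\mathcal{G}$ and invoke the tensor product structure $O^t = O_A^t\otimes O_B^t$ and $\ket{\Phi} = \ket{\phi_A}\otimes\ket{\phi_B}$ (here I am using the author's convention of writing the truncation explicitly). Tracing out $B$ yields
\begin{equation}
    \rho_A
    =\frac{1}{\abs{\mathcal{G}}\,\abs{\mathcal{G}_\Phi}}
     \sum_{s,t\in\mathcal{G}}
     \expval{\phi_B}{O_B^{t^{-1}}O_B^s}{\phi_B}\,
     O_A^s\ketbra{\phi_A}{\phi_A}O_A^{t^{-1}}.
\end{equation}
Since $t\mapsto O^t$ is a representation (it is a product of representations of $G$ at each site), $O_B^{t^{-1}}O_B^s = O_B^{t^{-1}s}$. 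Now I invoke the hypothesis that any two local states $O_k^t\ket{\phi^{[k]}}$ and $O_k^u\ket{\phi^{[k]}}$ are either identical or orthogonal: since $\ket{\phi_B}$ and $O_B^t$ both factor over the sites in $B$, the overall inner product $\braket{\phi_B}{O_B^{t^{-1}s}\phi_B}$ is the product of its per-site factors, each of which is either $0$ or $1$. It equals $1$ exactly when $O_B^{t^{-1}s}\ket{\phi_B} = \ket{\phi_B}$, which by definition of the local stabiliser means $t^{-1}s\in\mathcal{G}_\Phi(B)$, and is $0$ otherwise.

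The last step is then a change of variable. Setting $u\coloneqq s$ and $v\coloneqq t^{-1}s$, the constraint becomes $v\in\mathcal{G}_\Phi(B)$ while $t^{-1} = vu^{-1}$. As $(s,t)$ ranges over all pairs satisfying the selection rule, $(u,v)$ ranges freely over $\mathcal{G}\times\mathcal{G}_\Phi(B)$, giving exactly
\begin{equation}
    \rho_A
    =\frac{1}{\abs{\mathcal{G}}\,\abs{\mathcal{G}_\Phi}}
     \sum_{u\in\mathcal{G}}\sum_{v\in\mathcal{G}_\Phi(B)}
     O_A^u\ketbra{\phi_A}{\phi_A}O_A^{vu^{-1}},
\end{equation}
as claimed. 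The only real subtlety is conceptual rather than technical: one must resist the temptation to treat $\mathcal{G}_\Phi$ as normal (it need not be in the non-Abelian setting), and instead verify that the dichotomy hypothesis, together with the site-wise factorisation of both $\ket{\phi_B}$ and $O_B^t$, is enough to produce a clean $0/1$ selection rule without requiring any normality. Given that, the calculation is essentially a direct manipulation.
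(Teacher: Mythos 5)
Your proposal is correct and follows essentially the same route as the paper's proof: rewrite the coset sum as a full group sum with prefactor $1/\sqrt{\abs{\mathcal{G}}\,\abs{\mathcal{G}_\Phi}}$, expand $\ketbra{\Psi}{\Psi}$ as a double sum, trace out $B$ using the identical-or-orthogonal dichotomy so that the surviving inner products select exactly $\mathcal{G}_\Phi(B)$, and relabel the summation variables. The only cosmetic difference is that the paper performs the change of variables before invoking the selection rule while you do it after, which does not affect the argument.
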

\begin{proof}
    Avoiding the sum over cosets, we may
    rewrite~\eqref{eq:state} as
    \begin{equation*}
        \ket{\Psi}
        =\frac{1}
              {\sqrt{\abs{\mathcal{G}}\mkern2mu
                     \abs{\mathcal{G}_\Phi}}}
         \sum_{t\in
               \mathcal{G}}
         O_A^t\mkern2mu
         \ket{\phi_A}\otimes
         O_B^t\mkern2mu
         \ket{\phi_B},
    \end{equation*}
    hence the global density operator reads
    \begin{equation*}
        \rho
        =\frac{1}
              {\abs{\mathcal{G}}\mkern2mu
               \abs{\mathcal{G}_\Phi}}
         \sum_{u,
               v\in
               \mathcal{G}}
         O_A^u\mkern2mu
         \ketbra{\phi_A}
                {\phi_A}\mkern2mu
         O_A^{v^{-1}}\otimes
         O_B^u\mkern2mu
         \ketbra{\phi_B}
                {\phi_B}\mkern2mu
         O_B^{v^{-1}}.
    \end{equation*}
    By assumption $\{O_B^t\mkern2mu\ket{\phi_B}
    \mid t\in\mathcal{G}/\mathcal{G}_\Phi(B)\}$
    is a set of mutually orthogonal states in subsystem~$B$ so we
    can use it to take the partial trace:
    \begin{equation*}
        \tr_B(\rho)
        =\frac{1}
              {\abs{\mathcal{G}}\mkern2mu
               \abs{\mathcal{G}_\Phi}}
         \sum_{u,
               v\in
               \mathcal{G}}
         O_A^u\mkern2mu
         \ketbra{\phi_A}
                {\phi_A}\mkern2mu
         O_A^{v
              u^{-1}}\mkern-4mu\cdot
         \expval{\phi_B}
                {O_B^v}
                {\phi_B}.
    \end{equation*}
    Since the local expectation value is non-zero precisely for
    $v\in\mathcal{G}_\Phi(B)$ we obtain
    \begin{equation*}
        \tr_B(\rho)
        =\frac{1}
              {\abs{\mathcal{G}}\mkern2mu
               \abs{\mathcal{G}_\Phi}}
         \sum_{u\in
               \mathcal{G}}
         \sum_{v\in
               \mathcal{G}_\Phi(B)}
         O_A^u\mkern2mu
         \ketbra{\phi_A}
                {\phi_A}\mkern2mu
         O_A^{v
              u^{-1}}.
    \end{equation*}
\end{proof}

We turn to those bipartitions obeying~\eqref{eq:relation_local_stabilizers}.
\begin{lemma}[Spectrum of reduced density operators]
    \label{lem:spectrum}
    Let $\mathcal{H}=\mathcal{H}_A\otimes\mathcal{H}_B$ a bipartition such that
    $\mathcal{G}_\Phi(A)\supseteq\mathcal{G}_\Phi(B)$. Then the spectrum
    of~$\rho_A$ is flat with non-zero eigenvalues
    \begin{equation}
        \frac{\abs{\mathcal{G}_\Phi(A)}\mkern2mu
              \abs{\mathcal{G}_\Phi(B)}}
             {\abs{\mathcal{G}}\mkern2mu
              \abs{\mathcal{G}_\Phi}}.
    \end{equation}
\end{lemma}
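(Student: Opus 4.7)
The plan is to diagonalise $\rho_A$ directly by producing a Schmidt-like decomposition of $\ket{\Psi}$ across the $A|B$ cut, rather than manipulating the double sum from Lemma~\ref{lem:rho}. First I would rewrite the state without cosets as $\ket{\Psi} = (|\mathcal{G}||\mathcal{G}_\Phi|)^{-1/2} \sum_{t\in\mathcal{G}} O_A^t\ket{\phi_A}\otimes O_B^t\ket{\phi_B}$, and observe that the local stabilisers $\mathcal{G}_\Phi(A)$ and $\mathcal{G}_\Phi(B)$ are honest subgroups of $\mathcal{G}$ (they inherit closure from $O^{st}=O^sO^t$), and that under the orthogonality hypothesis on the action one has the identification $\mathcal{G}_\Phi = \mathcal{G}_\Phi(A)\cap\mathcal{G}_\Phi(B)$, so the assumption $\mathcal{G}_\Phi(A)\supseteq\mathcal{G}_\Phi(B)$ forces $\mathcal{G}_\Phi=\mathcal{G}_\Phi(B)$. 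This identification is the reason the denominator in the claimed eigenvalue comes out the way it does, and it also ensures $\mathrm{tr}(\rho_A)=1$ at the end.

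Next I would partition $\mathcal{G}$ using a left transversal $T_B'$ of $\mathcal{G}_\Phi(B)$, writing $t=\tau h$ with $\tau\in T_B'$ and $h\in\mathcal{G}_\Phi(B)$, so that $O_B^{\tau h}\ket{\phi_B}=O_B^\tau O_B^h\ket{\phi_B}=O_B^\tau\ket{\phi_B}$. Here the hypothesis enters crucially on the $A$ side as well: for the same $h\in\mathcal{G}_\Phi(B)\subseteq\mathcal{G}_\Phi(A)$ one has $O_A^h\ket{\phi_A}=\ket{\phi_A}$, so the inner sum collapses and yields $\sum_{h}O_A^{\tau h}\ket{\phi_A}=|\mathcal{G}_\Phi(B)|\,O_A^\tau\ket{\phi_A}$. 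Since $\{O_B^\tau\ket{\phi_B}\}_{\tau\in T_B'}$ is orthonormal by the genericity assumption on the action, tracing out $B$ gives
\begin{equation}
\rho_A = \frac{|\mathcal{G}_\Phi(B)|^2}{|\mathcal{G}||\mathcal{G}_\Phi|}\sum_{\tau\in T_B'} O_A^\tau\ket{\phi_A}\bra{\phi_A}O_A^{\tau^{-1}}.
\end{equation}

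Finally I would collapse this remaining sum by grouping $T_B'$ according to left cosets of the larger subgroup $\mathcal{G}_\Phi(A)$: the vectors $O_A^\tau\ket{\phi_A}$ and $O_A^{\tau'}\ket{\phi_A}$ coincide iff $\tau,\tau'$ lie in the same left coset of $\mathcal{G}_\Phi(A)$ and are orthogonal otherwise, and each such coset contains exactly $|\mathcal{G}_\Phi(A)|/|\mathcal{G}_\Phi(B)|$ cosets of $\mathcal{G}_\Phi(B)$. This turns the sum into $|\mathcal{G}_\Phi(A)|/|\mathcal{G}_\Phi(B)|$ times a sum of $|\mathcal{G}|/|\mathcal{G}_\Phi(A)|$ rank-one projectors onto an orthonormal set, giving a flat spectrum whose non-zero eigenvalue is the claimed $|\mathcal{G}_\Phi(A)||\mathcal{G}_\Phi(B)|/(|\mathcal{G}||\mathcal{G}_\Phi|)$, and a total trace of $|\mathcal{G}_\Phi(B)|/|\mathcal{G}_\Phi|=1$ consistent with normalisation. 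The main subtlety in carrying this out is purely combinatorial: keeping left and right cosets straight so that the factorisation $O^{\tau h}=O^\tau O^h$ can be exploited without requiring $\mathcal{G}_\Phi(B)$ to be normal, and counting the transversal $T_B'$ within each $\mathcal{G}_\Phi(A)$-coset correctly; once these are pinned down, the calculation is essentially mechanical.
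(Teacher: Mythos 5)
Your argument is correct, and it reaches the result by a genuinely different route from the paper. The paper starts from the explicit form of $\rho_A$ in Lemma~\ref{lem:rho}, uses $\mathcal{G}_\Phi(B)\subseteq\mathcal{G}_\Phi(A)$ to collapse the inner sum over $v$ into a prefactor $\abs{\mathcal{G}_\Phi(B)}$, and then shows in one line that $\rho_A^2=\bigl(\abs{\mathcal{G}_\Phi(A)}\abs{\mathcal{G}_\Phi(B)}/(\abs{\mathcal{G}}\abs{\mathcal{G}_\Phi})\bigr)\rho_A$, so that $\rho_A$ is proportional to a projector and the spectrum is read off; flatness is obtained without ever exhibiting eigenvectors. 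You instead build an explicit Schmidt-like decomposition of $\ket{\Psi}$ over a transversal of $\mathcal{G}_\Phi(B)$ and then merge the surviving rank-one terms along left cosets of $\mathcal{G}_\Phi(A)$; this is longer but yields strictly more information: the rank $\abs{\mathcal{G}}/\abs{\mathcal{G}_\Phi(A)}$, the eigenvectors $O_A^\tau\ket{\phi_A}$, and a built-in normalisation check via the identity $\mathcal{G}_\Phi=\mathcal{G}_\Phi(A)\cap\mathcal{G}_\Phi(B)$ (which the paper only proves model-by-model, e.g.\ in Lemma~\ref{lem:joint_local_stabilizer} for quantum doubles, though as you note it holds generally under the identical-or-orthogonal hypothesis). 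Both arguments rest on the same tacit assumptions, which you are right to flag: that the truncated operators satisfy $O_X^{st}=O_X^sO_X^t$ and are unitary (so that $(O_A^\tau)^\dagger=O_A^{\tau^{-1}}$), and that the genericity assumption upgrades ``invariant up to phase'' to ``invariant''. The paper uses these just as silently in Lemma~\ref{lem:rho}, so your proof is no less rigorous on this point.
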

\begin{proof}
    From Lemma~\ref{lem:rho} we conclude that
    \begin{equation*}
        \rho_A
        =\frac{\abs{\mathcal{G}_\Phi(B)}}
              {\abs{\mathcal{G}}\mkern2mu
               \abs{\mathcal{G}_\Phi}}
         \sum_{t\in
               \mathcal{G}}
         O_A^t\mkern2mu
         \ketbra{\phi_A}
                {\phi_A}\mkern2mu
         O_A^{t^{-1}},
    \end{equation*}
    and now the observation
    \begin{equation*}
        \rho_A^2
        =\frac{\abs{\mathcal{G}_\Phi(A)}\mkern2mu
               \abs{\mathcal{G}_\Phi(B)}}
              {\abs{\mathcal{G}}\mkern2mu
               \abs{\mathcal{G}_\Phi}}\mkern2mu
         \rho_A
    \end{equation*}
    is enough to prove the claim.
\end{proof}

\section{Quantum double models}
\label{sec:quantumdouble}

Consider a planar graph $\Gamma=(V,E,F)$ with vertices~$V$, edges~$E$ and
faces~$F$, and sizes $\abs{V}=n_s$, $\abs{E}=n$ and $\abs{F}=n_p$. The Hilbert space of the quantum double model is defined
as~$\mathcal{H}\coloneqq\mathcal{H}_1\otimes\dots\otimes\mathcal{H}_{\abs{E}}$ where each
local~$\mathcal{H}_i\simeq\mathbb{C}G$. Vertex projectors (acting on stars) read
\begin{equation*}
    A_s
    =\frac{1}
          {\abs{G}}
     \sum_{g\in G}
     A^g_s
\end{equation*}
where each individual vertex operator $A^g_s$ acts on a vertex (star)~$s\in V$ by
multiplying all edges meeting at~$s$ by~$g\in G$ from the \emph{left}, provided all
edges point towards~$s$. Otherwise an edge is multiplied by~$g^{-1}$ from the
\emph{right}. It is clear that vertex operators on different vertices commute. Similarly, plaquette projectors are defined as
\begin{equation*}
    B_p
    =\delta\biggl(\mkern3mu
                  \prod_{\text{$i$ along $\mathcal{C}_p$}}\mkern-6mu
                  g_i,
                  e\biggr),
\end{equation*}
which select configurations where the ordered product of group elements along an oriented circuit $\mathcal{C}_p$ around $p$ is the unit element of $G$. The Hamiltonian is the sum of these mutually commuting projectors over vertices and plaquettes, namely
\begin{equation*}
    H_{\mathrm{D}(G)}
    =-\sum_{s
            \in
            V}
     A_s-
     \sum_{p
           \in
           F}
     B_p.
\end{equation*}

We identify the direct product group from our general discussion in the previous section as
$\mathcal{G}\coloneqq G^{\times\abs{V}}$ and obtain its natural
action on the Hilbert space by collecting individual vertex
operators into the \emph{joint} vertex operator
\begin{equation*}
    O^t
    \coloneqq
     \prod_{s\in V}
     A^{g_s}_s
\end{equation*}
where $t=(g_1,\dots,g_{\abs{V}})\in\mathcal{G}$. Clearly,
any joint vertex operator has product form: given a
directed edge~$(s,s')$ with value~$\ket{x}$ an element $t=(\dots,g_s,\dots,
g_{s'},\dots)\in\mathcal{G}$ is easily seen to act locally as
\begin{equation}
    \label{eq:local_action}
    O_{(s,
        s')}^t\mkern2mu
    \ket{x}
    =\ket{g_{s'}
          x
          g_s^{-1}},
\end{equation}
thus $O^t=O_1^t\otimes\dots\otimes O_{\abs{E}}^t$.

We can obtain a ground state of the quantum double model by
choosing the reference product state
$\ket{e}\coloneqq\bigotimes_E\ket{e}$ and projecting it on the
common $+1$~eigenspace of all vertex operators
\begin{equation}
    \label{eq:gs_raw}
    \ket{\Omega} =
    \prod_{s\in V}
    A_s\mkern2mu
    \ket{e}
    =\frac{1}
          {\abs{\mathcal{G}}}
     \sum_{t\in\mathcal{G}}
     O^t\mkern2mu
     \ket{e}.
\end{equation}

In order to proceed we need to figure out the actual form of
the global and local stabilisers in the quantum double model
for our particular reference state. As for elements of the global
stabiliser~$\mathcal{G}_e$, it
follows immediately from~\eqref{eq:local_action} that $g_{s'}=g_s$
iff vertices~$s$ and~$s'$ are connected by an edge.
In particular, we have $\mathcal{G}_e\simeq G$ if the graph~$\Gamma$ is connected.
Note that $\mathcal{G}_e$ is \emph{not} normal in~$\mathcal{G}$ generally.
As far as the local stabilisers~$\mathcal{G}_e(X)$ are
concerned, we can equivalently define them combinatorially. Partition
the vertices~$V$ into clusters~$V_i(X)$ based on whether vertices
are connected by an edge in~$E_X$. (That is, two vertices are in the
same cluster iff they can be connected by a path which lies
completely in~$E_X$.) Then $t=(g_1,\dots,
g_{\abs{V}})\in\mathcal{G}_e(X)$ iff its components~$g_s$ are
constant on clusters. If~$E_X$ connects \emph{all} vertices into
a single cluster then $\mathcal{G}_e(X)\simeq G$.
It turns out that local and global stabilisers
are related quite favourably:

\begin{lemma}[Joint local stabiliser]
    \label{lem:joint_local_stabilizer}
    $\mathcal{G}_e(A)\cap\mathcal{G}_e(B)=\mathcal{G}_e$.
\end{lemma}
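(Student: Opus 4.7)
The plan is to prove both inclusions using the combinatorial characterisation of the local stabilisers $\mathcal{G}_e(X)$ given just before the statement, namely that $t=(g_1,\dots,g_{|V|})\in\mathcal{G}_e(X)$ iff the components $g_s$ are constant on the clusters of $V$ induced by the edge subset $E_X$ (two vertices being in the same cluster iff they are joined by a path lying entirely in $E_X$). The global stabiliser $\mathcal{G}_e$ admits the analogous description with $E_X$ replaced by the full edge set $E$.

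For the inclusion $\mathcal{G}_e\subseteq\mathcal{G}_e(A)\cap\mathcal{G}_e(B)$, I would simply observe that the cluster partition induced by $E_X\subseteq E$ is always a refinement of the connected-component partition induced by $E$. Hence any tuple whose components are constant on connected components of $\Gamma$ is automatically constant on clusters induced by $E_A$ and by $E_B$ separately, so it belongs to both local stabilisers.

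For the reverse inclusion $\mathcal{G}_e(A)\cap\mathcal{G}_e(B)\subseteq\mathcal{G}_e$, take $t\in\mathcal{G}_e(A)\cap\mathcal{G}_e(B)$ and let $(s,s')$ be any edge of $\Gamma$. Since the bipartition of the Hilbert space is induced by a partition $E=E_A\sqcup E_B$ of the edges (edges are the qubits), the edge $(s,s')$ lies in either $E_A$ or $E_B$. In either case, $s$ and $s'$ are joined by a single-edge path in the corresponding $E_X$, hence lie in the same $X$-cluster, forcing $g_s=g_{s'}$. Thus $g_s$ is constant along every edge of $\Gamma$, which by the characterisation of $\mathcal{G}_e$ means $t\in\mathcal{G}_e$.

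There is no real obstacle here: the statement is essentially the combinatorial identity that the common refinement of the $E_A$-cluster partition and the $E_B$-cluster partition equals the connected-component partition of $\Gamma=(V,E_A\cup E_B)$. The only thing to be slightly careful about is making explicit the (already implicit) assumption that the bipartition of the Hilbert space corresponds to a bipartition $E=E_A\sqcup E_B$ of the edges, so that every edge of $\Gamma$ is accounted for by either $E_A$ or $E_B$; once this is stated, the argument reduces to the one-line observation above.
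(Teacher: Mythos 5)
Your proof is correct and follows essentially the same route as the paper: both arguments rest on the combinatorial characterisation of the local stabilisers via edge-induced clusters, and both establish the nontrivial inclusion by observing that every edge of $\Gamma$ lies in $E_A$ or $E_B$, forcing $g_s=g_{s'}$ across that edge and hence constancy on connected components. Your edge-by-edge phrasing is a slightly more direct rendering of the paper's inductive construction of an overlapping cluster cover along paths, and your explicit remark that the bipartition must exhaust the edge set is a worthwhile clarification, but the underlying idea is identical.
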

\begin{proof}
    We want to cover the vertices~$V$ with a set~$\{V_i(X)\}$ of
    \emph{overlapping} clusters obtained from any of the subsystems~$A$
    or~$B$. Then if $t=(g_1,\dots,g_{\abs{V}})\in\mathcal{G}_e(A)\cap
    \mathcal{G}_e(B)$ its components~$g_s$ must be constant on each cluster,
    and hence constant on the whole of~$V$ because of the overlaps. Clearly,
    we can obtain a global cover by constructing a local cover for the
    vertices along each (edge) path~$\gamma$.

    So let~$\gamma$ be a path starting at vertex~$s$ and let $V_1(A)$ be the
    enveloping cluster of~$s$ as given by subsystem~$A$. If $\gamma$ never
    leaves this cluster we are done. Otherwise we may assume that $s$ lies at
    the boundary of~$V_1(A)$, hence we will reach a distinct cluster~$V_2(A)$
    by advancing a \emph{single} edge~$(s,s')$ along~$\gamma$. Clearly, this
    edge must be in~$E_B$ and thus there exists a cluster~$V_1(B)$ containing
    both~$s$ and~$s'$. Then by induction $\{V_1(A),V_1(B),V_2(A),\dots,
    V_l(A)\}$ is a local overlapping cover along~$\gamma$.
\end{proof}

So in
order to find a suitable bipartition with
$\mathcal{G}_e(A)=\mathcal{G}_e(B)=\mathcal{G}_e$ we simply need to select a
subset~$E_A$ of edges such that both~$E_A$ and~$E_B=E\setminus E_A$
have the single cluster property.
Indeed, for the square, triangular and Kagome lattice such bipartitions
exist, see Fig.~\ref{part}(a) and Fig.~\ref{fig:Triangular} for examples.
Using Theorem~\ref{thm:general_value} we then obtain

\begin{theorem}
    For any quantum double model on a square, triangular or Kagome
    lattice $\Sigma$, the ground state $\ket{\Omega}$
    has the geometric entanglement
    \begin{equation}
        E_G
        =\alpha\mkern2mu
         \abs{V}
         -\log_2
         \abs{G}.
    \end{equation}
    where $\alpha=\log_2\abs{G}$.
\end{theorem}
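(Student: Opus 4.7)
The plan is to apply Theorem~\ref{thm:general_value} directly, using the reference product state $\ket{\Phi}=\ket{e}\coloneqq\bigotimes_E\ket{e}$. First I would verify that $\ket{\Omega}$ from~\eqref{eq:gs_raw} fits the template~\eqref{eq:general_state}: by~\eqref{eq:local_action}, any two $t,t'\in\mathcal{G}$ act on a directed edge $(s,s')$ by sending $\ket{e}$ to $\ket{g_{s'}g_s^{-1}}$ and $\ket{g'_{s'}(g'_s)^{-1}}$ respectively, and these local states are either equal or orthogonal. Hence the generic orthogonality hypothesis of the formalism is satisfied, and after dividing by the redundancy contributed by the global stabiliser, $\ket{\Omega}$ assumes the normalised form~\eqref{eq:state} with $\mathcal{G}_\Phi=\mathcal{G}_e$.

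Second, I would evaluate $|\mathcal{G}_e|$ using the combinatorial characterisation recalled just before Lemma~\ref{lem:joint_local_stabilizer}: an element $t=(g_1,\dots,g_{|V|})\in\mathcal{G}_e$ is one whose components are constant on each connected component of $\Gamma$. Since the square, triangular, and Kagome lattices are all connected, this forces $\mathcal{G}_e\simeq G$ and in particular $|\mathcal{G}_e|=|G|$.

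Third, I would exhibit for each of the three lattices an edge bipartition $E=E_A\sqcup E_B$ having the single-cluster property in \emph{both} halves, i.e.\ where $E_A$ and $E_B$ each induce a spanning subgraph on~$V$. For the square and triangular cases this is provided by Figs.~\ref{part}(a) and~\ref{fig:Triangular}, and the Kagome lattice admits a completely analogous interlocking partition. The same combinatorial description then yields $\mathcal{G}_e(A)\simeq\mathcal{G}_e(B)\simeq G$, and Lemma~\ref{lem:joint_local_stabilizer} promotes this to the equalities $\mathcal{G}_e(A)=\mathcal{G}_e(B)=\mathcal{G}_e$ required by Theorem~\ref{thm:general_value}.

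Plugging $n=|V|$ and $|\mathcal{G}_\Phi|=|G|$ into Theorem~\ref{thm:general_value} then gives $E_G=|V|\log_2|G|-\log_2|G|=\alpha|V|-\log_2|G|$ with $\alpha=\log_2|G|$, as claimed. The only nontrivial ingredient is the third step: the existence of an edge bipartition whose two halves \emph{simultaneously} span $V$ is a purely combinatorial lattice property, and is where the restriction to the square, triangular, and Kagome families genuinely enters; for graphs that fail to admit such a bipartition only the weaker two-sided bound of the general setting would be available.
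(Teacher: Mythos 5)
Your proposal is correct and follows essentially the same route as the paper: identify $\mathcal{G}_e\simeq G$ for a connected graph, exhibit an edge bipartition in which both $E_A$ and $E_B$ have the single-cluster property (Figs.~\ref{part}(a) and~\ref{fig:Triangular}), and invoke Theorem~\ref{thm:general_value}. The extra details you supply (verifying the orthogonality hypothesis via~\eqref{eq:local_action} and using Lemma~\ref{lem:joint_local_stabilizer} together with $\mathcal{G}_\Phi\subseteq\mathcal{G}_\Phi(X)$ to upgrade the isomorphisms to equalities) are exactly the steps the paper leaves implicit.
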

Note that $\abs{V}=n_s$ (number of stars) is proportional to the volume of the surface.

From~\cite{tcmera} and the proof of Theorem~\ref{thm:tc_blocks}
it becomes clear that we can calculate
the \textsc{GE} of blocks of linear size~$k\ge 2$ directly from the renormalized
graphs~$\Gamma_k$ shown in Fig.~\ref{fig:diag}(e). On a square lattice of
linear size~$2k\mkern2mu\lambda$ there are $N_k=2\lambda^2$ such blocks, and the renormalized
graph~$\Gamma_k$ has $\abs{V_k}=N_k\mkern2mu(k+1)$ vertices.
We merely state the result for the \textsc{GE} of blocks.

\begin{theorem}
    For any quantum double model on a square
    lattice $\Sigma$, the ground state $\ket{\Omega}$
    has the geometric entanglement of blocks of size~$k$
    \begin{equation}
        E_G
        =\alpha\mkern2mu
         \abs{V_k}
         -\log_2
         \abs{G},
    \end{equation}
    where $\alpha=\log_2\abs{G}$.
\end{theorem}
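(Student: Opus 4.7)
The plan is to reduce the claim to an application of Theorem~\ref{thm:general_value} on the renormalized graph~$\Gamma_k$, in exactly the way that Theorem~\ref{thm:tc_blocks} reduces the block calculation of the toric code to the single-spin calculation of Theorem~\ref{thm:tc_spins}. First I would carry out the quantum-double analogue of Lemma~\ref{lem:tc_blocks}: because $\ket{\Omega}$ admits a local preparation circuit built from the star projectors~$A_s$ (as used in \cite{tcmera} and implicit in the proof of Theorem~\ref{thm:tc_blocks}), the inverse of that circuit, restricted to the gates associated with stars whose support lies entirely inside a given block, can be applied independently in each block. The result is a tensor product of fully disentangled spins with the quantum double ground state on the renormalized graph $\Gamma_k$ of Fig.~\ref{fig:diag}(e), which has $|V_k| = N_k(k+1)$ vertices. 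Since every operation is local within a block, when the blocks are regarded as parties the GE of $\ket{\Omega}$ equals the GE of this reduced state on~$\Gamma_k$.

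The second step is to evaluate that reduced GE via the general formalism of Section~\ref{sec:general_bounds}. The reduced state has exactly the form~\eqref{eq:state} with direct-product group $\mathcal{G}=G^{\times |V_k|}$, the all-$e$ product reference state on the edges of~$\Gamma_k$, and global stabiliser $\mathcal{G}_e\simeq G$ (since $\Gamma_k$ remains connected on the torus). The upper bound part of Theorem~\ref{thm:general_value} already gives $E_G \le |V_k|\log_2|G|-\log_2|G|=\alpha |V_k|-\log_2|G|$. To obtain the matching lower bound I would invoke Theorem~\ref{thm:general_value} itself, for which it suffices to exhibit a bipartition of the blocks into sets $A$ and $B$ such that the induced edge subsets $E_A$ and $E_B$ of $\Gamma_k$ each connect \emph{every} vertex of $V_k$ into a single cluster; by the combinatorial characterisation of the local stabilisers of Section~\ref{sec:quantumdouble} this forces $\mathcal{G}_e(A)=\mathcal{G}_e(B)=\mathcal{G}_e$, and the theorem then delivers the stated value.

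The main obstacle is this last step: constructing, for arbitrary $k\ge 2$, a block-bipartition of $\Gamma_k$ whose two sides are each edge-connected spanning subgraphs of~$\Gamma_k$. This is the direct analogue of the bipartition used in Fig.~\ref{part}(b) for single spins, but the renormalized geometry — whose stars carry either $4$ or $8$ legs — requires an explicit combinatorial check. I expect an alternating stripe / zig-zag pattern, chosen so that the block partition of Fig.~\ref{fig:diag}(c) is respected at the coarse-grained scale, to work for every~$k$ on the torus; once this check is in place the proof is complete, since Theorem~\ref{thm:general_value} then supplies the value $E_G = \alpha\,|V_k| - \log_2|G|$ without further calculation.
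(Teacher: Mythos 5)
Your proposal matches the paper's route essentially exactly: the paper itself gives no detailed argument here, saying only that the result follows ``from \cite{tcmera} and the proof of Theorem~\ref{thm:tc_blocks}'' by passing to the renormalized graph $\Gamma_k$ and then applying the general bound formalism (Theorem~\ref{thm:general_value}) with a block bipartition whose two edge sets each connect all of $V_k$ into a single cluster. The one step you flag as an obstacle --- the explicit combinatorial check that such a bipartition of $\Gamma_k$ exists for every $k$ --- is also left implicit in the paper, so your reconstruction is as complete as the original.
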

Clearly, $\abs{V_k}$ is proportional to \emph{both} the renormalized
volume of the surface \emph{and} the boundary area of a single block
(with a natural value of~$4k$). This is exactly the generalisation of~\eqref{getcL} to the non-Abelian case. As a final remark, we notice that the topological contribution to the geometric entanglement for quantum double models is given by $E_\gamma=\log_2\abs{G}$. This again coincides with the corresponding value of the topological entanglement entropy.

\section{$E_{\gamma}$ away from fixed RG points}
\label{sec:beyondRG}

So far we just discussed the topological contribution to the GE for specific ground states of models that turn out to be RG fixed points, and therefore representative of their respective topological phases. But, what if we are away from the RG fixed point? Is $E_\gamma$ robust under a perturbation? In this section we address briefly this question, and arrive to the conclusion that, indeed, $E_\gamma$ is a robust property of the topological phase.

There are several ways of checking the robustness of $E_{\gamma}$ under perturbations. One possibility is to perform a numerical analysis. { In this sense, it should be possible to do large-scale calculations using tensor network methods. While we shall not carry out these calculations explicitly in this paper, we will explain different strategies based on tensor network algorithms in the Appendix. Nevertheless, we provide a small-size exact calculation at the end of this section.} Another option is a perturbation theory analysis. Yet, a more intuitive alternative is the following argumentation based on RG fixed points: all the models considered here are RG fixed points and hence representatives of their respective topological phases. The long-distance properties of any state in one of these phases do not change under local RG transformations and,  thus,  are equivalent to those of the fixed point. This, in particular, is true for the long-range pattern of entanglement, and hence for $E_{\gamma}$. Thus, any non-relevant and short-range perturbation driving the Hamiltonian away from the fixed point will produce ground states with the same $E_{\gamma}$. Nevertheless, we expect a change in the short-range pattern of entanglement, and hence in the { bulk} term corresponding to a boundary law. For short-range perturbations the change involves modifications of the pre-factor of the boundary law as well as the possible appearance of sub-leading $O(L^{-\nu'})$ corrections. Hence, Eq.(\ref{toge}) applies when away from the fixed point.

To double-check the above claim, we now perform a simple perturbation theory
analysis of the robustness of the GE of the toric code model under
external short-range perturbations such as magnetic fields. This
analysis provides upper and lower bounds for the GE, and complements
the above argumentation on the robustness of $E_{\gamma}$ based on RG.

Let us then add a perturbation to the toric code Hamiltonian on the square lattice, and see how the ground state changes. For simplicity, we consider the case of an infinite plane, where the ground state $\ket{0,0}$ is non-degenerate, and hence we can use non-degenerate perturbation theory. The perturbed Hamiltonian will be

\beq
H^{\lambda} = H_{{\rm TC}} + \lambda V ,
\eeq
where $H_{{\rm TC}}$ is the toric code Hamiltonian, $V$ is the perturbation, and $\lambda \ll 1$. Non-degenerate perturbation theory says that the new ground state can be approximated as

\beq
\ket{0,0}^{\lambda} \approx \ket{0,0} + \lambda \sum_{\phi,c} \frac{\bra{\phi,c,0,0}V\ket{0,0}}{E_{0,0} - E_{\phi,c}} \ket{\phi,c,0,0} \,
\eeq
where $E_{0,0}$ is the ground state energy and $E_{\phi,c}$ is the energy of the excited state $\ket{\phi,c,0,0}$.

We now consider the case in which the perturbation is an homogeneous magnetic field, e.g. in the $x$ direction,

\beq
V = \sum_{j = 1}^n \sigma_{x}^{[j]}
\eeq
(the case of $z$ and $y$ directions can be considered similarly). It is easy to check that in this case, the normalized perturbed ground state becomes

\beq
\ket{0,0}^{\lambda} \approx C\left( \ket{0,0}  - \frac{\lambda}{\Delta} \sum_{j = 1}^n \sigma_{x}^{[j]} \ket{0,0} \right) \ ,
\eeq
with $\Delta$ the energy gap to create a pair of flux and anti-flux quasiparticles, and $C = (1+n \lambda^2/\Delta^2)^{-1/2}$ a normalisation constant.

Our aim now is to estimate the maximum overlap of the previous state with a product state of blocks of boundary size $L$. This can be done as follows: first, and as in the unperturbed case, we apply CNOT operations locally inside of the blocks so that qubits are disentangled in the unperturbed ground state. By doing this, we can focus on the entanglement of the state

\begin{widetext}

\beq
\ket{0,0}^{\lambda}_{{\rm disentangled}} \approx C\left( \ket{\widetilde{0,0}}  - \frac{\lambda}{\Delta} \sum_{j = 1}^n \sigma_{x}^{[j]} \ket{\widetilde{0,0}} \right) \otimes \ket{e_1} \otimes \cdots \otimes \ket{e_p} \ ,
\eeq
where $\ket{e_k}$ is the quantum state for the $k$-th disentangled qubit. The above equation is indeed equivalent to

\beq
\ket{0,0}^{\lambda}_{{\rm disentangled}} \approx C\left( \ket{\widetilde{0,0}} \otimes \ket{e_1} \otimes \cdots \otimes \ket{e_p}
 - \frac{\lambda}{\Delta} \sum_{j = 1}^{n_b} S_{x}^{[j]} \ket{\widetilde{0,0}} \otimes \ket{e_1} \otimes \cdots \otimes \ket{e_p} - \frac{\lambda}{\Delta}  \ket{\widetilde{0,0}}  \ket{\omega_{1,\dots, p}}\right)\ ,
\eeq
\end{widetext}
where $S_{x}^{[j]}$ is the total spin in the $x$ direction for the $L$ spins in the boundary of block $j$, and
\beq
\ket{\omega_{1,\dots, p}} = \sum_{j=1}^p  \sigma_{x}^{[j]}  \ket{e_1} \otimes \cdots \otimes \ket{e_p} \ .
\eeq
Now we find upper and lower bounds to the maximum overlap of this state with a product state of the blocks. A lower bound can be easily obtained by the product state $\ket{0}^{\otimes (n-p)} \otimes  \ket{e_1} \otimes \cdots \otimes \ket{e_p}$. Noticing that $\ket{e_k}$ is either $\ket{0}$ or $\ket{+}$ \cite{tcmera}, and that the $\ket{+}$ contributions come only from qubits close to the boundary of the block, we have that

\beq
C|\widetilde{\mathcal{G}}|^{-1/2} \left(1 -  \frac{\omega n_b L \lambda}{\Delta} \right) \le \Lambda_{{\rm max}}^{{\rm [blocks]}} \  ,
\eeq
for some positive $\omega = O(1)$ constant. The following upper bound can also be found easily:
\beq
 \Lambda_{{\rm max}}^{{\rm [blocks]}} \le C |\widetilde{\mathcal{G}}|^{-1/2} \left( 1 + \frac{n_b L \lambda}{\Delta} + \frac{\lambda}{\Delta}\right) \ .
 \eeq
Using the above bounds, one can check that for the GE we obtain, in the limit $\lambda \ll \Delta$ and $L \gg 1$,
 \beq
\left(\frac{1}{4} + \frac{2 \omega \lambda}{\Delta}\right) n_b L- 1 \ge E_{G}^{\lambda} \ge \left(\frac{1}{4} - \frac{2 \lambda}{\Delta}\right) n_b L - 1 \ .
 \eeq
The above equation is compatible with a leading change in the GE in
the pre-factor of the boundary law. Also, the fact that both bounds
leave the topological component $E_{\gamma} = 1$ untouched seems to
indicate that this is actually robust under the perturbation.
Moreover, implementing finite-$L$ corrections to these bounds
provides $O(L^{-\nu'})$ corrections, which is consistent with our previous claims.

{ In order to further check the above arguments, we have computed numerically the GE for the toric code with $n$ spins on the square lattice, exactly, for sizes up to $n=16$, and where we added a perturbation that amounts to introducing a string tension in the Hamiltonian. More specifically, we perturbed the system by considering the PEPS tensors of the Toric Code ground state $\ket{0,0}$ \cite{TCPEPS} and modifying the non-zero components that correspond to having ``spin up'' in all sites. That is, the components $A_{\alpha \beta \gamma \delta}^i$ of the perturbed-PEPS tensors are given by

\begin{eqnarray}
A_{1111}^\uparrow = 1+g; &~~~~& A_{2211}^\downarrow = 1\nonumber \\
A_{2222}^\uparrow = 1+g; &~~~~& A_{1122}^\downarrow = 1 \nonumber \\
\end{eqnarray}
where the upper tensor index is the physical index and the rest are the bond indices (in the PEPS there is also another tensor like this but rotated 90 degrees, so that we have an $ABAB...$ periodicity with a 2-site unit cell). For $g=0$ one recovers the unperturbed toric code, whereas for $g=\infty$ one has a polarised state in the z-direction. This is the same kind of perturbation that was considered previously in Ref.\cite{chuang}, which adds a string tension with a somehow similar effect to adding a magnetic field in the z-direction to the Hamiltonian.

For such a perturbed topological state we did an exact calculation of the closest product state, for 1-site blocks (single spins) and also for 4-site blocks, up to $n=16$ (i.e. up to 16 1-site blocks, and up to 4 4-site blocks). From this we estimated $E_{\gamma}$ by considering the GE as a function of the number of blocks, fitting it to a straight line, and extrapolating the number of blocks to zero. Calculations for larger systems required a significant amount of computational resources, and therefore could not be implemented with this approach (for these one would need e.g. the tensor network techniques explained in the Appendix). A summary of our results can be found in Fig.~\ref{TopoGECorrect}. Here we see that in the case of single-site blocks, the topological GE is not stable under the perturbation: it drops very quickly to zero without any sign of phase transition instead of staying close to $E_{\gamma} = 1$  for a while. However, the 4-site block calculation shows that $E_{\gamma}$ also drops to zero, but quite slower than in the case of single-site blocks. This is a clear indication that in the of 4-site blocks the topological GE is more robust under perturbation. We expect, thus, that as the size of the blocks becomes larger, the topological contribution to the GE becomes more robust. At this point we are constrained by the sizes in the exact calculations, but we take these results as a first-principle indication that the topological contribution to the GE tends to be robust under perturbation for sufficiently large block sizes, which is in fact compatible with the RG and perturbation theory arguments above.}

\begin{figure}
 \includegraphics[width=9cm]{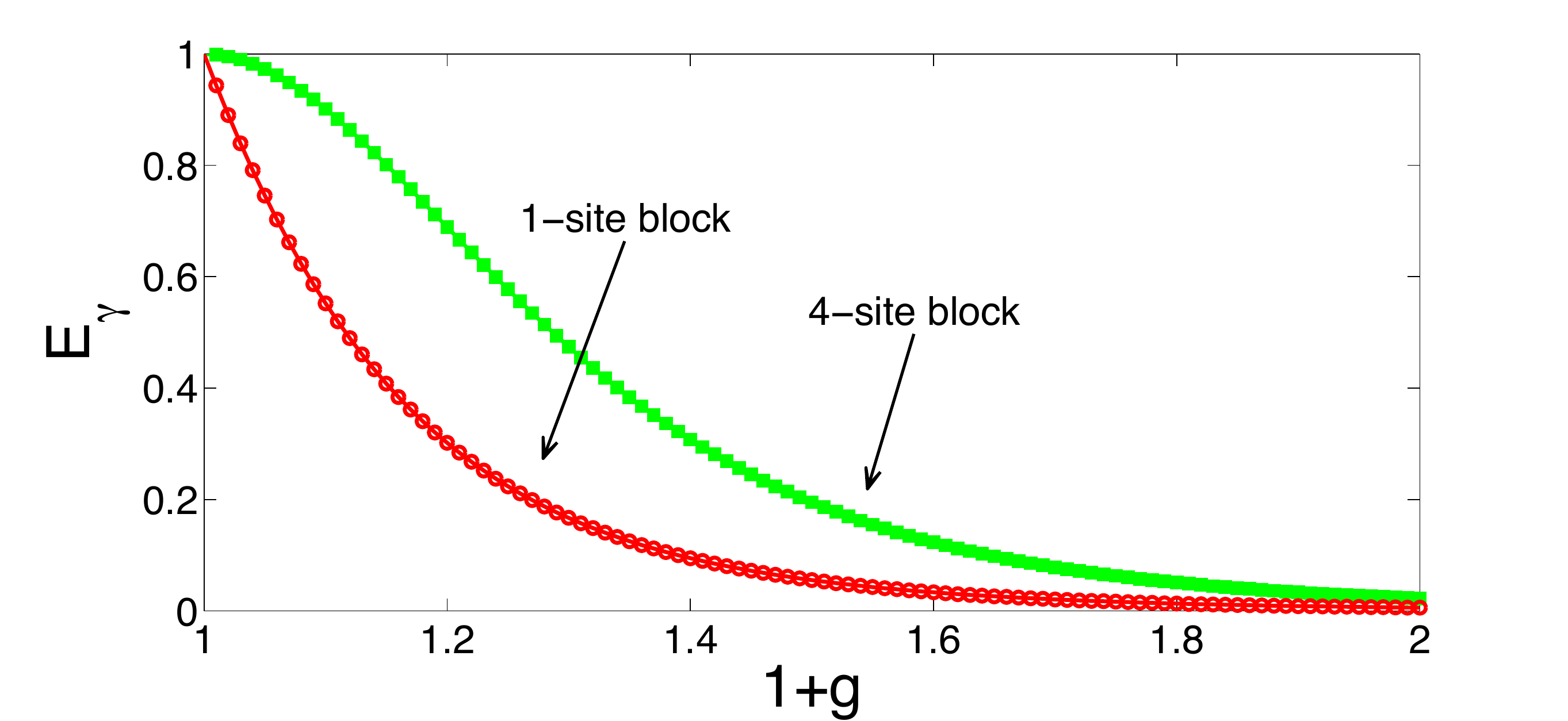}
  \caption{\label{TopoGECorrect} (color online) Estimated topological contribution to GE for the $\ket{0,0}$ ground state of the Toric Code on a square lattice, for systems up to $16$ sites, for 1-site (red) and 4-site (green) blocks.}
\end{figure}

\section{Conclusions}
\label{sec:conclude}

In this paper, we have studied the GE in
various topologically-ordered states that correspond to fixed-points of RG. These are the toric code, double semion, color code, and quantum double models, in a variety of lattices. Generically, we found that the GE is typically composed of a boundary law term times the number of blocks
plus a topological term. This is remarkable, since it is a signature of topological order in multipartite entanglement properties of the state, rather than bipartite. The topological term for the color code is
twice that for the toric code, which is consistent with the recent
result that the former model is equivalent to two copies of the
latter~\cite{Bombin2}. Away from RG fixed-points, we argued that the same type of behaviour holds
up to a possible subleading term, and thus the topological contribution is a robust property of topological phases of matter. The numerical evaluation of $E_\gamma$ for non-analytic cases is possible. This could be done e.g. in the context of Tensor Network methods (such as PEPS algorithms), as is usually done for the topological entropy \cite{numeric}. We leave { the specific implementation of these methods} for future work.

In all the cases considered here, the topological GE turns out to be equivalent to the topological entropy. It would be good to prove, in general, whether this is always true or not, in order to understand if the GE can provide more information about topological order than the one that is already in the topological entropy. However, even if this were not the case, the GE would still be a useful tool to extract the quantum dimensions of the anyon model associated to a given topologically ordered state. For scenarios where reduced density matrices are hard to evaluate numerically, one may thus believe that looking at the GE may be more efficient. { Moreover, the GE may also be a useful tool in order to extract \emph{minimally entangled states} within the ground subspace, and hence the complete topological characterisation of the system \footnote{Work in progress}.}

Finally, we believe that other models could possibly be analysed with the techniques that we used in this paper. Specifically,
similar results should also apply to the A-phase of Kitaev's honeycomb model \cite{honey} for
which the toric code is (in some limiting cases) an effective model,
as well as spin-liquid states with an emergent $\mathbb{Z}_2$ gauge
symmetry~\cite{z2}. The possibility of studying the ground states of
string-net models of Levin and Wen~\cite{sn} in general, as well as
topological quantum field theories \cite{tqft}, is also left for future investigation.

 \acknowledgements We thank M. Aguado, W.-M. Son, H.-H. Tu
 and G. Vidal for illuminating discussions and insightful
comments. Financial support from UQ, ARC, MPQ, EU, JGU, and NSF are
acknowledged. Part of this work was done at the Pedro Pasqual
Benasque Center for Science (CCBPP). This research was supported in
part by Perimeter Institute for Theoretical Physics. Research at Perimeter Institute
is supported by the Government of Canada through Industry Canada and
by the Province of Ontario through the Ministry of Research and Innovation. 
T.-C.W. acknowledges support by the National Science Foundation under Grant Nos. PHY 1314748 and PHY 1333903.

\appendix

\section{Relative entropy of entanglement}
\label{sec:REE} We shall see in the following that for most of the
ground states discussed in this paper, they can also be
characterised by the so-called relative entropy of entanglement and
in fact their value is identical to that of GE. The relative entropy
$S(\rho||\sigma)$ between two states $\rho$ and $\sigma$ is defined
via
\begin{equation}
S(\rho||\sigma)\equiv {\rm
Tr}\left(\rho\log_2\rho-\rho\log_2{\sigma}\right),
\end{equation}
which is evidently not symmetric under exchange of $\rho$ and
$\sigma$, and is non-negative, i.e., $S(\rho||\sigma)\ge 0$.  The
relative entropy of entanglement (RE) for a mixed state $\rho$ is
defined to be the minimal relative entropy of $\rho$ over the set of
separable mixed
states~\cite{VedralPlenioRippinKnight97,VedralPlenio98}:
\begin{equation}
\label{eqn:ER} E_R(\rho)\equiv \min_{\sigma\in {\cal
D}}S(\rho||\sigma)=\min_{\sigma\in {\cal D}}{\rm
Tr}\left(\rho\log_2\rho-\rho\log_2\sigma\right),
\end{equation}
where ${\cal D}$ denotes the set of all separable states. With any
separable state $\sigma$, we can obtain an immediate upper bound on
$E_R(\rho)\le {\rm Tr}(\rho\log_2\rho-\rho\log_2\sigma)$. For pure
state $\rho=|\Psi\rangle\langle\Psi|$, such as the ground state of
the toric code and the color code, we have further that
\begin{equation}
E_R(|\Psi\rangle\langle\Psi|)\le - \langle \Psi
|\log_2\sigma|\Psi\rangle.
\end{equation}

In general, the task of finding the RE for arbitrary states $\rho$
involves a minimisation over all separable states, and this renders
the computation of the RE very difficult. It has been shown that for
pure state $|\Psi\rangle$, the its GE lower bounds its REE, i.e.,
$E_R(\Psi)\ge E_G(\Psi)$~\cite{WeiErricsonGoldbartMunro}. Using
the technique in Ref.~\cite{WeiErricsonGoldbartMunro}, we show that the GE is identical to the REE for the ground states considered here.

Let us illustrate this by the $|0,0\rangle$ ground state of the
toric code on the square lattice. Since $|0,0\rangle=\sum_{g\in \mathcal{G}_s}
g|0...0\rangle/\sqrt{|\mathcal{G}_s|}$, where $\mathcal{G}_s$ is the group generated by star
operators, each component $|g\rangle\equiv g|0...0\rangle$ is a
product state. We can thus construct a mixed separable state
\begin{equation}
\sigma=\frac{1}{|\mathcal{G}_s|}\sum_{g\in \mathcal{G}_s}|g\rangle\langle g|,
\end{equation}
and hence obtain an upper bound on the REE,
\begin{eqnarray}
E_R(|0,0\rangle\langle0,0|) &\le& \log_2 |\mathcal{G}_s|\langle 0,0|\sum_{g\in \mathcal{G}_s}
\big(|g\rangle\langle g|\big)|0,0\rangle \nonumber \\
&=&\log_2|\mathcal{G}_s|.
\end{eqnarray}
This upper bound turns out to be identical to the value of the GE, a
lower bound on REE. Therefore, we have that
$E_R(|0,0\rangle\langle0,0|)=E_G(|0,0\rangle\langle0,0|)=n_s-1$.
Such an equality of GE and REE can also be understood from the group
symmetry~\cite{Hayashi}. We can straightforwardly apply similar
arguments to ground states of the double semion, color code and the quantum double
models.

{
\section{Some numerical tensor network algorithms to compute $E_{\gamma}$}
\label{sec:TNalg} In what follows we sketch, at a conceptual level, several methods based on tensor networks to numerically compute the topological GE. For concreteness we focus on the case of having a representation of the relevant topological quantum state given by a $2d$ Projected Entangled Pair State (or PEPS, see e.g. Ref.\cite{TCPEPS}). We assume that the system is finite, translationally-invariant, and is defined on the surface of a torus. We also assume that the size of the system is sufficiently large, so that relatively large blocks can be considered.

Within this framework, the topological GE can be extracted by e.g. doing a finite-size scaling of the GE of blocks, with respect to different block sizes. This GE of blocks needs to be computed by optimising the fidelity of the $2d$ PEPS on the torus with a product state of the blocks. The key point is how to implement this optimisation efficiently, considering that the number of sites within each block can be quite large, and that the total size of the system is also large.

Here two different strategies are proposed. These are based on renormalizing/not renormalizing the tensors within the blocks.

\vspace{10pt}

\underline{\emph{1.- Product-PEPS/MPS.-}} We consider a product state of the blocks such that the quantum state for each block is itself a finite $L \times L$ PEPS, or a (snake) Matrix Product State (MPS) for $L^2$ sites. The maximisation of the fidelity is then carried over the tensors of these finite-PEPS or finite-MPS, which can be implemented efficiently.

\vspace{10pt}

\underline{\emph{2.- Tensor-renormalization.-}} Another approach is to consider again blocks of size $L \times L$, but where we compute a renormalized PEPS tensor for the block. This could be achieved by using different tensor-renormalization strategies, such as e.g. SRG \cite{srg} and HOSRG \cite{hosrg} but adapted to a finite $2d$ system with open boundary conditions. As a result of this tensor-renormalization, the topological quantum state is represented by a ``renormalized'' PEPS where each tensor corresponds to a block. Once this is computed successfully (which may be non-trivial), the necessary optimisation to compute the GE  can be done using a product state directly for the renormalized single sites.

\end{document}